\newcommand{\ubar}[1]{\underaccent{\bar}{#1}}
\DeclareFontFamily{U}{mathx}{}
\DeclareFontShape{U}{mathx}{m}{n}{<-> mathx10}{}
\DeclareSymbolFont{mathx}{U}{mathx}{m}{n}
\DeclareMathAccent{\widecheck}{0}{mathx}{"71}
\newcommand*{\addFileDependency}[1]{
  \typeout{(#1)}
  \@addtofilelist{#1}
  \IfFileExists{#1}{}{\typeout{No file #1.}}
}
\newcommand*{\myexternaldocument}[1]{%
    \externaldocument{#1}%
    \addFileDependency{#1.tex}%
    \addFileDependency{#1.aux}%
}
\numberwithin{equation}{section}
\theoremstyle{plain}
\newtheorem{theorem}{Theorem}[section]
\newtheorem{lemma}[theorem]{Lemma}
\newtheorem{proposition}[theorem]{Proposition}
\newtheorem{remark}[theorem]{Remark}
\newtheorem{corollary}[theorem]{Corollary}
\newtheorem{assumption}[theorem]{Assumption}
\theoremstyle{remark}
\newcommand{\bm}{\mathbf}
\newcommand{\cm}[1]{\mbox{\boldmath$\mathscr{#1}$}}
\begin{document}

\begin{frontmatter}
\title{HIGH-DIMENSIONAL LOW-RANK MATRIX REGRESSION WITH UNKNOWN LATENT STRUCTURES}
\runtitle{HOMOGENEITY PURSUIT IN MATRIX REGRESSION}

\begin{aug}
\author[A]{\fnms{Di}~\snm{Wang}},
\author[B]{\fnms{Xiaoyu}~\snm{Zhang}}\footnote{Di Wang and Xiaoyu Zhang are joint first authors.},
\author[C]{\fnms{Guodong}~\snm{Li}}
\and
\author[D]{\fnms{Wenyang}~\snm{Zhang}}

\address[A]{School of Mathematical Sciences,
Shanghai Jiao Tong University, {\color{blue}di.wang@sjtu.edu.cn}}

\address[B]{School of Mathematical Sciences,
Tongji University, {\color{blue}xzhangck@tongji.edu.cn}}

\address[C]{Department of Statistics and Actuarial Science,
University of Hong Kong, {\color{blue}gdli@hku.hk}}

\address[D]{Faculty of Business Administration, University of Macau, {\color{blue}wenyangzhang@um.edu.mo}}
\end{aug}

\begin{abstract}

We study low-rank matrix regression in settings where matrix-valued predictors and scalar responses are observed across multiple individuals. Rather than assuming a fully homogeneous coefficient matrices across individuals, we accommodate shared low-dimensional structure alongside individual-specific deviations. To this end, we introduce a tensor-structured homogeneity pursuit framework, wherein each coefficient matrix is represented as a product of shared low-rank subspaces and individualized low-rank loadings. We propose a scalable estimation procedure based on scaled gradient descent, and establish non-asymptotic bounds demonstrating that the proposed estimator attains improved convergence rates by leveraging shared information while preserving individual-specific signals. The framework is further extended to incorporate scaled hard thresholding for recovering sparse latent structures, with theoretical guarantees in both linear and generalized linear model settings. Our approach provides a principled middle ground between fully pooled and fully separate analyses, achieving strong theoretical performance, computational tractability, and interpretability in high-dimensional multi-individual matrix regression problems.

\end{abstract}

\begin{keyword}[class=MSC]
\kwd[Primary ]{62J05}
\kwd[; secondary ]{62H12}
\kwd{62H25}
\end{keyword}

\begin{keyword}
\kwd{Dimensional reduction}
\kwd{Homogeneity pursuit}
\kwd{Matrix data}
\kwd{Subgroup analysis}
\kwd{Subspace}
\end{keyword}

\end{frontmatter}

\section{Introduction}

\subsection{Heterogeneity and Homogeneity in Matrix Regression}

Matrix-valued data are ubiquitous in modern scientific applications, including neuroimaging \citep[e.g., functional connectivity matrices;][]{abraham2020deriving}, genomics \citep[e.g., gene co-expression networks;][]{aibar2022singlecell}, recommender systems \citep[user-item interaction matrices;][]{zhang2021tensor}, and quantitative finance \citep[asset return dependency matrices;][]{bollerslev2021volatility}. In many of these domains, data are collected from multiple individuals, experimental units, or observational contexts, giving rise to multi-individual matrix regression problems with matrix-valued predictors and scalar responses. The primary goal is to model the relationship between matrix predictors and scalar outcomes while addressing high dimensionality, complex structure, and limited sample sizes per individual.

A central challenge in such settings is that regression relationships may exhibit both shared structure across individuals and individual-specific deviations. Fully pooled models, which assume a single coefficient matrix for all individuals, risk bias due to ignored heterogeneity. Conversely, fully separate models, which fit independent regressions per individual, often incur high variance due to limited data per unit. These extremes motivate the development of methods that judiciously balance shared information and individual adaptations. This problem, commonly referred to as \textit{homogeneity pursuit}, involves identifying and leveraging latent common structures across individuals while appropriately accommodating individual variation.

In high-dimensional settings, it is typical to assume that the coefficient matrix is low rank, reflecting the influence of a small number of latent factors. Low-rank matrix regression enhances interpretability and computational efficiency, and has been extensively studied in the literature \citep{negahban2011estimation,koltchinskii2011nuclear}. However, in multi-individual settings, low-rank assumptions alone are insufficient. The same low-rank matrix may arise from many diverse factorizations, and existing methods often fail to disentangle shared structures, such as common row or column spaces, from individual heterogeneity. Moreover, conventional low-rank formulations do not explicitly model dependence across individuals, limiting their ability to pool information effectively.

To address these challenges, we propose a framework for homogeneity pursuit, in which shared low-dimensional subspaces (e.g., column and row spaces of the coefficient matrices) are explicitly modeled as common across individuals, while individual-specific deviations are captured via structured low-rank loadings. This leads to a tensor-structured representation of the coefficient matrices, where shared structure is encoded via factor matrices and individual heterogeneity is captured via a low-rank core tensor encoding individualized loadings. Our approach provides a structured and interpretable decomposition that distinguishes between common mechanisms and individual variation, offering a natural middle ground between fully pooled and fully separate analyses.

We develop a computationally scalable estimation procedure based on scaled gradient descent, tailored to the proposed tensor decomposition. The algorithm is designed to handle inherent scaling differences between shared and individual components, and we establish non-asymptotic error bounds demonstrating the estimation improvements when leveraging shared information across individuals. Specifically, convergence rates for the shared subspaces benefit from the larger number of individuals, while individual-specific components retain their distinct signals. To further enhance interpretability and reduce model complexity, we apply hard thresholding to the rows of the shared factor matrices for the recovery of sparse patterns.

The proposed methodology is applicable to both linear models, such as matrix trace regression, and generalized linear models, including matrix logistic regression, making it suitable for a broad range of response types. Overall, our approach provides a flexible, principled, and theoretically grounded framework for multi-individual matrix regression that effectively balances shared and individual information. Below, we summarize our key contributions.
\begin{itemize}
  \item[1.] We introduce a tensor-structured model for multi-individual matrix regression, in which shared low-dimensional subspaces are encoded via factor matrices and individual deviations are captured by a low-rank core tensor. This decomposition enables structured homogeneity pursuit, distinguishing common mechanisms from individual variation.
  \item[2.] We formalize the notion of shared subspaces and introduce a distance metric that is invariant to orthogonal transformations. This metric supports robust estimation and provides a basis for comparing decompositions while accounting for inherent non-identifiability in tensor representations.
  \item[3.] We develop a scaled gradient descent algorithm tailored to the tensor decomposition, with adaptive scaling to balance updates across shared and individual parameters. We establish minimax optimal error bounds demonstrating improved estimation through information pooling.
  \item[4.] We extend the framework to incorporate additional sparsity in shared components to further enhance interpretability and improve efficiency. The proposed approach is applicable to both linear and generalized linear models, accommodating a wide range of scientific applications.
\end{itemize}

\subsection{Related Literature}

Regression with matrix-valued predictors has been extensively studied in both low- and high-dimensional settings. Classical approaches include trace regression models\citep{rodriguez2011unifying, Zhou13}, as well as more general linear or nonlinear models acting on the matrix structure. In high-dimensional regimes where matrix dimensions dominate the sample size, low-rank regularization has emerged as a powerful tool for achieving estimation efficiency and structural interpretability \citep{negahban2011estimation, koltchinskii2011nuclear,yu2020recovery}.

Tensor decomposition methods have also been employed to model multi-way data structures, including those arising in multi-individual or multi-task regression \citep{yu2016learning,lock2018tensor,xu2019spatio}. While tensor-based approaches are natural for representing hierarchical or grouped structures, many existing methods lack rigorous theoretical guarantees under high-dimensional scaling. Regularized estimation methods have been developed to address identifiability issues in low-rank matrix and tensor problems \citep{tu2016low,wang2017unified,han2022optimal}. More recently, \citet{tong2021accelerating} and \citet{tong2022accelerating} proposed scaled gradient descent algorithms for low-rank matrix and tensor problems. In comparison, our work incorporates additional sparsity structures for improved interpretability and estalibshes statistical guarantees in a more complex setting involving both homogeneous and heterogeneous components.

Closer to our setting, recent advances have explored homogeneity structures in heterogeneous data analysis. Homogeneity pursuit was introduced by \citet{ke2015homogeneity} and was extended to panel and longitudinal data models with latent structures \citep{,wang2018homogeneity,li2019spatial,li2019subgroup,lian2021homogeneity,zhang2023nonparametric}. Subgroup analysis has also seen substantial development \citep{ke2016structure,ma2017concave,zhang2019robust}. Our work contributes to this literature by introducing the homogeneity at the level of subspaces, rather than element-wise. We explicitly model shared subspaces and individual-specific loading factors, developing an estimation procedure with provable guarantees. Our theoretical results quantify the benefits of information pooling while preserving individual heterogeneity.

\subsection{Notation}

Throught this paper, we denote vectors by boldface small letters (e.g. $\bm{x}$), matrices by boldface capital letters (e.g. $\bm{X}$), and tensors by boldface Euler letters (e.g. $\cm{X}$). Tensor notations and operations are introduced below and formally defined in Appendix \ref{append:A}. For a tensor $\cm{X}\in\mathbb{R}^{p_1\times\cdots\times p_d}$, $\cm{Y}\in\mathbb{R}^{p_1\times\cdots\times p_{d}}$, and matrices $\bm{Y}_k\in\mathbb{R}^{q_k\times p_k}$ for $k=1,\dots,d$, we denote the mode-$k$ matricization of $\cm{X}$ as $\cm{X}_{(k)}$, the generalized inner product of $\cm{X}$ and $\cm{Y}$ as $\langle\cm{X},\cm{Y}\rangle$, and the mode-$k$ product of $\cm{X}$ and $\bm{Y}$ as $\cm{X}\times_k\bm{Y}_k$. The tensor outer product is denoted by $\cm{X}\circ\cm{Y}$.

For a generic matrix $\bm{X}$, we use $\bm{X}^\top$, $\|\bm{X}\|_\text{F}$, $\|\bm{X}\|$, $\text{vec}(\bm{X})$, and $\sigma_j(\bm{X})$ to denote its transpose, Frobenius norm, operator norm, vectorization, and the $j$-th largest singular value, respectively. For symmetric matrices, $\lambda_{\min}(\bm{X})$ and $\lambda_{\max}(\bm{X})$ denote the minimum and maximum eigenvalues. We dnote the general linear group of degree $n$ by $\text{GL}(n)$. A generic positive constant is denoted by $C$. For sequences $x_k$ and $y_k$, we write $x_k\gtrsim y_k$ if there exists a constant $C>0$ such that $x_k\geq Cy_k$ for all $k$, and $x_k\asymp y_k$ if $x_k\gtrsim y_k$ and $y_k\gtrsim x_k$.

\subsection{Outline of the Paper}

The remainder of the paper is organized as follows. Section \ref{sec:model_framework} introduces the homogeneity pursuit model and its tensor decomposition, including a discussion of identifiability. Section \ref{sec:estimation_methodology} presents the scaled gradient descent algorithm and its computational properties. Section \ref{sec:sparsity} extends the framework to incorporate scaled hard thresholding for sparse pattern recovery. Section \ref{sec:statistical_models} details applications to matrix linear and generalized linear models. Section \ref{sec:numerical} reports simulation results and presents a real data application. Section \ref{sec:conclusion} concludes with a discussion. Technical proofs and implementation details are provided in Appendices.

\section{Model Framework}\label{sec:model_framework}

\subsection{Homogeneity Pursuit via Subspace Sharing}

We consider a dataset with $n$ individuals indexed by $i=1,\dots,n$. For each individual $i$, we observe $m_i$ independent samples, each consisting of a matrix-valued covariate $\bm{X}_{ij}\in\mathbb{R}^{p_1\times p_2}$ and a scalar response $Y_{ij}$. We aim to model the relationship between the matrix predictors and scalar responses through a regression framework that accommodates high dimensionality, heterogeneity across individuals, and structured dependencies. We adopt a generalized linear model formulation, where the conditional mean of the response depends linearly on the matrix covariate via an individual-specific coefficient matrix $\bm{B}_i\in\mathbb{R}^{p_1\times p_2}$:
\begin{equation}
  f_{Y_{ij}}(y_{ij}|\bm{X}_{ij}) \propto \exp\left\{\frac{y_{ij}\langle\bm{X}_{ij},\bm{B}_i\rangle - g(\langle\bm{X}_{ij},\bm{B}_i\rangle)}{c(\sigma)}\right\},
\end{equation}
where $c(\sigma)$ is a scale nuisance parameter and $g(\cdot)$ is a differentiable link function.

The negative log-likelihood loss is given by
\begin{equation}
    \mathcal{L}(\bm{B}_i;\bm{X}_{ij},Y_{ij}) = g(\langle\bm{X}_{ij},\bm{B}_i\rangle)-Y_{ij}\langle\bm{X}_{ij},\bm{B}_i\rangle.
\end{equation}
This framework encompasses a variety of standard regression models. For exmaple, when $g(t)=t^2/2$, the model reduces to matrix trace regression \citep{negahban2011estimation}; when $g(t)=\log(1+\exp(t))$, it corresponds to matrix logistic regression \citep{Zhou13}. 

A common assumption in high-dimensional matrix regression is that $\bm{B}_i$ admits a low-rank decomposition $\bm{B}_i=\bm{C}_i\bm{R}_i^\top$, where $\bm{C}_i\in\mathbb{R}^{p_1\times r}$ and $\bm{R}_i\in\mathbb{R}^{p_2\times r}$ are matrices of rank $r$, typically much smaller than $p_1$ and $p_2$. This factorization reflects the influence of latent factors and facilitates dimension reduction. The inner product can be equivalently expressed as $\langle\bm{X}_{ij},\bm{C}_i\bm{R}_i^\top\rangle = \text{tr}(\bm{C}_i^\top\bm{X}_{ij}\bm{R}_i)$, which reparametrizes the regression in terms of transformations of the covariates induced by the low-rank factors.

However, such a decomposition is not unique: for any orthogonal matrix $\bm{Q}\in\mathbb{R}^{r\times r}$, the pair $(\bm{C}_i,\bm{R}_i)$ is indistinguishable from $(\bm{C}_i\bm{Q},\bm{R}_i\bm{Q}^{-\top})$. This non-identifiability does not extend to the column spaces of $\bm{C}_i$ and $\bm{R}_i$, which are uniquely determined up to orthogonal transformations. These subspaces encode the dominant directions of covariate variation relevant to the response, and it is at this level that meaningful homogeneity may emerge.

Motivated by this, we focus on the subspaces spanned by $\bm{C}_i$ and $\bm{R}_i$, assuming that the column spaces of $\{\bm{C}_i\}_{i=1}^n$ share a common $K_1$-dimensional subspace, and similarly for $\{\bm{R}_i\}_{i=1}^n$ with dimension $K_2$. Formally, letting $\bar{\bm{C}}=[\bm{C}_1,\dots,\bm{C}_n]\in\mathbb{R}^{p_1\times nr}$ and $\bar{\bm{R}}=[\bm{R}_1,\dots,\bm{R}_n]\in\mathbb{R}^{p_2\times nr}$, we assume the existence of matrices $\bm{C}\in\mathbb{R}^{p_1\times K_1}$ and $\bm{R}\in\mathbb{R}^{p_2\times K_2}$ such that $\bar{\bm{C}}=\bm{C}\bm{L}_1$ and $\bar{\bm{R}}=\bm{R}\bm{L}_2$, where $\bm{L}_1=[\bm{L}_{11},\dots,\bm{L}_{1n}]$ and $\bm{L}_2=[\bm{L}_{21},\dots,\bm{L}_{2n}]$ are individual-specific loading matrices. Consequently, for each individual $i$, $\bm{C}_i=\bm{C}\bm{L}_{1i}$ and $\bm{R}_i=\bm{R}\bm{L}_{2i}$. The integers $K_1$ and $K_2$ are assumed to be small relative to $p_1$ and $p_2$, reflecting low-dimensional nature of the shared structure.

Under this formulation, the inner product $\langle\bm{X}_{ij},\bm{B}_i\rangle$ becomes
\begin{equation}
    \langle\bm{X}_{ij},\bm{B}_i\rangle = \text{tr}(\bm{L}_{1i}^\top\bm{C}^\top\bm{X}_{ij}\bm{R}\bm{L}_{2j}),
\end{equation}
where $\bm{C}^\top\bm{X}_{ij}\bm{R}$ is a bilinear transformation shared across individuals, modulated by individual-specific matrices $\bm{L}_{1i}$ and $\bm{L}_{2i}$. When the loading matrices are identical across individuals, i.e., $\bm{L}_{1i}=\bm{L}_1$ and $\bm{L}_{2i}=\bm{L}_2$ for all $i$, the model reduces to a fully homogeneous low-rank regression. More generally, our formulation allows for homogeneity in the underlying subspaces, while accommodating individual-level heterogeneity through the low-rank loadings.

This approach offers several advantages. First, by focusing on the shared subspaces rather than the full low-rank factors, we reduce the number of parameters required to describe the heterogeneity across individuals, leading to improved estimation efficiency, particularly in settings where the number of observations per individual is small. Second, the decomposition is interpretable, as the shared subspaces capture the common structure across individuals, while the loading matrices encode individual deviations. Third, the non-uniqueness issues associated with low-rank matrix decompositions are mitigated by working directly with the subspaces, which are uniquely defined up to orthogonal transformations.

\subsection{Tensor Parameterization}

To formalize the structured relationship between individuals and the shared and individualized components, we adopt a tensor-based parameterization. Specifically, we organize the individual coefficient matrices $\bm{B}_1,\dots,\bm{B}_n\in\mathbb{R}^{p_1\times p_2}$ into a third-order tensor $\cm{B}\in\mathbb{R}^{p_1\times p_2\times n}$, where the $i$-th frontal slice is given by $\cm{B}_{(3)}(:,:,i)=\bm{B}_i$. 

The homogeneity in the underlying subspaces is captured via a Tucker decomposition of $\cm{B}$ \citep{tucker1966some,delathauwer2000multilinear}, decomposing it into a set of shared factor matrices and an individualized core tensor. Formally, we express the decomposition as
\begin{equation}\label{eq:tensor_decomp}
    \cm{B} = \cm{G} \times_1 \bm{C} \times_2 \bm{R}\quad\text{or equivalently}\quad\bm{B}_i=\bm{C}\bm{G}_i\bm{R}^\top,\quad\text{with }\bm{G}_i=\bm{L}_{1i}\bm{L}_{2i}^\top,
\end{equation}
where $\bm{C}\in\mathbb{R}^{p_1\times K_1}$ and $\bm{R}\in\mathbb{R}^{p_2\times K_2}$ are the shared factor matrices, $\cm{G}\in\mathbb{R}^{K_1\times K_2\times n}$ is the core tensor, with frontal slices $\bm{G}_i=\bm{L}_{1i}\bm{L}_{2i}^\top$, and $\bm{L}_{1i}\in\mathbb{R}^{K_1\times r}$ and $\bm{L}_{2i}\in\mathbb{R}^{K_2\times r}$ are individualized low-rank loading matrices, encoding individual deviations in the row and column projections. See Figure \ref{fig:tensor_parameterization} for an illustration.

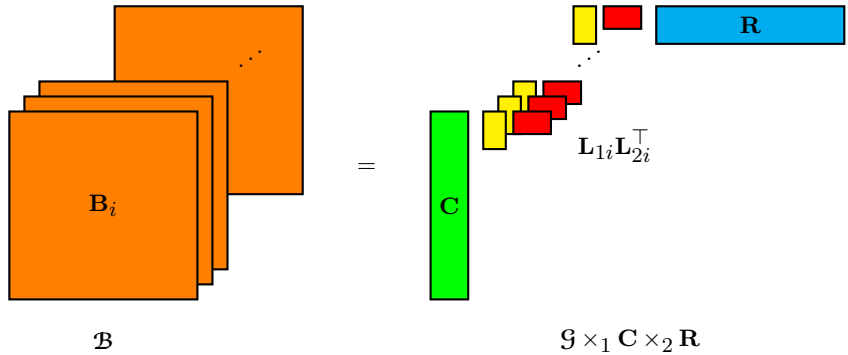
\begin{figure}[!htp]
  \begin{center}
    \begin{tikzpicture}


\node (rect) at (1.4,1.4) [draw, thick, fill=orange, fill opacity=1,
minimum width=2.5cm,minimum height=2.5cm] {};
\node (rect) at (0.4,0.4) [draw, thick, fill=orange, fill opacity=1,
minimum width=2.5cm,minimum height=2.5cm] {};
\node (rect) at (0.2,0.2) [draw, thick, fill=orange, fill opacity=1,
minimum width=2.5cm,minimum height=2.5cm] {};
\node (rect) at (0,0) [draw, thick, fill=orange, fill opacity=1,
minimum width=2.5cm,minimum height=2.5cm] {$\bm{B}_i$};

\node at (1.95,2.05) {$\iddots$};

\node at (3.5,0.5) {$=$};

\node at (0,-1.8) {$\cm{B}$};

\node (rect) at (4.6,0) [draw, thick, fill=green, fill opacity=1,
minimum width=0.5cm,minimum height=2.5cm] {$\bm{C}$};

\node (rect) at (6.4,2.4) [draw, thick, fill=yellow, fill opacity=1,
minimum width=0.3cm,minimum height=0.5cm] {};

\node (rect) at (6.9,2.5) [draw, thick, fill=red, fill opacity=1,
minimum width=0.5cm,minimum height=0.3cm] {};

\node (rect) at (5.6,1.4) [draw, thick, fill=yellow, fill opacity=1,
minimum width=0.3cm,minimum height=0.5cm] {};

\node (rect) at (6.1,1.5) [draw, thick, fill=red, fill opacity=1,
minimum width=0.5cm,minimum height=0.3cm] {};

\node (rect) at (5.4,1.2) [draw, thick, fill=yellow, fill opacity=1,
minimum width=0.3cm,minimum height=0.5cm] {};

\node (rect) at (5.9,1.3) [draw, thick, fill=red, fill opacity=1,
minimum width=0.5cm,minimum height=0.3cm] {};

\node (rect) at (5.2,1) [draw, thick, fill=yellow, fill opacity=1,
minimum width=0.3cm,minimum height=0.5cm] {};

\node (rect) at (5.7,1.1) [draw, thick, fill=red, fill opacity=1,
minimum width=0.5cm,minimum height=0.3cm] {};

\node at (6.45,2.05) {$\iddots$};

\node at (6.8,0.8) {$\bm{L}_{1i}\bm{L}_{2i}^\top$};

\node (rect) at (8.6,2.4) [draw, thick, fill=cyan, fill opacity=1,
minimum width=2.5cm,minimum height=0.5cm] {$\bm{R}$};

\node at (7,-1.8) {$\cm{G}\times_1\bm{C}\times_2\bm{R}$};

\end{tikzpicture}
    \vspace{-0.2cm}
    \caption{Tensor parameterization and tensor decomposition.}
    \label{fig:tensor_parameterization}
  \end{center}
\end{figure}

\begin{remark}[Rank Properties]
    By the definition of Tucker ranks, we have $\textup{rank}(\cm{B}_{(1)})=K_1$ and $\textup{rank}(\cm{B}_{(2)})=K_2$, where $\cm{B}_{(1)}$ and $\cm{B}_{(2)}$ are the mode-1 and mode-2 matricizations of $\cm{B}$, respectively. Consequently, the mode-3 matricization satisfies $\text{rank}(\cm{B}_{(3)})\leq K_1K_2$. When the number of individuals $n>K_1K_2$, an additional low-rank structure may exist along the third (individual) mode. In such cases, one may consider an extended decomposition of the form
    \begin{equation}\label{eq:tensor_decomp2}
        \cm{B} = \widetilde{\cm{G}}\times_1\bm{C}\times_2\bm{R}\times_3\bm{W},
    \end{equation}
    where $\widetilde{\cm{G}}\in\mathbb{R}^{K_1\times K_2\times K_1K_2}$ and $\bm{W}\in \mathbb{R}^{n\times K_1K_2}$. However, such an extension does not yield additional structural or statistical benefits in our setting, and we therefore focus on the two-mode shared decomposition in \eqref{eq:tensor_decomp} throughout the paper.
\end{remark}

This tensor formulation provides a unified and interpretable framework that explicitly separates:
\begin{itemize}
  \item the \textit{shared low-dimensional subspaces} encoded by $\bm{C}$ and $\bm{R}$,
  \item the \textit{individualized loading structure} encoded by the low-rank core tensor $\cm{G}$, and
  \item the overall \textit{heterogeneity in individual responses} through individual-specific projections.
\end{itemize}
It balances model parsimony with flexibility, enabling joint estimation of common mechanisms and individual deviations.

\subsection{Identification}

A central challenge in tensor-structured or low-rank matrix decomposition is identifiability: the decomposition is inherently non-unique due to rotational and scaling ambiguities. In our setting, the factors $\bm{C}$ and $\bm{R}$ and the core tensor slices $\bm{G}_i=\bm{L}_{1i}\bm{L}_{2i}^\top$ are not uniquely determined without further constraints. We now formalize this non-uniqueness and define an appropriate notion of equivalence and distance for estimation purposes.

The factor matrices $\bm{C}$ and $\bm{R}$ are identifiable only up to orthogonal transformations. Specifically, for any nonsingular $\bm{Q}_1\in\text{GL}(K_1)$ and $\bm{Q}_2\in\text{GL}(K_2)$, the parameter sets
\begin{equation}
  (\bm{C},\bm{R},\{\bm{L}_{1i},\bm{L}_{2i}\}_{i=1}^n)\quad\text{and}\quad(\bm{C}\bm{Q}_1,\bm{R}\bm{Q}_2,\{\bm{Q}_1^{-1}\bm{L}_{1i},\bm{Q}_2^{-1}\bm{L}_{2i}\}_{i=1}^n)
\end{equation}
yields exactly the same tensor $\cm{B}$ and thus lead to identical predictions. Likewise, the individual loading matrices $\bm{L}_{1i}$ and $\bm{L}_{2i}$ are not uniquely identified, since for any invertible $\bm{P}_i\in\text{GL}(r)$, replacing $\bm{L}_{1i}$ with $\bm{L}_{1i}\bm{P}_i$ and $\bm{L}_{2i}$ with $\bm{L}_{2i}\bm{P}_i^{-\top}$ leaves the product $\bm{L}_{1i}\bm{L}_{2i}^\top$ unchanged.

Due to this inherent non-identifiability, we do not impose explicit constraints (e.g., orthogonality or normalization) on the factors in estimation. Instead, we work with \textit{equivalent classes of decompositions} and define a distance metric that is invariant to these ambiguities. This allows us to focus on estimating the underlying structural components while being agnostic to the specific rotation or scaling.

Let $\cm{B}^*$ denote the true coefficient tensor, with corresponding decomposition defined by true parameters $\bm{C}^*\in\mathbb{R}^{p_1\times K_1}$, $\bm{R}^*\in\mathbb{R}^{p_2\times K_2}$, $\bm{L}_{1i}^*\in\mathbb{R}^{K_1\times r}$ and $\bm{L}_{2i}^*\in\mathbb{R}^{K_2\times r}$ for $i=1,\dots,n$. We define the true factor matrices as the leading left singular vectors of $\cm{B}_{(1)}^*$ and $\cm{B}_{(2)}^*$, respectively, with $\bm{C}^{*\top}\bm{C}^*=\bm{I}_{K_1}$ and $\bm{R}^{*\top}\bm{R}^*=\bm{I}_{K_2}$. Then, the true value of core tensor is
\begin{equation}
  \cm{G}^*=\cm{B}^*\times_1\bm{C}^{*\top}\times_2\bm{R}^{*\top}.
\end{equation}
Let $\bm{\Sigma}_C=\text{diag}(\sigma_1(\cm{B}_{(1)}^*),\dots,\sigma_{K_1}(\cm{B}_{(1)}^*))$ and $\bm{\Sigma}_R = \text{diag}(\sigma_1(\cm{B}_{(2)}^*),\dots,\sigma_{K_2}(\cm{B}_{(2)}^*))$ be diagonal matrices containing the singular values of the tensor matricizations along modes 1 and 2, respectively. Let $\ubar{\sigma}=\min(\sigma_{K_1}(\cm{B}_{(1)}^*),\sigma_{K_2}(\cm{B}_{(1)}^*))$ denote the smallest singular value across these modes. For each fronal slice $\bm{G}_i^*$, we perform an SVD: $\bm{G}_i^*=\bm{U}_i\bm{\Sigma}_i\bm{V}_i^\top$, and define the true individual loading factors as
\begin{equation}
  \bm{L}_{1i}^*=\bm{U}_i\bm{\Sigma}_i^{1/2},\quad\bm{L}_{2i}^*=\bm{V}_i\bm{\Sigma}_i^{1/2},
\end{equation}
ensuring consistency with the structure $\bm{G}_i^*=\bm{L}_{1i}^*\bm{L}_{2i}^{*\top}$.

To compare decompositions up to orthogonal and scaling transformations, we define the following distance between an estimated parameter set $\bm{\Theta}=(\bm{C},\bm{R},\{\bm{L}_{1i}^*,\bm{L}_{2i}^*\}_{i=1}^n)$ and the true parameter set $\bm{\Theta}^*$:
\begin{equation}
    \begin{split}
        \text{dist}(\bm{\Theta},\bm{\Theta}^*)^2 = & \inf_{\substack{\bm{Q}_1\in\text{GL}(K_1),\\\bm{Q}_2\in\text{GL}(K_2),\\\bm{P}_1,\dots,\bm{P}_n\in\text{GL}(r)}}\Bigg\{ \|(\bm{C}\bm{Q}_1-\bm{C}^*)\bm{\Sigma}_C\|_\text{F}^2 + \|(\bm{R}\bm{Q}_2-\bm{R}^*)\bm{\Sigma}_R\|_\text{F}^2 \\
        & + \sum_{i=1}^n\|(\bm{Q}_1^{-1}\bm{L}_{1i}\bm{P}_i - \bm{L}_{1i}^*)\bm{\Sigma}_i^{1/2}\|_\text{F}^2 + \sum_{i=1}^n\|(\bm{Q}_2^{-1}\bm{L}_{2i}\bm{P}_i^{-\top} - \bm{L}_{2i}^*)\bm{\Sigma}_i^{1/2}\|_\text{F}^2 \Bigg\}.
    \end{split}
\end{equation}
where $\bm{Q}_1$, $\bm{Q}_2$, and $\bm{P}_i$'s are invertible matrices in the general linear groups that account for the inherent non-uniqueness of the tensor decomposition. This distance metric explicitly removes the effects of transformations $(\bm{Q}_1,\bm{Q}_2,\bm{P}_i)$ and measures the discrepancy between essential structural components of the estimated and true decompositions. It is invariant to rotations and scaling, and serves as the foundation for establishing estimation guarantees.

We now establish a key property of the proposed distance metric: it tightly controls the Frobenius norm error of the underlying coefficient tensor $\cm{B}$, thereby linking the geometric notion of identifiability to the statistical target of estimation.

\begin{proposition}[Distance-Error Equivalence]\label{prop:1}
    For $\bm{\Theta}$ and $\bm{\Theta^*}$, denote the corresponding parameter tensors as $\cm{B}$ and $\cm{B}^*$. If $\textup{dist}(\bm{\Theta},\bm{\Theta}^*)\leq \ubar{\sigma}$,
    \begin{equation}
        C_1\|\cm{B}-\cm{B}^*\|_\textup{F}^2 \leq \textup{dist}(\bm{\Theta},\bm{\Theta}^*)^2 \leq C_2\|\cm{B}-\cm{B}^*\|_\textup{F}^2,
    \end{equation}
    where $C_1$ and $C_2$ are positive constants.
\end{proposition}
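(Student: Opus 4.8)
The plan is to prove the two inequalities separately, using throughout the identities built into $\bm{\Theta}^*$: $\bm{C}^{*\top}\bm{C}^*=\bm{I}_{K_1}$ and $\bm{R}^{*\top}\bm{R}^*=\bm{I}_{K_2}$; stacking the core slices, $\sum_i\bm{G}_i^*\bm{G}_i^{*\top}=\bm{\Sigma}_C^2$ and $\sum_i\bm{G}_i^{*\top}\bm{G}_i^*=\bm{\Sigma}_R^2$ (since $\bm{C}^*,\bm{R}^*$ are the leading left singular vectors of the matricizations of $\cm{B}^*$); and, from $\bm{L}_{1i}^*=\bm{U}_i\bm{\Sigma}_i^{1/2}$, $\bm{L}_{2i}^*=\bm{V}_i\bm{\Sigma}_i^{1/2}$, the isometries $\|\bm{M}\bm{L}_{1i}^{*\top}\|_\textup{F}=\|\bm{M}\bm{\Sigma}_i^{1/2}\|_\textup{F}=\|\bm{L}_{2i}^*\bm{M}^\top\|_\textup{F}$.

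For the lower bound $\|\cm{B}-\cm{B}^*\|_\textup{F}^2\le C_1^{-1}\,\textup{dist}(\bm{\Theta},\bm{\Theta}^*)^2$, I would fix transformations attaining the infimum and absorb them into the factors, writing $\bm{C}\bm{Q}_1=\bm{C}^*+\bm{\Delta}_C$, $\bm{R}\bm{Q}_2=\bm{R}^*+\bm{\Delta}_R$, $\bm{Q}_1^{-1}\bm{L}_{1i}\bm{P}_i=\bm{L}_{1i}^*+\bm{\Delta}_{1i}$, $\bm{Q}_2^{-1}\bm{L}_{2i}\bm{P}_i^{-\top}=\bm{L}_{2i}^*+\bm{\Delta}_{2i}$, and expanding $\bm{B}_i=(\bm{C}^*+\bm{\Delta}_C)(\bm{L}_{1i}^*+\bm{\Delta}_{1i})(\bm{L}_{2i}^*+\bm{\Delta}_{2i})^\top(\bm{R}^*+\bm{\Delta}_R)^\top$. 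Subtracting $\bm{B}_i^*$ leaves fifteen terms; by the identities above, the four first-order ones have squared Frobenius norms summing over $i$ to exactly $\|\bm{\Delta}_C\bm{\Sigma}_C\|_\textup{F}^2$, $\sum_i\|\bm{\Delta}_{1i}\bm{\Sigma}_i^{1/2}\|_\textup{F}^2$, $\sum_i\|\bm{\Delta}_{2i}\bm{\Sigma}_i^{1/2}\|_\textup{F}^2$, and $\|\bm{\Delta}_R\bm{\Sigma}_R\|_\textup{F}^2$ — the four pieces of $\textup{dist}^2$. Each of the remaining eleven terms carries at least two $\bm{\Delta}$-factors; bounding one in operator norm (using $\|\bm{\Delta}_C\|\le\|\bm{\Delta}_C\bm{\Sigma}_C\|_\textup{F}/\ubar{\sigma}\le\textup{dist}/\ubar{\sigma}\le1$, likewise for $\bm{\Delta}_R$, and $\|\bm{\Delta}_{1i}\|\le\|\bm{\Delta}_{1i}\bm{\Sigma}_i^{1/2}\|_\textup{F}/\sigma_r(\bm{G}_i^*)^{1/2}$ for the loadings) and the rest in Frobenius norm, each such term is $O(\textup{dist}^2)$ up to condition-number factors of the $\bm{G}_i^*$, and summing the finitely many terms gives the inequality.

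For the upper bound $\textup{dist}(\bm{\Theta},\bm{\Theta}^*)^2\le C_2\|\cm{B}-\cm{B}^*\|_\textup{F}^2$, I would construct admissible transformations explicitly (we may assume $\bm{C},\bm{R}$ have full column rank, since otherwise $\textup{dist}\le\ubar{\sigma}\le\|\cm{B}-\cm{B}^*\|_\textup{F}$ trivially). Take $\bm{Q}_1=\bm{C}^\dagger\bm{C}^*$ and $\bm{Q}_2=\bm{R}^\dagger\bm{R}^*$, so $\bm{C}\bm{Q}_1=\bm{\Pi}\bm{C}^*$ with $\bm{\Pi}$ the projection onto $\textup{col}(\bm{C})=\textup{col}(\cm{B}_{(1)})$; since $\bm{C}^*\bm{\Sigma}_C$ shares its column space with $\cm{B}_{(1)}^*$ and $(\bm{I}-\bm{\Pi})\cm{B}_{(1)}=0$, we get $\|(\bm{C}\bm{Q}_1-\bm{C}^*)\bm{\Sigma}_C\|_\textup{F}\le\|(\bm{I}-\bm{\Pi})(\cm{B}_{(1)}^*-\cm{B}_{(1)})\|_\textup{F}\le\|\cm{B}-\cm{B}^*\|_\textup{F}$, and symmetrically for the $\bm{R}$-term. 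For the loadings, the transformed core slices are $\bm{Q}_1^{-1}\bm{G}_i\bm{Q}_2^{-\top}$ and $\cm{G}\times_1\bm{Q}_1^{-1}\times_2\bm{Q}_2^{-1}=\cm{B}\times_1(\bm{C}\bm{Q}_1)^\dagger\times_2(\bm{R}\bm{Q}_2)^\dagger$; comparing with $\cm{G}^*=\cm{B}^*\times_1\bm{C}^{*\top}\times_2\bm{R}^{*\top}$ and using the proximity of $\bm{C}\bm{Q}_1,\bm{R}\bm{Q}_2$ to $\bm{C}^*,\bm{R}^*$ together with $\|\bm{Q}_1^{-1}\|,\|\bm{Q}_2^{-1}\|=O(1)$ (from $\textup{dist}\le\ubar{\sigma}$), the $\bm{\Sigma}_i$-weighted discrepancies between $\bm{Q}_1^{-1}\bm{G}_i\bm{Q}_2^{-\top}$ and $\bm{G}_i^*$ sum to $O(\|\cm{B}-\cm{B}^*\|_\textup{F}^2)$. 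Finally, for each $i$ I would choose $\bm{P}_i$ as the optimal alignment of the two rank-$\le r$ factorizations $(\bm{Q}_1^{-1}\bm{L}_{1i})(\bm{Q}_2^{-1}\bm{L}_{2i})^\top$ and $\bm{L}_{1i}^*\bm{L}_{2i}^{*\top}$ and invoke a matrix-factorization perturbation bound to convert this core-level closeness into a bound on $\|(\bm{Q}_1^{-1}\bm{L}_{1i}\bm{P}_i-\bm{L}_{1i}^*)\bm{\Sigma}_i^{1/2}\|_\textup{F}^2+\|(\bm{Q}_2^{-1}\bm{L}_{2i}\bm{P}_i^{-\top}-\bm{L}_{2i}^*)\bm{\Sigma}_i^{1/2}\|_\textup{F}^2$; summing over $i$ and adding the factor-matrix bounds completes the proof.

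The step I expect to be the main obstacle is the loading alignment in the upper bound — turning closeness of the tensor $\cm{B}$ into closeness of the individualized factors $\bm{L}_{1i},\bm{L}_{2i}$ in the $\bm{\Sigma}_i^{1/2}$-weighted metric — because the core slices $\bm{G}_i^*$ may be ill-conditioned, so the passage between operator and weighted norms and the selection of the optimal $\bm{P}_i$ require a dedicated matrix-factorization perturbation lemma rather than crude norm estimates; this ill-conditioning is also what forces $C_1,C_2$ to depend on condition-number-type quantities of the matricizations of $\cm{B}^*$ and of its core slices. The multilinear expansion in the lower bound and the projection/Wedin-type estimates for the factor matrices are routine once these operator-norm controls are in place.
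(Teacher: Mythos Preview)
Your lower-bound argument is essentially the paper's: the paper never proves Proposition~\ref{prop:1} as a standalone result, but in Appendix~\ref{sec:A.1} it invokes Lemmas~\ref{lemma:tensor_perturb} and~\ref{lemma:matrix_perturb} (both quoted from Tong et al.) to obtain $\|\cm{B}-\cm{B}^*\|_\textup{F}\lesssim\textup{dist}$, and your multilinear expansion with the four first-order terms reproducing the pieces of $\textup{dist}^2$ and the higher-order terms controlled in operator norm is exactly the content of those two lemmas chained through the core $\cm{G}$.

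For the upper bound the paper gives no argument at all---the proposition is asserted and later quoted, but the cited lemmas only run in the direction $\textup{dist}\to\|\cm{B}-\cm{B}^*\|_\textup{F}$---so your explicit-alignment construction is filling a genuine gap rather than taking a different route. One imprecision to fix: the claim $\|\bm{Q}_1^{-1}\|=O(1)$ is false as written (take $\bm{C}=\lambda\bm{C}^*$ for large $\lambda$; then $\bm{Q}_1=\bm{C}^\dagger\bm{C}^*=\lambda^{-1}\bm{I}$ and $\|\bm{Q}_1^{-1}\|=\lambda$). What you actually need, and what does follow once the lower bound gives $\|\cm{B}-\cm{B}^*\|_\textup{F}\lesssim\ubar{\sigma}$, is $\|(\bm{C}\bm{Q}_1)^\dagger\|=O(1)$: since $\bm{C}\bm{Q}_1=\Pi_{\bm{C}}\bm{C}^*$ and $\sigma_{\min}(\Pi_{\bm{C}}\bm{C}^*)^2=1-\|(\bm{I}-\Pi_{\bm{C}})\bm{C}^*\|^2$, boundedness follows from $\|(\bm{I}-\Pi_{\bm{C}})\bm{C}^*\bm{\Sigma}_C\|_\textup{F}\le\|\cm{B}-\cm{B}^*\|_\textup{F}$. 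With that correction the core-level comparison goes through. Your identification of the per-slice loading alignment as the real obstacle is accurate: the converse matrix-factorization bound you need requires $\|\tilde{\bm{G}}_i-\bm{G}_i^*\|_\textup{F}\lesssim\sigma_r(\bm{G}_i^*)$ for each $i$, which is not implied by $\textup{dist}\le\ubar{\sigma}$ alone and is precisely why $C_2$ must absorb condition numbers of the core slices (or why one needs the signal-strength assumption the paper imposes later).
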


\section{Estimation Methodology}\label{sec:estimation_methodology}

\subsection{Scaled Gradient Descent Algorithm}

We propose a scaled gradient descent algorithm tailored to the tensor-structured homogeneity pursuit model introduced in Section \ref{sec:model_framework}. The objective is to estimate the full parameter set $\bm{\Theta}=(\bm{C},\bm{R},\{\bm{L}_{1i},\bm{L}_{2i}\}_{i=1}^n)$, which comprises the shared low-rank factor matrices $\bm{C}\in\mathbb{R}^{p_1\times K_1}$ and $\bm{R}\in\mathbb{R}^{p_2\times K_2}$, and the individualized loading matrices $\bm{L}_{1i}\in\mathbb{R}^{K_1\times r}$ and $\bm{L}_{2i}\in\mathbb{R}^{K_2\times r}$ for $i=1,\dots,n$.

The estimation is based on minimizing the empirical negative log-likelihood loss
\begin{equation}
    \mathcal{L}_n(\bm{\Theta}) = \sum_{i=1}^n\sum_{j=1}^{m_i}\mathcal{L}(\bm{C}\bm{L}_{1i}\bm{L}_{2i}^\top\bm{R}^\top;\bm{X}_{ij},Y_{ij}),\quad\text{where }\bm{B}_i=\bm{C}\bm{L}_{1i}\bm{L}_{2i}^\top\bm{R}^\top.
\end{equation}
Here, $\mathcal{L}(\cdot)$ is a differential loss function associated with the generalized linear model framework introduced in Section \ref{sec:model_framework}.

From the Tucker decomposition in \eqref{eq:tensor_decomp}, we define the composite matrices 
\begin{equation}
  \widetilde{\bm{C}}=[\bm{L}_{11}\bm{L}_{21}^\top\bm{R}^\top,\dots,\bm{L}_{1n}\bm{L}_{2n}^\top\bm{R}^\top],\quad \widetilde{\bm{R}}=[\bm{L}_{21}\bm{L}_{11}^\top\bm{C}^\top,\dots,\bm{L}_{2n}\bm{L}_{1n}^\top\bm{C}^\top],
\end{equation}
so that the mode-1 and mode-2 matricizations of $\cm{B}$ satisfy 
\begin{equation}
  \cm{B}_{(1)}=\bm{C}\widetilde{\bm{C}}^\top,\quad\cm{B}_{(2)}=\bm{R}\widetilde{\bm{R}}^\top.
\end{equation}
The partial gradients of the empirical loss with respect to the parameters are given by
\begin{equation}
    \begin{split}
        \nabla_{\bm{C}}\mathcal{L}_n(\bm{\Theta}) & = \sum_{i=1}^n\sum_{j=1}^{m_i}\left[\bm{e}_i(n)^\top\otimes\nabla\mathcal{L}(\bm{B}_i;\bm{X}_{ij},Y_{ij})\right]\widetilde{\bm{C}},\\
        \nabla_{\bm{R}}\mathcal{L}_n(\bm{\Theta}) & = \sum_{i=1}^n\sum_{j=1}^{m_i}\left[\bm{e}_i(n)^\top\otimes\nabla\mathcal{L}(\bm{B}_i;\bm{X}_{ij},Y_{ij})^\top\right]\widetilde{\bm{R}},\\
        \nabla_{\bm{L}_{1i}}\mathcal{L}_n(\bm{\Theta}) & = \sum_{j=1}^{m_i}\bm{C}^\top\nabla\mathcal{L}(\bm{B}_i;\bm{X}_{ij},Y_{ij})\bm{R}\bm{L}_{2i},\\
        \nabla_{\bm{L}_{2i}}\mathcal{L}_n(\bm{\Theta}) & = \sum_{j=1}^{m_i}\bm{R}^\top\nabla\mathcal{L}(\bm{B}_i;\bm{X}_{ij},Y_{ij})^\top\bm{C}\bm{L}_{1i},
    \end{split}
\end{equation}
where $\bm{e}_i(n)$ is the $i$-th canonical basis vector in $\mathbb{R}^{n}$.

A major challenge in optimizing the above objective lies in the \textit{inherent scaling imbalance} between the shared parameters ($\bm{C}$, $\bm{R}$), which operate in high-dimensional spaces $\mathbb{R}^{p_1\times K_1}$ and $\mathbb{R}^{p_1\times K_1}$, and the individual parameters $(\bm{L}_{1i},\bm{L}_{2i})$, which lie in much lower-dimensional spaces $\mathbb{R}^{K_1\times r}$ and $\mathbb{R}^{K_2\times r}$. This is further complicated by the non-identifiability of the decomposition under orthogonal transformations, and the aggregation scope of gradients, where shared gradients pool information across all individuals, while individual gradients depend only on local data.

To address these issues, we introduce a scaled gradient descent approach that adaptively rescales the gradient for each parameter class. Specifically, for $\bm{C}$ and $\bm{R}$, we precondition the gradients by $(\widetilde{\bm{C}}^\top\widetilde{\bm{C}})^{-1}$ and $(\widetilde{\bm{R}}^\top\widetilde{\bm{R}})^{-1}$, respectively. For each $\bm{L}_{1i}$ (or $\bm{L}_{2i}$), we scale by $(\bm{C}^\top\bm{C})^{-1}$ and $(\bm{L}_{2i}^\top\bm{R}^\top\bm{R}\bm{L}_{2i})^{-1}$ (or their counterparts for $\bm{L}_{2i}$), ensuring that updates are numerically stable and dimensionally balanced. This leads to the Scaled Gradient Descent Algorithm (Algorithm \ref{alg:1}), detailed below.

\begin{algorithm}[!htp]
\caption{Scaled Gradient Descent for Homogeneity Pursuit}
\label{alg:1}
\begin{flushleft}
\linespread{1.75}\selectfont
\textbf{Input}: Initial values $\bm{C}^{(0)}$, $\bm{R}^{(0)}$, $\bm{L}_{1i}^{(0)}$, $\bm{L}_{2i}^{(0)}$ for $i=1,\dots,n$, step size $\eta$, number of iterations $T$ \\
\textbf{For} $t=0,\dots,T-1$\\
\hspace*{1cm}$\bm{C}^{(t+1)}\leftarrow \bm{C}^{(t)} - \eta\nabla_{\bm{C}}\mathcal{L}_n(\bm{\Theta}^{(t)})\left(\widetilde{\bm{C}}^{(t)\top}\widetilde{\bm{C}}^{(t)}\right)^{-1}$\\
\hspace*{1cm}$\bm{R}^{(t+1)}\leftarrow \bm{R}^{(t)} - \eta\nabla_{\bm{R}}\mathcal{L}_n(\bm{\Theta}^{(t)})\left(\widetilde{\bm{R}}^{(t)\top}\widetilde{\bm{R}}^{(t)}\right)^{-1}$\\
\hspace*{1cm}\textbf{For} $i=1,\dots,n$ (in parallel) \textbf{do}\\
\hspace*{2cm}$\bm{L}_{1i}^{(t+1)}\leftarrow\bm{L}_{1i}^{(t)}-\eta\left(\bm{C}^{(t)\top}\bm{C}^{(t)}\right)^{-1}\nabla_{\bm{L}_{1i}}\mathcal{L}_n(\bm{\Theta}^{(t)})\left(\bm{L}_{2i}^{(t)\top}\bm{R}^{(t)\top}\bm{R}^{(t)}\bm{L}_{2i}^{(t)}\right)^{-1}$\\
\hspace*{2cm}$\bm{L}_{2i}^{(t+1)}\leftarrow\bm{L}_{2i}^{(t)}-\eta\left(\bm{R}^{(t)\top}\bm{R}^{(t)}\right)^{-1}\nabla_{\bm{L}_{2i}}\mathcal{L}_n(\bm{\Theta}^{(t)})\left(\bm{L}_{1i}^{(t)\top}\bm{C}^{(t)\top}\bm{C}^{(t)}\bm{L}_{1i}^{(t)}\right)^{-1}$\\
\hspace*{1cm}\textbf{End for}\\
\textbf{End for}\\
\textbf{Output}: $\widehat{\bm{\Theta}}=\left(\bm{C}^{(T)},\bm{R}^{(T)},\bm{L}_{11}^{(T)},\bm{L}_{21}^{(T)},\dots,\bm{L}_{1n}^{(T)},\bm{L}_{2n}^{(T)}\right)$
\end{flushleft}\vspace{-0.1cm}
\end{algorithm}

Notably, the parallel updates for $\bm{L}_{1i}$ and $\bm{L}_{2i}$ enhance computational scalability, and all scaling operations involve only inverses of low-dimensional matrices, ensuring efficiency even in high-dimensional settings.

\subsection{Algorithm Computational Guarantees}

We now establish the general theoretical guarantees for the scaled gradient descent algorithm. Let $\bm{B}_i^* = \bm{C}^* \bm{L}_{1i}^* \bm{L}_{2i}^{*\top} \bm{R}^{*\top}$ denote the true coefficient matrices, and let $\bm{\Theta}^*=(\bm{C}^*,\bm{R}^*,\{\bm{L}_{1i}^*,\bm{L}_{2i}^*\}_{i=1}^n)$ be the true parameter set.

\begin{assumption}[Restricted Correlated Gradient Condition]\label{def:RCG}
    For each individual $i$, the loss satisfies the Restricted Correlated Gradient (RCG) condition: for $\bm{B}_i=\bm{C}\bm{L}_{1i}\bm{L}_{2i}^\top\bm{R}^\top$,
    \begin{equation}
        \left\langle\mathbb{E}[\nabla\mathcal{L}(\bm{B}_i;\bm{X}_{ij},Y_{ij})],\bm{B}_i-\bm{B}_i^*\right\rangle \geq \frac{\alpha_i}{2}\|\bm{B}_i-\bm{B}_i^*\|_\textup{F}^2 + \frac{1}{2\beta_i}\|\mathbb{E}\nabla\mathcal{L}(\bm{B}_i;\bm{X}_{ij},Y_{ij})\|_\textup{F}^2,
    \end{equation}
    where $\alpha_i,\beta_i>0$ are individual-specific parameters satisfy $0<\alpha_i\leq\beta_i$.
\end{assumption}
The RCG condition in Assumption \ref{def:RCG} implies that, for each individual $i$, the expectation of the gradient $\nabla\mathcal{L}(\bm{B}_i;\bm{X}_{ij},Y_{ij})$ is postively correlated with the optimal descent direction $\bm{B}_i-\bm{B}_i^*$. This condition is implied by the restricted strong convexity with parameter $\alpha_i$ and restricted strong smoothness with parameter $\beta_i$ of the population risk $\mathbb{E}[\mathcal{L}(\bm{B}_i;\bm{X}_{ij},Y_{ij})]$, which are widely adopted in non-convex algorithm convergence analysis. In addition, the distributions of $(\bm{X}_{ij},Y_{ij})$ are allowed to be heterogeneous across individuals, with various $\alpha_i$ and $\beta_i$.

\begin{assumption}[Gradient Stability]
\label{def:stability}
For any $\bm{\Theta}$, the gradients are stable in the sense that there exist some constants $\phi>0$, $\xi_C,\xi_R,\xi_{1i},\xi_{2i}>0$ such that
\begin{equation}
    \begin{split}
        \left\|\{\nabla_{\bm{C}}\mathcal{L}_n(\bm{\Theta}) - \mathbb{E}[\nabla_{\bm{C}}\mathcal{L}_n(\bm{\Theta})]\}(\widetilde{\bm{C}}^\top\widetilde{\bm{C}})^{-1/2}\right\|_\textup{F}^2 & \leq \phi \|\cm{B}-\cm{B}^*\|_\textup{F}^2 + \xi_C,\\
        \left\|\{\nabla_{\bm{R}}\mathcal{L}_n(\bm{\Theta}) - \mathbb{E}[\nabla_{\bm{R}}\mathcal{L}_n(\bm{\Theta})]\}(\widetilde{\bm{R}}^\top\widetilde{\bm{R}})^{-1/2}\right\|_\textup{F}^2 & \leq \phi \|\cm{B}-\cm{B}^*\|_\textup{F}^2 + \xi_R,\\
        \left\|(\bm{C}^\top\bm{C})^{-1/2}\{\nabla_{\bm{L}_{1i}}\mathcal{L}_n(\bm{\Theta})-\mathbb{E}[\nabla_{\bm{L}_{1i}}\mathcal{L}_n(\bm{\Theta})]\}(\bm{L}_{2i}^\top\bm{R}^\top\bm{R}\bm{L}_{2i})^{-1/2} \right\|_\textup{F}^2 & \leq \phi\|\bm{B}_i-\bm{B}_i^*\|_\textup{F}^2 + \xi_{1i},\\
        \left\|(\bm{R}^\top\bm{R})^{-1/2}\{\nabla_{\bm{L}_{2i}}\mathcal{L}_n(\bm{\Theta})-\mathbb{E}[\nabla_{\bm{L}_{2i}}\mathcal{L}_n(\bm{\Theta})]\}(\bm{L}_{1i}^\top\bm{C}^\top\bm{C}\bm{L}_{1i})^{-1/2} \right\|_\textup{F}^2 & \leq \phi\|\bm{B}_i-\bm{B}_i^*\|_\textup{F}^2 + \xi_{2i}.
    \end{split}
\end{equation}
\end{assumption}
The parameters $\phi$ controls the performance of all partial gradients for any $\bm{\Theta}$, while the parameters $\xi_C$, $\xi_R$, $\xi_{1i}$'s, and $\xi_{2i}$'s represent the estimation accuracy of the partial sample gradients around their expectations. Similar definitions for the gradient stability have been considered for robust tensor estimation in \citet{zhang2025robust}. Under mild conditions, the gradient stability will be further verified for specific statistical models in Section \ref{sec:statistical_models}.

\begin{theorem}[Local Linear Convergence]
    \label{thm:1}
    Under Assumptions \ref{def:RCG}--\ref{def:stability}, if the step size satisfies $\eta\asymp\ubar{\alpha}\bar{\beta}^{-1}$, $\phi\lesssim\ubar{\alpha}^2m^2$, and the initialization satisfies $\textup{dist}(\bm{\Theta}^{(0)},\bm{\Theta}^*) \lesssim\ubar{\alpha}^{1/2}\bar{\beta}^{-1/2}\ubar{\sigma}$, then for all $t=1,2,\dots,T$,
    \begin{equation}
      \textup{dist}(\bm{\Theta}^{(t)},\bm{\Theta}^{*})^2 \leq (1-C\ubar{\alpha}/\bar{\beta})^t \cdot\textup{dist}(\bm{\Theta}^{(0)},\bm{\Theta}^{*})^2 + C\ubar{\alpha}^{-2}m^{-2}\xi,
    \end{equation}
    and the tensor estimation error satisfies
    \begin{equation}
      \|\cm{B}^{(t)}-\cm{B}^*\|_\textup{F}^2 \leq C(1-C
        \ubar{\alpha}/\bar{\beta})^t \cdot\textup{dist}(\bm{\Theta}^{(0)},\bm{\Theta}^*)^2 + C\ubar{\alpha}^{-2}m^{-2}\xi,
    \end{equation}
    where $m=n^{-1}\sum_{i=1}^nm_i$, $\ubar{\alpha}=m^{-1}\min_{1\leq i\leq n}m_i\alpha_i$, $\bar{\beta}=m^{-1}\max_{1\leq i\leq n}m_i\beta_i$, and $\xi=\xi_C+\xi_R+\sum_{i=1}^n(\xi_{1i}+\xi_{2i})$.
\end{theorem}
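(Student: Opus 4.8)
The plan is to prove Theorem \ref{thm:1} by a one-step contraction argument for the distance metric, following the scaled gradient descent paradigm of \citet{tong2021accelerating,tong2022accelerating}, but adapted to the two-mode shared tensor decomposition and the heterogeneous individual losses. First I would fix an iterate $\bm{\Theta}^{(t)}$ with $\textup{dist}(\bm{\Theta}^{(t)},\bm{\Theta}^*)\leq \delta\,\ubar{\sigma}$ for a suitably small constant $\delta$, and let $(\bm{Q}_1,\bm{Q}_2,\{\bm{P}_i\})$ be the optimal alignment transformations achieving the infimum in the definition of $\textup{dist}(\bm{\Theta}^{(t)},\bm{\Theta}^*)$. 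Writing the aligned factors $\bm{C}\bm{Q}_1$, $\bm{R}\bm{Q}_2$, $\bm{Q}_1^{-1}\bm{L}_{1i}\bm{P}_i$, $\bm{Q}_2^{-1}\bm{L}_{2i}\bm{P}_i^{-\top}$, I would expand the squared distance at $\bm{\Theta}^{(t+1)}$ by plugging in the scaled gradient updates (using these same transformations as feasible, not necessarily optimal, points for the infimum at step $t+1$, so the bound is one-sided in the right direction). This produces, for each of the four parameter blocks, a decomposition of the form ``previous error $-$ $2\eta\times$(inner product of scaled gradient with alignment residual) $+$ $\eta^2\times$(scaled gradient norm squared)'', where the preconditioners $(\widetilde{\bm{C}}^\top\widetilde{\bm{C}})^{-1}$ etc. cancel the scale mismatch, leaving terms measured in the natural $\bm{\Sigma}_C$, $\bm{\Sigma}_R$, $\bm{\Sigma}_i^{1/2}$ weights.

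Next I would split each gradient into its population part and a deviation part: $\nabla\mathcal{L}_n = \mathbb{E}[\nabla\mathcal{L}_n] + (\nabla\mathcal{L}_n - \mathbb{E}[\nabla\mathcal{L}_n])$. For the population part, the RCG condition (Assumption \ref{def:RCG}) applied per individual gives the crucial lower bound on the cross term: the inner product of the expected gradient with $\bm{B}_i - \bm{B}_i^*$ dominates $\tfrac{\alpha_i}{2}\|\bm{B}_i-\bm{B}_i^*\|_\textup{F}^2$ plus a $\tfrac{1}{2\beta_i}$ multiple of the expected gradient norm, the latter absorbing the $\eta^2$ quadratic term once $\eta\asymp \ubar{\alpha}/\bar{\beta}$. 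The subtlety here is that the cross term at the factor level is an inner product with the alignment residuals $(\bm{C}\bm{Q}_1-\bm{C}^*)$ etc., not directly with $\bm{B}_i-\bm{B}_i^*$; bridging the two requires the standard ``matrix factorization perturbation'' identities, expressing $\bm{B}_i - \bm{B}_i^* = \bm{C}\bm{G}_i\bm{R}^\top - \bm{C}^*\bm{G}_i^*\bm{R}^{*\top}$ as a sum of terms each linear in one of the four residual blocks plus higher-order cross terms, then controlling the higher-order terms by the smallness of $\textup{dist}(\bm{\Theta}^{(t)},\bm{\Theta}^*)/\ubar{\sigma}$. This is where the condition $\textup{dist}(\bm{\Theta}^{(0)},\bm{\Theta}^*)\lesssim \ubar{\alpha}^{1/2}\bar{\beta}^{-1/2}\ubar{\sigma}$ and Proposition \ref{prop:1} enter: Proposition \ref{prop:1} guarantees $\|\cm{B}^{(t)}-\cm{B}^*\|_\textup{F}^2\asymp \textup{dist}(\bm{\Theta}^{(t)},\bm{\Theta}^*)^2$ throughout, so the higher-order remainders are genuinely lower order and can be swept into the contraction factor with a constant loss.

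For the deviation part, I would invoke Assumption \ref{def:stability}: after the same preconditioning, each scaled deviation gradient has Frobenius norm bounded by $\phi\|\cm{B}-\cm{B}^*\|_\textup{F}^2 + \xi$-type quantities (or their per-individual analogues). By Cauchy--Schwarz, the contribution of the deviation to the cross term is at most $(\text{small})\times\|\text{residual}\|^2 + (\text{small})\times(\phi\|\cm{B}-\cm{B}^*\|_\textup{F}^2+\xi)$; using Proposition \ref{prop:1} again to convert $\|\cm{B}-\cm{B}^*\|_\textup{F}^2$ back to $\textup{dist}^2$, and using the scaling $\phi\lesssim \ubar{\alpha}^2 m^2$ to ensure $\eta^2\phi$ and $\eta\sqrt{\phi}$ are small enough not to destroy contraction, this term only shaves the contraction rate by a constant and adds a residual of order $\eta\,\xi/\ubar{\alpha}\asymp \ubar{\alpha}^{-2}m^{-2}\xi$ after one step, which telescopes to the stated additive term. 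Summing the four block bounds and using $\ubar{\alpha}m \leq m_i\alpha_i$, $m_i\beta_i\leq\bar{\beta}m$ to pass from per-individual to uniform constants yields $\textup{dist}(\bm{\Theta}^{(t+1)},\bm{\Theta}^*)^2\leq(1-C\ubar{\alpha}/\bar{\beta})\textup{dist}(\bm{\Theta}^{(t)},\bm{\Theta}^*)^2 + C\ubar{\alpha}^{-2}m^{-2}\xi$; induction on $t$ (checking the smallness hypothesis is preserved, since the distance never increases past its initial scale) gives the first display, and a final application of Proposition \ref{prop:1} converts it to the tensor error bound.

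The main obstacle I anticipate is the bookkeeping in the cross-term analysis at the factor level: the distance metric couples $\bm{Q}_1,\bm{Q}_2,\bm{P}_i$ across all four blocks, and unlike the single-matrix case, one must simultaneously control the shared factors $(\bm{C},\bm{R})$ — whose gradients aggregate over all $n$ individuals with heterogeneous $(\alpha_i,\beta_i)$ — and the $n$ decoupled local loadings, while ensuring the preconditioners $\widetilde{\bm{C}}^\top\widetilde{\bm{C}}$ and $\bm{L}_{2i}^\top\bm{R}^\top\bm{R}\bm{L}_{2i}$ stay well-conditioned along the trajectory (which again follows from $\textup{dist}(\bm{\Theta}^{(t)},\bm{\Theta}^*)\ll\ubar{\sigma}$ and Weyl's inequality). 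Carefully tracking how the individual weights $m_i\alpha_i$, $m_i\beta_i$ combine into $\ubar{\alpha}$, $\bar{\beta}$, and how the shared-factor contraction benefits from pooling while the loadings do not, is the delicate part; the rest is the now-standard scaled-gradient descent machinery.
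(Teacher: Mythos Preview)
Your proposal is correct and follows essentially the same route as the paper's proof: use the optimal alignment at step $t$ as a feasible point at step $t+1$, expand each of the four block errors into a previous-error/cross-term/quadratic decomposition, split the gradients into population plus deviation, apply the RCG condition after rewriting the sum of cross terms as $\langle\cm{T}_n^{(t)},\cm{B}^{(t)}-\cm{B}^*\rangle$ minus higher-order remainders, control the remainders and preconditioner drift via the perturbation bounds (the paper uses Lemmas~\ref{lemma:tensor_perturb} and~\ref{lemma:matrix_perturb} directly rather than invoking Proposition~\ref{prop:1}, but these are equivalent), bound the deviation via Assumption~\ref{def:stability}, and induct. The only cosmetic difference is that the paper separates the deviation from the population part at the outset via a $(1+\zeta)/(1+\zeta^{-1})$ mean inequality rather than Cauchy--Schwarz, and carries an explicit parameter $\gamma$ to balance the higher-order cross terms before specializing $\eta$, $\zeta$, $\gamma$, $\epsilon$ simultaneously; your description of the same balancing is correct at the level of scaling.
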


Theorem \ref{thm:1} states the local convergence of the proposed scaled gradient descent iterates given that the initial value $\bm{\Theta}^{(0)}$ lies in a local region around the ground truth. In both upper bounds, the first term captures the exponential decay of the optimization error within the basin of attraction around the true parameter. The second term represents the statistical error, which diminishes as the sample size $m$ increases or as the gradient noise diminishes. Notably, the convergence rate is independent of the condition number of the underlying parameter matrices, i.e., $\max(\sigma_1(\bm{\Sigma}_C),\sigma_1(\bm{\Sigma}_R))/\min(\sigma_{K_1}(\bm{\Sigma}_C),\sigma_{K_2}(\bm{\Sigma}_R))$, a favorable property in high-dimensional or ill-conditioned regimes.

Building upon Theorem \ref{thm:1}, we analyze the convergence rates of the individual parameter components after applying an alignment transformation that maps the estimated parameters to their closest transformation in the equivalence class of the true parameters. Formally, let
\begin{equation}
    \begin{split}
        (\widehat{\bm{Q}}_1,\widehat{\bm{Q}}_2,\widehat{\bm{P}}_1,&\dots,\widehat{\bm{P}}_n) = \underset{{\substack{\bm{Q}_1\in\text{GL}(K_1),\\\bm{Q}_2\in\text{GL}(K_2),\\\bm{P}_1,\dots,\bm{P}_n\in\text{GL}(r)}}}{\arg\inf}\Bigg\{ \|(\widehat{\bm{C}}\bm{Q}_1-\bm{C}^*)\bm{\Sigma}_C\|_\text{F}^2 + \|(\widehat{\bm{R}}\bm{Q}_2-\bm{R}^*)\bm{\Sigma}_R\|_\text{F}^2 \\
        & + \sum_{i=1}^n\|(\bm{Q}_1^{-1}\widehat{\bm{L}}_{1i}\bm{P}_i - \bm{L}_{1i}^*)\bm{\Sigma}_i^{1/2}\|_\text{F}^2 + \sum_{i=1}^n\|(\bm{Q}_2^{-1}\widehat{\bm{L}}_{2i}\bm{P}_i^{-\top} - \bm{L}_{2i}^*)\bm{\Sigma}_i^{1/2}\|_\text{F}^2 \Bigg\}.
    \end{split}
\end{equation}

\begin{corollary}[Convergence Rates of Estimated Parameters]
    \label{cor:1}
    Under the conditions of Theorem \ref{thm:1}, after a sufficient number of iterations
    \begin{equation}
        T \gtrsim \frac{\log(C\ubar{\alpha}^{-2}m^{-2}\xi/\textup{dist}(\bm{\Theta}^{(0)},\bm{\Theta}^*)^2)}{\log(1-C\ubar{\alpha}\bar{\beta}^{-1})},
    \end{equation}
    the following convergence rates hold
    \begin{equation}
        \begin{split}
            \textup{dist}(\widehat{\bm{\Theta}},\bm{\Theta}^*)^2 & \lesssim \frac{\xi}{\ubar{\alpha}^2m^2},\\
            \frac{1}{n}\sum_{i=1}^n\|\widehat{\bm{B}}_i-\bm{B}_i^*\|_\textup{F}^2 & \lesssim \frac{\xi}{\ubar{\alpha}^2m^2n},\\
            \|\widehat{\bm{C}}\widehat{\bm{Q}}_1 - \bm{C}^*\|^2_\textup{F} & \lesssim \frac{\xi}{\sigma_{K_1}^2(\bm{\Sigma}_{C})\ubar{\alpha}^{2}m^{2}},\\
            \|\widehat{\bm{R}}\widehat{\bm{Q}}_2 - \bm{R}^*\|_\textup{F}^2 & \lesssim \frac{\xi}{\sigma_{K_2}^2(\bm{\Sigma}_{R})\ubar{\alpha}^{2}m^{2}},
        \end{split}
    \end{equation}
    and for the individual loading factors,
    \begin{equation}
      \frac{1}{n}\sum_{i=1}^n\|\widehat{\bm{Q}}_1^{-1}\widehat{\bm{L}}_{1i}\widehat{\bm{P}}_i - \bm{L}_{1i}^* \|_\textup{F}^2 + \frac{1}{n}\sum_{i=1}^n \|\widehat{\bm{Q}}_2^{-1} \widehat{\bm{L}}_{1i}\widehat{\bm{P}}_i^{-\top} - \bm{L}_{2i}^* \|_\textup{F}^2 \lesssim \frac{\xi}{\min_{1\leq i\leq n}\sigma_{r}(\bm{\Sigma}_i)\ubar{\alpha}^{2}m^{2}n}.
    \end{equation}
\end{corollary}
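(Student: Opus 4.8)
The plan is to read off all five bounds from Theorem~\ref{thm:1} together with the explicit structure of $\textup{dist}(\cdot,\cdot)$. First I would substitute the prescribed iteration count into the geometric factor $(1-C\ubar{\alpha}/\bar{\beta})^{T}$. Since $\log(1-C\ubar{\alpha}\bar{\beta}^{-1})<0$, the stated lower bound on $T$ is precisely what forces
\[
(1-C\ubar{\alpha}/\bar{\beta})^{T}\,\textup{dist}(\bm{\Theta}^{(0)},\bm{\Theta}^{*})^{2}\;\lesssim\;\ubar{\alpha}^{-2}m^{-2}\xi .
\]
Plugging this into the two displays of Theorem~\ref{thm:1} at $t=T$ gives immediately $\textup{dist}(\widehat{\bm{\Theta}},\bm{\Theta}^{*})^{2}\lesssim\xi/(\ubar{\alpha}^{2}m^{2})$, which is the first claim, and $\|\cm{B}^{(T)}-\cm{B}^{*}\|_{\textup{F}}^{2}\lesssim\xi/(\ubar{\alpha}^{2}m^{2})$. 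Because the $\bm{B}_{i}$ are the frontal slices of $\cm{B}$, we have $\frac{1}{n}\sum_{i=1}^{n}\|\widehat{\bm{B}}_{i}-\bm{B}_{i}^{*}\|_{\textup{F}}^{2}=\frac{1}{n}\|\cm{B}^{(T)}-\cm{B}^{*}\|_{\textup{F}}^{2}\lesssim\xi/(\ubar{\alpha}^{2}m^{2}n)$, the second claim. (Alternatively the second claim follows from the first via Proposition~\ref{prop:1}, which applies once one notes that the statistical floor $\xi/(\ubar{\alpha}^{2}m^{2})$ is assumed small relative to $\ubar{\sigma}^{2}$, so every iterate — in particular $\widehat{\bm{\Theta}}$ — stays in the admissible region $\textup{dist}\le\ubar{\sigma}$.)

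\textbf{Steps 2--3 (splitting the distance and stripping the weights).} The transformations $(\widehat{\bm{Q}}_{1},\widehat{\bm{Q}}_{2},\widehat{\bm{P}}_{1},\dots,\widehat{\bm{P}}_{n})$ defined just before the corollary are, by construction, the minimizers of the four nonnegative terms defining $\textup{dist}(\widehat{\bm{\Theta}},\bm{\Theta}^{*})^{2}$; hence each of those terms is individually bounded by $\textup{dist}(\widehat{\bm{\Theta}},\bm{\Theta}^{*})^{2}\lesssim\xi/(\ubar{\alpha}^{2}m^{2})$. In particular $\|(\widehat{\bm{C}}\widehat{\bm{Q}}_{1}-\bm{C}^{*})\bm{\Sigma}_{C}\|_{\textup{F}}^{2}$, $\|(\widehat{\bm{R}}\widehat{\bm{Q}}_{2}-\bm{R}^{*})\bm{\Sigma}_{R}\|_{\textup{F}}^{2}$, and the two sums $\sum_{i}\|(\widehat{\bm{Q}}_{1}^{-1}\widehat{\bm{L}}_{1i}\widehat{\bm{P}}_{i}-\bm{L}_{1i}^{*})\bm{\Sigma}_{i}^{1/2}\|_{\textup{F}}^{2}$ and $\sum_{i}\|(\widehat{\bm{Q}}_{2}^{-1}\widehat{\bm{L}}_{2i}\widehat{\bm{P}}_{i}^{-\top}-\bm{L}_{2i}^{*})\bm{\Sigma}_{i}^{1/2}\|_{\textup{F}}^{2}$ are all $\lesssim\xi/(\ubar{\alpha}^{2}m^{2})$. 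Next I would use that for conformable $\bm{A}$ and positive semidefinite $\bm{\Sigma}$, $\|\bm{A}\bm{\Sigma}\|_{\textup{F}}^{2}=\textup{tr}(\bm{A}^{\top}\bm{A}\bm{\Sigma}^{2})\ge\lambda_{\min}(\bm{\Sigma})^{2}\|\bm{A}\|_{\textup{F}}^{2}$. Taking $\bm{\Sigma}=\bm{\Sigma}_{C}$, with $\lambda_{\min}(\bm{\Sigma}_{C})=\sigma_{K_{1}}(\bm{\Sigma}_{C})$, yields the third claim, and $\bm{\Sigma}=\bm{\Sigma}_{R}$ yields the fourth. For the loadings I would apply the same inequality term-by-term with $\bm{\Sigma}=\bm{\Sigma}_{i}^{1/2}$, so that $\lambda_{\min}(\bm{\Sigma}_{i}^{1/2})^{2}=\sigma_{r}(\bm{\Sigma}_{i})$; summing over $i$, lower-bounding by $\min_{i}\sigma_{r}(\bm{\Sigma}_{i})$, dividing by $n$, and adding the $\bm{L}_{1i}$ and $\bm{L}_{2i}$ contributions gives the final claim.

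\textbf{Main obstacle.} As this is a corollary of Theorem~\ref{thm:1}, there is no substantive difficulty; the only points requiring care are (i) checking in Step~1 that the exact expression for $T$ genuinely drives the geometric term below the statistical floor (this is where the logarithm's negative sign flips the inequality), and (ii) correctly identifying $\lambda_{\min}(\bm{\Sigma}_{C})=\sigma_{K_{1}}(\bm{\Sigma}_{C})$, $\lambda_{\min}(\bm{\Sigma}_{R})=\sigma_{K_{2}}(\bm{\Sigma}_{R})$, and $\lambda_{\min}(\bm{\Sigma}_{i}^{1/2})^{2}=\sigma_{r}(\bm{\Sigma}_{i})$ when stripping the weights, since it is this distinction (square versus no square) that produces the different powers of the singular values appearing in the denominators of the three component bounds.
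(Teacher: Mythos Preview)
Your proposal is correct and follows the same route as the paper's proof: invoke Theorem~\ref{thm:1} with the prescribed $T$ so that the geometric term is dominated by the statistical floor, read off the $\textup{dist}$ and $\|\cm{B}^{(T)}-\cm{B}^*\|_{\textup{F}}$ bounds directly, and divide by $n$ for the slice average. Your Steps~2--3, which extract the component bounds by isolating each nonnegative summand of $\textup{dist}(\widehat{\bm{\Theta}},\bm{\Theta}^*)^2$ at the optimal alignment and then stripping the $\bm{\Sigma}_C$, $\bm{\Sigma}_R$, $\bm{\Sigma}_i^{1/2}$ weights via $\|\bm{A}\bm{\Sigma}\|_{\textup{F}}^2\ge\lambda_{\min}(\bm{\Sigma})^2\|\bm{A}\|_{\textup{F}}^2$, are in fact more explicit than the paper's own appendix, which writes out only the first two conclusions and leaves the remaining three implicit.
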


Corollary \ref{cor:1} presents the convergence rates of the shared and individualized parameters with the optimal alignment transformations. These bounds demonstrate that the average error $n^{-1}\sum_{i=1}^n\|\widehat{\bm{B}}_i-\bm{B}_i^*\|_\text{F}^2$ achieves a faster rate due to the information pooling across individuals. Meanwhile, the shared components $\bm{C}$ and $\bm{R}$ converge at rates governed by the smallest singular values of the population-level signal matrices, reflecting the distribution of heterogeneity across the shared subspaces. The individual loading matrices $\bm{L}_{1i}$ and $\bm{L}_{2i}$ converge at rates that depend critically on the minimal individual signal level $\min_i\sigma_r(\bm{\Sigma}_i)$. When this signal is bounded away from zero and the shared structure is sufficiently strong (e.g., $\sigma_{K_1}(\bm{\Sigma}_C),\sigma_{K_2}(\bm{\Sigma}_R)\gtrsim\sqrt{n}$), all components are estimated at statistically efficient rates.

\subsection{Loss of Ignoring Homogeneity}

To assess the value of the proposed homogeneity pursuit framework, we compare it with a fully heterogeneous baseline model if the latent shared structure is ignored. In this setting, each individual's coefficient matrix is modeled as $\bm{B}_i=\bm{C}_i\bm{R}_i^\top$. The corresponding loss function is
\begin{equation}
    \overline{\mathcal{L}}_i(\bm{C}_{i},\bm{R}_i) = \sum_{j=1}^{m_i}\mathcal{L}(\bm{C}_i\bm{R}_i^\top;\bm{X}_{ij},Y_{ij}),
\end{equation}
and the scaled gradient descent algorithm is modified accordingly, as summarized in Algorithm \ref{alg:2}.

\begin{algorithm}[!htp]
\caption{Scaled Gradient Descent Algorithm without Homogeneity Pursuit}
\label{alg:2}
\begin{flushleft}
\linespread{1.75}\selectfont
\textbf{Input}: initial values $\bm{C}_i^{(0)}$, $\bm{R}_i^{(0)}$ for $i=1,\dots,n$, step sizes $\eta_i$, no. of iterations $T$\\
\textbf{For} $i=1,\dots,n$ (in parallel)\\
\hspace*{1cm}\textbf{For} $t=0,\dots,T-1$\\
\hspace*{2cm}$\bm{C}_{i}^{(t+1)}\leftarrow\bm{C}_{i}^{(t)}-\eta_i\cdot\nabla_{\bm{C}_{i}}\overline{\mathcal{L}}_i(\bm{C}_{i}^{(t)},\bm{R}_i^{(t)})\left(\bm{R}_i^{(t)\top}\bm{R}_i^{(t)}\right)^{-1}$\\
\hspace*{2cm}$\bm{R}_{i}^{(t+1)}\leftarrow\bm{R}_{i}^{(t)}-\eta_i\cdot\nabla_{\bm{R}_{i}}\overline{\mathcal{L}}_i(\bm{C}_{i}^{(t)},\bm{R}_i^{(t)})\left(\bm{C}_i^{(t)\top}\bm{C}_i^{(t)}\right)^{-1}$\\
\hspace*{1cm}\textbf{End for}\\
\textbf{End for}\\
\textbf{Output}: $(\widecheck{\bm{C}}_1,\widecheck{\bm{R}}_1,\dots,\widecheck{\bm{C}}_n,\widecheck{\bm{R}}_n) = (\bm{C}_1^{(T)},\bm{R}_1^{(T)},\dots,\bm{C}_n^{(T)},\bm{R}_n^{(T)})$
\end{flushleft}\vspace{-0.1cm}
\end{algorithm}

Similarly to Assumption \ref{def:stability}, we consider the analogous gradient stability conditions, which leads to the convergence rates without homogeneity pursuit.

\begin{assumption}[Gradient Stability for Individualized Components]\label{def:stability2}
  For each $i$, there exist constants $\psi>0$, $\nu_{1i},\nu_{2i}>0$ such that
  \begin{equation}\label{eq:stability3}
    \begin{split}
        & \left\|\{\nabla_{\bm{C}_i}\overline{\mathcal{L}}_i(\bm{C}_i,\bm{R}_i) - \mathbb{E}[\nabla_{\bm{C}_i}\overline{\mathcal{L}}_i(\bm{C}_i,\bm{R}_i)]\}(\bm{R}_i^\top\bm{R}_i)^{-1/2}\right\|_\text{F}^2 \leq \psi\|\bm{B}_i-\bm{B}_i^*\|_\text{F}^2 + \nu_{1i},\\
        & \left\|\{\nabla_{\bm{R}_i}\overline{\mathcal{L}}_i(\bm{C}_i,\bm{R}_i) - \mathbb{E}[\nabla_{\bm{R}_i}\overline{\mathcal{L}}_i(\bm{C}_i,\bm{R}_i)]\}(\bm{C}_i^\top\bm{C}_i)^{-1/2}\right\|_\text{F}^2 \leq \psi\|\bm{B}_i-\bm{B}_i^*\|_\text{F}^2 + \nu_{2i}.
    \end{split}
  \end{equation}
\end{assumption}

\begin{corollary}[Convergence Rates without Homogeneity Pursuit]
    \label{cor:2}
    Under Assumptions \ref{def:RCG} and \ref{def:stability2}, if $\eta_i\asymp\alpha_i\beta_i^{-1}$, $\psi\lesssim\min_{1\leq i\leq n}\alpha_i^2m_i^2$, and the initial estimate error satisfies
    \begin{equation}
        \inf_{\bm{P}_1,\dots,\bm{P}_n\in\textup{GL}(r)}\left\{\left\|(\bm{C}_i^{(0)}\bm{P}_i-\bm{C}_i^*)\bm{\Sigma}_i^{1/2}\right\|_\textup{F}^2+\left\|(\bm{R}_i^{(0)}\bm{P}_i^{-\top}-\bm{R}_i^*)\bm{\Sigma}_i^{1/2}\right\|_\textup{F}^2\right\}\lesssim \alpha_i\beta_i^{-1}\sigma_r^2(\bm{\Sigma}_i),
    \end{equation}
    then given a sufficiently large number of iterations, for all $1\leq i\leq n$,
    \begin{equation}
        \begin{split}
            \|\widecheck{\bm{B}}_i - \bm{B}_i^*\|_\textup{F}^2  & \lesssim \frac{\nu_{1i}+\nu_{2i}}{\alpha_i^2 m_i^2},\\
            \|\widecheck{\bm{C}}_i\widecheck{\bm{P}}_i-\bm{C}_i^*\|_\textup{F}^2 + \|\widecheck{\bm{R}}_i\widecheck{\bm{P}}_i^{-\top}-\bm{R}_i^*\|_\textup{F}^2 & \lesssim \frac{\nu_{1i}+\nu_{2i}}{\sigma_r(\bm{\Sigma}_i)\alpha_i^2m_i^2},
        \end{split}
    \end{equation}
    where $\widecheck{\bm{B}}_i=\widecheck{\bm{C}}_i\widecheck{\bm{R}}_i^\top$ and $\widecheck{\bm{P}}_i\in\textup{GL}(r)$ is the optimal alignment matrix.
\end{corollary}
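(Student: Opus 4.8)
\textbf{Proof plan for Corollary \ref{cor:2}.} The plan is to reduce the claim to a single-individual scaled gradient descent analysis that is a (simpler) instance of the argument behind Theorem \ref{thm:1}, and then convert the resulting distance bound into the stated error rates. First I would note that the objective $\sum_{i=1}^n\overline{\mathcal{L}}_i$ and the updates in Algorithm \ref{alg:2} decouple completely across $i$, so it suffices to fix one individual and study the asymmetric rank-$r$ factorization $\bm{B}_i=\bm{C}_i\bm{R}_i^\top$ under scaled gradient descent with preconditioners $(\bm{R}_i^{(t)\top}\bm{R}_i^{(t)})^{-1}$ and $(\bm{C}_i^{(t)\top}\bm{C}_i^{(t)})^{-1}$; this is exactly the $n=1$ instance of the setup behind Theorem \ref{thm:1}, with the nested Tucker parameterization $\bm{C}\bm{L}_{1i}\bm{L}_{2i}^\top\bm{R}^\top$ replaced by the two-factor form $\bm{C}_i\bm{R}_i^\top$. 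Progress is measured by the per-individual distance $\textup{dist}_i(\bm{C}_i,\bm{R}_i)^2=\inf_{\bm{P}_i\in\textup{GL}(r)}\{\|(\bm{C}_i\bm{P}_i-\bm{C}_i^*)\bm{\Sigma}_i^{1/2}\|_\textup{F}^2+\|(\bm{R}_i\bm{P}_i^{-\top}-\bm{R}_i^*)\bm{\Sigma}_i^{1/2}\|_\textup{F}^2\}$, which is the quantity controlled by the initialization hypothesis. Because $\overline{\mathcal{L}}_i$ is a sum of $m_i$ identically distributed single-sample losses, $\mathbb{E}\nabla\overline{\mathcal{L}}_i=m_i\,\mathbb{E}\nabla\mathcal{L}$, so Assumption \ref{def:RCG} lifts to an RCG condition for the gradient of $\overline{\mathcal{L}}_i$ with curvature $m_i\alpha_i$ and smoothness $m_i\beta_i$; this $m_i$-inflation is the source of the $\alpha_i^2m_i^2$ factors in the final bounds.

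The core step is a one-step contraction for $\textup{dist}_i$. Fixing an iterate $(\bm{C}_i^{(t)},\bm{R}_i^{(t)})$ together with its optimal alignment $\bm{P}_i^{(t)}$, I would upper bound $\textup{dist}_i$ at step $t+1$ by inserting the same $\bm{P}_i^{(t)}$ (no longer optimal) into the infimum and expanding the aligned squared errors along the updates. Decomposing each preconditioned gradient into its population part plus a noise part, the population part yields the contraction: using the $m_i$-inflated RCG inequality, and the fact that inside the basin of attraction the Gram matrices $\bm{C}_i^{(t)\top}\bm{C}_i^{(t)}$, $\bm{R}_i^{(t)\top}\bm{R}_i^{(t)}$ are spectrally within a constant factor of $\bm{\Sigma}_i$ (recall the normalization $\bm{C}_i^{*\top}\bm{C}_i^*=\bm{R}_i^{*\top}\bm{R}_i^*=\bm{\Sigma}_i$), one obtains a contraction factor $1-c\alpha_i\beta_i^{-1}$ for the choice $\eta_i\asymp\alpha_i\beta_i^{-1}$, with the $\|\mathbb{E}\nabla\overline{\mathcal{L}}_i\|_\textup{F}^2$ term supplied by RCG absorbing the quadratic-in-$\eta_i$ remainder, exactly as in the proof of Theorem \ref{thm:1}. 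The noise part is controlled by the preconditioned stability bounds \eqref{eq:stability3}: the $\psi\|\bm{B}_i-\bm{B}_i^*\|_\textup{F}^2$ contributions are reabsorbed into the contraction thanks to $\psi\lesssim\alpha_i^2m_i^2$ together with the distance--error equivalence $\|\bm{B}_i-\bm{B}_i^*\|_\textup{F}^2\asymp\textup{dist}_i^2$ (a single-matrix version of Proposition \ref{prop:1}), while the additive constants $\nu_{1i},\nu_{2i}$ propagate through the step size and the preconditioner norms. Collecting these, one arrives at a geometric recursion for $\textup{dist}_i^2$ with per-step contraction factor $1-c\alpha_i\beta_i^{-1}$ and an additive statistical term, whose fixed point is of order $(\nu_{1i}+\nu_{2i})/(\alpha_i^2m_i^2)$.

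I would then run an induction on $t$ to confirm the iterates never leave the basin $\{\textup{dist}_i^2\lesssim\alpha_i\beta_i^{-1}\sigma_r^2(\bm{\Sigma}_i)\}$: the contracted term remains inside by the initialization hypothesis, and the statistical fixed point is of smaller order than $\alpha_i\beta_i^{-1}\sigma_r^2(\bm{\Sigma}_i)$ in the recoverable regime under which the model-specific stability constants of Section \ref{sec:statistical_models} are verified (left implicit here, exactly as in the hypotheses of Theorem \ref{thm:1}). Unrolling the one-step bound over $T$ iterations gives $\textup{dist}_i(\widecheck{\bm{C}}_i,\widecheck{\bm{R}}_i)^2\le(1-c\alpha_i\beta_i^{-1})^T\,\textup{dist}_i(\bm{C}_i^{(0)},\bm{R}_i^{(0)})^2+C(\nu_{1i}+\nu_{2i})/(\alpha_i^2m_i^2)$, so for $T$ sufficiently large the geometric term is dominated and $\textup{dist}_i(\widecheck{\bm{C}}_i,\widecheck{\bm{R}}_i)^2\lesssim(\nu_{1i}+\nu_{2i})/(\alpha_i^2m_i^2)$. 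The first displayed bound then follows from $\|\widecheck{\bm{B}}_i-\bm{B}_i^*\|_\textup{F}^2\asymp\textup{dist}_i(\widecheck{\bm{C}}_i,\widecheck{\bm{R}}_i)^2$; for the second, the inequality $\|\bm{M}\bm{\Sigma}_i^{1/2}\|_\textup{F}^2\ge\sigma_r(\bm{\Sigma}_i)\|\bm{M}\|_\textup{F}^2$ applied with $\bm{M}=\widecheck{\bm{C}}_i\widecheck{\bm{P}}_i-\bm{C}_i^*$ and $\bm{M}=\widecheck{\bm{R}}_i\widecheck{\bm{P}}_i^{-\top}-\bm{R}_i^*$ gives $\|\widecheck{\bm{C}}_i\widecheck{\bm{P}}_i-\bm{C}_i^*\|_\textup{F}^2+\|\widecheck{\bm{R}}_i\widecheck{\bm{P}}_i^{-\top}-\bm{R}_i^*\|_\textup{F}^2\le\sigma_r(\bm{\Sigma}_i)^{-1}\textup{dist}_i(\widecheck{\bm{C}}_i,\widecheck{\bm{R}}_i)^2\lesssim(\nu_{1i}+\nu_{2i})/(\sigma_r(\bm{\Sigma}_i)\alpha_i^2m_i^2)$.

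I expect the main obstacle to be the one-step contraction itself, specifically controlling the iterate-dependent preconditioners $(\bm{C}_i^{(t)\top}\bm{C}_i^{(t)})^{-1}$, $(\bm{R}_i^{(t)\top}\bm{R}_i^{(t)})^{-1}$ and the moving optimal alignments $\bm{P}_i^{(t)}$: one must verify that along the trajectory the Gram matrices stay within a constant factor of the population scale $\bm{\Sigma}_i$ and that replacing $\bm{P}_i^{(t+1)}$ by $\bm{P}_i^{(t)}$ loses only a constant. Since this is precisely the mechanism already developed in the proof of Theorem \ref{thm:1} (there for the loading-factor updates), Corollary \ref{cor:2} is obtained by specializing that analysis to $n=1$ with the two-factor parameterization, and the routine computations need not be repeated.
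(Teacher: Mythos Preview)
Your proposal is correct and follows essentially the same approach as the paper: the paper groups the proof of Corollary~\ref{cor:2} together with that of Corollary~\ref{cor:1} in Appendix~\ref{sec:A.2} and treats it as an immediate specialization of the Theorem~\ref{thm:1} machinery to the decoupled, two-factor matrix case, exactly as you outline. Your observations about the $m_i$-inflation of the RCG constants, the per-individual distance metric, the one-step contraction via preconditioned updates, and the final conversion using $\|\bm{M}\bm{\Sigma}_i^{1/2}\|_\textup{F}^2\geq\sigma_r(\bm{\Sigma}_i)\|\bm{M}\|_\textup{F}^2$ are all on target and mirror the paper's (largely implicit) derivation.
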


As shown in Corollary \ref{cor:2}, the estimation error of each $\widecheck{\bm{B}}_i$ scales as $(\nu_{1i}+\nu_{2i})/(\alpha_i^2m_i^2)$, where ignoring shared structure prevents the $O(n^{-1})$ error reduction achievable under homogeneity pursuit. Therefore, the homogeneity pursuit framework leads to significant statistical gains, both in terms of individual parameter estimation and recovery of shared structures, by leveraging the common low-rank subspaces across individuals. These advantages are formally reflected in the improved convergence rates and are further supported by the algorithm computational and scalability benefits.

\subsection{Initialization, Rank Selection, and Implementation}\label{sec:3.4}

The performance of the proposed Scaled Gradient Descent algorithms depends critically on the initialization quality, as well as the correct specification of the low-rank parameters $r$, $K_1$, and $K_2$. In this subsection, we describe practical strategies for initializing the shared subspaces $\bm{C}$ and $\bm{R}$, as well as methods for selecting the ranks $r$, $K_1$, and $K_2$.

When the ranks $r$, $K_1$, and $K_2$ are known or have been pre-selected, we describe a data-driven approach to initialize the shared column and row subspaces, represented by $\bm{C}\in\mathbb{R}^{p_1\times K_1}$ and $\bm{R}\in\mathbb{R}^{p_2\times K_2}$, respectively. Given preliminary estimates $\{\widecheck{\bm{C}}_i,\widecheck{\bm{R}}_i\}_{i=1}^n$ (e.g., from the heterogeneous baseline Algorithm \ref{alg:2}), we construct the aggregate matrices
\begin{equation}\label{eq:M_C_R}
    \bm{M}_C = \sum_{i=1}^n\widecheck{\bm{C}}_i\widecheck{\bm{C}}_i^\top\quad\bm{M}_R = \sum_{i=1}^n\widecheck{\bm{R}}_i\widecheck{\bm{R}}_i^\top,
\end{equation}
and take their top $K_1$ and $K_2$ eigenvectors (corresponding to the largest eigenvalues) to form the initial estimates $\bm{C}^{(0)}$ and $\bm{R}^{(0)}$. The individual loading matrices are initialized as $\bm{L}_{1i}^{(0)}=\bm{C}^{(0)\top}\widecheck{\bm{C}}_i$ and $\bm{L}_{2i}^{(0)}=\bm{R}^{(0)\top}\widecheck{\bm{R}}_i$.

To select $r$, one may pre-determine a rank upper bound $\bar{r}$ satisfying $r<\bar{r}\ll \bar{p}$, and obtain the initial estimators $\widecheck{\bm{B}}_i(\bar{r})$ based on $\bar{r}$. Then, we can apply the cumulative ridge-type ratio estimator to select $r$
\begin{equation}\label{eq:ridge_1}
    \widehat{r}={\arg\max}_{1\leq r\leq \bar{r}-1}\frac{\sum_{i=1}^n\sigma_{r}(\widecheck{\bm{B}}_i(\bar{r})) + \delta_1(n,m,\bar{p})}{\sum_{i=1}^n\sigma_{r+1}(\widecheck{\bm{B}}_i(\bar{r})) + \delta_1(n,m,\bar{p})},
\end{equation}
where $\delta_1(n,m,\bar{p})$ is the ridge parameter depending on $n$, $m$ and $\bar{p}$.

In real practice, given $r$, it is normal to assume both $K_1$ and $K_2$ are small, say not greater than $Cr$ for some constant $C$, for better interpretability.
To select $K_1$ and $K_2$, we can also apply another ridge-type ratio estimator to the eigenvalues of $\bm{M}_C$ and $\bm{M}_R$
\begin{equation}\label{eq:ridge_21}
  \widehat{K}_1 = {\arg\max}_{1\leq k\leq Cr}\frac{\lambda_{k}(\bm{M}_C) + \delta_2(n,m,p_1)}{\lambda_{r+1}(\bm{M}_C) + \delta_2(n,m,p_1)},
\end{equation}
and
\begin{equation}\label{eq:ridge_22}
  \widehat{K}_2 = {\arg\max}_{1\leq k\leq Cr}\frac{\lambda_{k}(\bm{M}_R) + \delta_2(n,m,p_2)}{\lambda_{r+1}(\bm{M}_R) + \delta_2(n,m,p_2)},
\end{equation}
where $\delta_2(n,m,p)$ is another ridge parameter. The theoretical justification of the above initialization and rank determination method will be given for specific models in Section \ref{sec:statistical_models}.

\section{Sparsity and Scaled Hard Thresholding}\label{sec:sparsity}

\subsection{Sparsity in Homogeneous Subspaces}

The homogeneity pursuit framework decomposes the individual regression coefficient matrices into a product of shared components ($\bm{C}$, $\bm{R}$) and individualized components ($\bm{L}_{1,i}$, $\bm{L}_{2,i}$). While the individualized components lie in low-dimensional spaces and are typically estimated with high accuracy, the shared components $\bm{C}\in\mathbb{R}^{p_1\times K_1}$ and $\bm{R}\in\mathbb{R}^{p_2\times K_2}$ remain high-dimensional when $p_1$ or $p_2$ is large.

To enhance interpretability, estimation efficiency, and computational scalability, we impose row-wise sparsity on the shared components $\bm{C}$ and $\bm{R}$. Specifically, we assume that only a subset of rows in $\bm{C}$ and $\bm{R}$ are nonzero, thereby identifying a small number of informative covariate dimensions that contribute to the response across all individuals.

Formally, let $S_1\subseteq\{1,\dots,p_1\}$ and $S_2\subseteq\{1,\dots,p_2\}$ denote the index sets of nonzero rows in $\bm{C}$ and $\bm{R}$, respectively. The sparsity structure satisfies two key properties. First, the sparsity pattern is invariant under transformations of the form $(\bm{C},\bm{R})\to(\bm{C}\bm{Q}_1,\bm{R}\bm{Q}_2)$ where $\bm{Q}_1\in\text{GL}(K_1)$ and $\bm{Q}_2\in\text{GL}(K_2)$. This ensures that the sparsity is well-defined on the underlying latent subspaces. Second, the active rows in $\bm{C}$ and $\bm{R}$ correspond to a subset of informative covariate dimensions that are predictive of the response $y_{ij}$. Specifically, the inner product
    \begin{equation}
         \langle\bm{X}_{ij},\bm{C}\bm{L}_{1i}\bm{L}_{2i}^\top\bm{R}^\top\rangle = \langle\bm{C}^\top_{S_1}(\bm{X}_{ij})_{[S_1,S_2]}\bm{R}_{S_2},\bm{L}_{1i}\bm{L}_{2i}^\top\rangle,
     \end{equation} 
where $(\bm{X}_{ij})_{[S_1,S_2]}$ denotes the submatrix of $\bm{X}_{ij}$ restricted to rows in $S_1$ and columns in $S_2$, and $\bm{C}_{S_1}$ and $\bm{R}_{S_2}$ are the corresponding submatrices of $\bm{C}$ and $\bm{R}$ restricted to rows in $S_1$ and $S_2$, respectively. This induces a reduced effective dimensionality in the shared components, improving both statistical and computational performance.

\subsection{Scaled Hard Thresholding for Structured Sparsity}

To induce sparsity in $\bm{C}$ and $\bm{R}$, a natural approach is to apply hard thresholding to the rows of these matrices during the optimization process. Specifically, for a matrix $\bm{M}\in\mathbb{R}^{d\times K}$, we define the hard thresholding operator $\text{HT}(\bm{M},s):\mathbb{R}^{d\times K}\times\mathbb{R}^+\to\mathbb{R}^{d\times K}$ as the operator that retains the $s$ rows of $\bm{M}$ with the largest Euclidean norms and sets the remaining rows to zero.

However, directly applying hard thresholding to $\bm{C}$ and $\bm{R}$ is problematic due to the non-uniqueness of the tensor decomposition. In particular, for equivalent decompositions $\bm{\Theta}=(\bm{C},\bm{R},\{\bm{L}_{1i},\bm{L}_{2i}\}_{i=1}^n)$ and $\bm{\Theta}'=(\bm{C}\bm{Q}_1,\bm{R}\bm{Q}_2,\{\bm{L}_{1i},\bm{L}_{2i}\}_{i=1}^n)$, the set of rows with largest norms may differ, leading to inconsistent sparsity patterns across equivalent solutions.

To address this, we apply hard thresholding to scaled versions of $\bm{C}$ and $\bm{R}$ that are invariant under transformations. Specifically, we consider the transformed terms $\bm{C}(\widetilde{\bm{C}}^\top\widetilde{\bm{C}})^{1/2}$ and $\bm{R}(\widetilde{\bm{R}}^\top\widetilde{\bm{R}})^{1/2}$, where $\widetilde{\bm{C}}$ and $\widetilde{\bm{R}}$ are the auxiliary variables defined in Section \ref{sec:estimation_methodology}. The row-wise Euclidean norms of these transformed matrices remain invariant under equivalent decompositions, ensuring that the sparsity structure is consistent across the equivalent class of parameters.

The scaled hard thresholding procedure is then followed by a renormalization step to restore the original scale:
\begin{equation}
    \bm{C} \to \text{HT}(\bm{C}(\widetilde{\bm{C}}^\top\widetilde{\bm{C}})^{1/2},s_1)(\widetilde{\bm{C}}^\top\widetilde{\bm{C}})^{-1/2},~~
    \bm{R} \to \text{HT}(\bm{R}(\widetilde{\bm{R}}^\top\widetilde{\bm{R}})^{1/2},s_2)(\widetilde{\bm{R}}^\top\widetilde{\bm{R}})^{-1/2}.
\end{equation}
This approach is incorporated into the scaled gradient descent algorithm as shown in Algorithm \ref{alg:3}, where $s_1$ and $s_2$ denote the sparsity levels (i.e., the number of nonzero rows) for $\bm{C}$ and $\bm{R}$, respectively.

In the following analysis, denote $s_1^*$ and $s_2^*$ as the true row-wise sparsity levels of $\bm{C}^*$ and $\bm{R}^*$, respectively. To establish theoretical guarantees, we introduce a gradient stability condition tailored to the sparsity structure included by hard thresholding.

\begin{assumption}[Gradient Stability on Sparse Sets]
\label{def:sparse_stability}
Given sparsity levels $s_1$ and $s_2$, the partial gradients are stable on sparse index sets in the sense that there exist some parameters $\phi>0$ and $\xi_{C,s_1},\xi_{R,s_2},\xi_{11},\xi_{21},\dots,\xi_{1n},\xi_{2n}>0$ such that
\begin{equation}
    \begin{split}
        & \left\|\{(\nabla_{\bm{C}}\mathcal{L}_n(\bm{\Theta}) - \mathbb{E}[\nabla_{\bm{C}}\mathcal{L}_n(\bm{\Theta}))(\widetilde{\bm{C}}^\top\widetilde{\bm{C}})^{-1/2}]\}_{S_1}\right\|_\textup{F}^2 \leq \phi \|\cm{B}-\cm{B}^*\|_\textup{F}^2 + \xi_{C,s_1},\\
        & \left\|\{(\nabla_{\bm{R}}\mathcal{L}_n(\bm{\Theta}) - \mathbb{E}[\nabla_{\bm{R}}\mathcal{L}_n(\bm{\Theta}))(\widetilde{\bm{R}}^\top\widetilde{\bm{R}})^{-1/2}]\}_{S_2}\right\|_\textup{F}^2 \leq \phi \|\cm{B}-\cm{B}^*\|_\textup{F}^2 + \xi_{R,s_2},\\
        & \left\|(\bm{C}^\top\bm{C})^{-1/2}(\nabla_{\bm{L}_{1i}}\mathcal{L}_n(\bm{\Theta})-\mathbb{E}[\nabla_{\bm{L}_{1i}}\mathcal{L}_n(\bm{\Theta})])(\bm{L}_{2i}^\top\bm{R}^\top\bm{R}\bm{L}_{2i})^{-1/2} \right\|_\textup{F}^2 \leq \phi\|\bm{B}_i-\bm{B}_i^*\|_\textup{F}^2 + \xi_{1i},\\
        & \left\|(\bm{R}^\top\bm{R})^{-1/2}(\nabla_{\bm{L}_{2i}}\mathcal{L}_n(\bm{\Theta})-\mathbb{E}[\nabla_{\bm{L}_{2i}}\mathcal{L}_n(\bm{\Theta})])(\bm{L}_{1i}^\top\bm{C}^\top\bm{C}\bm{L}_{1i})^{-1/2} \right\|_\textup{F}^2 \leq \phi\|\bm{B}_i-\bm{B}_i^*\|_\textup{F}^2 + \xi_{2i},
    \end{split}
\end{equation}
uniformly over all index sets $S_1$ and $S_2$ with $\textup{card}(S_1)\leq s_1+s_1^*$ and $\textup{card}(S_2)\leq s_2+s_2^*$.
\end{assumption}

\begin{theorem}\label{thm:2}
    Under the Assumptions \ref{def:RCG} and \ref{def:sparse_stability}, and provided that the step size, sparsity level, and initial error satisfy 
    \begin{equation}
      \begin{split}
        \eta\asymp\ubar{\alpha}/\bar{\beta},\quad\phi\lesssim\ubar{\alpha}^2m^2,\quad\sqrt{s_1^*/(s_1-s_1^*)}+\sqrt{s_2^*/(s_2-s_2^*)}\lesssim\ubar{\alpha}\bar{\beta}^{-1},\quad\text{and}\\
        \textup{dist}(\bm{\Theta}^{(0)},\bm{\Theta}^*)\lesssim\ubar{\alpha}^{1/2}\bar{\beta}^{-1/2}\ubar{\sigma},
      \end{split}
    \end{equation}
    then the iterates of scaled gradient descent algorithm with scaled hard thresholding satisfy
    \begin{equation}
        \begin{split}
            \textup{dist}(\bm{\Theta}^{(t)},\bm{\Theta}^*)^2 & \leq (1-C\ubar{\alpha}\bar{\beta}^{-1})^t\cdot\textup{dist}(\bm{\Theta}^{(0)},\bm{\Theta}^*)^2 + C\ubar{\alpha}^{-2}m^{-2}\xi_s,\\
            \|\cm{B}^{(t)}-\cm{B}^*\|_\textup{F}^2 & \leq C(1-C\ubar{\alpha}\bar{\beta}^{-1})^t\cdot\textup{dist}(\bm{\Theta}^{(0)},\bm{\Theta}^*)^2 + C\ubar{\alpha}^{-2}m^{-2}\xi_s,
        \end{split}
    \end{equation}
    where $\xi_s=\xi_{C,s_1}+\xi_{R,s_2}+\sum_{i=1}^n(\xi_{1i}+\xi_{2i})$.
\end{theorem}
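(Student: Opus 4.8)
The plan is to follow the same one-step contraction argument used for Theorem \ref{thm:1}, but to carefully track the additional error introduced by the scaled hard thresholding step applied to $\bm{C}$ and $\bm{R}$. Concretely, for a fixed iteration $t$, I would first consider the intermediate iterate $\bm{\Theta}^{(t+1/2)}$ obtained from the scaled gradient step alone (before thresholding), and then the final iterate $\bm{\Theta}^{(t+1)}$ obtained by applying $\text{HT}(\cdot,s_1)$ and $\text{HT}(\cdot,s_2)$ to the scaled versions of $\bm{C}$ and $\bm{R}$. By the triangle inequality for the distance metric, $\text{dist}(\bm{\Theta}^{(t+1)},\bm{\Theta}^*) \le \text{dist}(\bm{\Theta}^{(t+1/2)},\bm{\Theta}^*) + \text{dist}(\bm{\Theta}^{(t+1)},\bm{\Theta}^{(t+1/2)})$, so it suffices to (i) reuse the gradient-step contraction from Theorem \ref{thm:1} — which goes through verbatim under Assumption \ref{def:RCG} and the sparse gradient stability Assumption \ref{def:sparse_stability}, since the latter implies the ordinary stability bounds on the relevant index sets — and (ii) bound the thresholding perturbation.

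For step (ii), the key fact is the standard oracle property of hard thresholding onto $s$-sparse sets: if $\bm{M}^*$ is $s^*$-row-sparse and $\widehat{\bm{M}}$ is arbitrary, then $\|\text{HT}(\widehat{\bm{M}},s) - \bm{M}^*\|_\text{F}^2 \le \left(1 + \frac{2\sqrt{s^*/(s-s^*)}}{1}\right)\|\widehat{\bm{M}}-\bm{M}^*\|_\text{F}^2$ (or the sharper $(1+O(\sqrt{s^*/(s-s^*)}))$ constant). I would apply this to the invariant scaled matrices $\bm{C}(\widetilde{\bm{C}}^\top\widetilde{\bm{C}})^{1/2}$ and $\bm{R}(\widetilde{\bm{R}}^\top\widetilde{\bm{R}})^{1/2}$, using that $\bm{C}^*$ is $s_1^*$-sparse and $\bm{R}^*$ is $s_2^*$-sparse and that the scaled quantities $\bm{C}^*(\widetilde{\bm{C}}^{*\top}\widetilde{\bm{C}}^*)^{1/2}$ inherit the same row support. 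The renormalization step then re-expresses this bound in terms of $\|(\bm{C}^{(t+1)}\bm{Q}_1 - \bm{C}^*)\bm{\Sigma}_C\|_\text{F}$-type quantities appearing in the distance; here I would invoke the equivalence between $(\widetilde{\bm{C}}^\top\widetilde{\bm{C}})^{1/2}$-weighting and $\bm{\Sigma}_C$-weighting that holds throughout the basin of attraction (a consequence of $\text{dist}(\bm{\Theta}^{(t)},\bm{\Theta}^*) \lesssim \ubar{\sigma}$, which keeps all iterates well-conditioned). The net effect is that the thresholding contributes a multiplicative factor $1 + C(\sqrt{s_1^*/(s_1-s_1^*)} + \sqrt{s_2^*/(s_2-s_2^*)})$ to the distance after the gradient step.

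Combining (i) and (ii): if the gradient step gives $\text{dist}(\bm{\Theta}^{(t+1/2)},\bm{\Theta}^*)^2 \le (1 - C_0\ubar{\alpha}\bar{\beta}^{-1})\,\text{dist}(\bm{\Theta}^{(t)},\bm{\Theta}^*)^2 + C\ubar{\alpha}^{-2}m^{-2}\xi_s$, then the thresholding inflates the contraction factor to $(1 - C_0\ubar{\alpha}\bar{\beta}^{-1})(1 + C(\sqrt{s_1^*/(s_1-s_1^*)} + \sqrt{s_2^*/(s_2-s_2^*)}))$. The sparsity condition $\sqrt{s_1^*/(s_1-s_1^*)} + \sqrt{s_2^*/(s_2-s_2^*)} \lesssim \ubar{\alpha}\bar{\beta}^{-1}$ is exactly what is needed to ensure this product is still of the form $1 - C\ubar{\alpha}\bar{\beta}^{-1} < 1$ (absorbing the cross term $\ubar{\alpha}^2\bar{\beta}^{-2}$ into the linear term, since $\ubar{\alpha} \le \bar{\beta}$), which also confirms the iterate stays in the basin of attraction so the induction closes. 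Iterating over $t$ yields the geometric-series bound with limiting statistical error $C\ubar{\alpha}^{-2}m^{-2}\xi_s$; the tensor error bound then follows from Proposition \ref{prop:1}. I expect the main obstacle to be step (ii): making the thresholding-perturbation bound genuinely compatible with the non-identifiable, $\bm{\Sigma}$-weighted distance — one must verify that the optimal alignment transformations $(\bm{Q}_1,\bm{Q}_2,\bm{P}_i)$ can be chosen consistently before and after thresholding, and that the scaled matrices to which $\text{HT}$ is applied have row supports compatible with $S_1^*$, $S_2^*$ after alignment, so that the oracle inequality is being applied to the right objects.
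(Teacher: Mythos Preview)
Your proposal is correct and matches the paper's approach: analyze the thresholding step as a multiplicative inflation $(1+C\sqrt{s^*/(s-s^*)})$ of the post-gradient distance via the hard-thresholding oracle inequality applied to the invariant scaled matrices, combine with the Theorem~\ref{thm:1} contraction on the gradient step (restricted to the active support so that Assumption~\ref{def:sparse_stability} supplies the $\xi_{C,s_1},\xi_{R,s_2}$ terms), and absorb the inflation into the contraction factor using the sparsity condition. The only minor caveat is that the initial triangle-inequality framing is not the right mechanism---as your own subsequent paragraphs correctly implement, the thresholding bound is multiplicative in $\text{dist}(\bm{\Theta}^{(t+1/2)},\bm{\Theta}^*)^2$, not additive---and the paper resolves the alignment issue you flag by fixing the optimal transformations $(\bm{Q}_{1,t+0.5},\bm{Q}_{2,t+0.5},\bm{P}_{i,t+0.5})$ at the intermediate iterate and using them for both the pre- and post-thresholding comparison, together with the basin-of-attraction bound $\|\bm{\Sigma}_C^{-1}\bm{Q}^{-1}(\widetilde{\bm{C}}^\top\widetilde{\bm{C}})^{1/2}\|\cdot\|(\widetilde{\bm{C}}^\top\widetilde{\bm{C}})^{-1/2}\bm{Q}\bm{\Sigma}_C\|\le 1+C\epsilon^2$ that you anticipated.
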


Theorem \ref{thm:2} presents the convergence guarantees of the scaled gradient descent algorithm with scaled hard thresholding given some sufficient conditions. Compared to Theorem \ref{thm:1}, the statisical convergence rates now depend on the estimation accuracy on the sparse index sets, captured by $\xi_s$, which is substantially smaller than $\xi$. The convergence rates of the shared and individualized components can be derived analogously.

\begin{algorithm}[!htp]
\caption{Scaled Gradient Descent Algorithm with Scaled Hard Thresholding}
\label{alg:3}
\begin{flushleft}
\linespread{1.75}\selectfont
\textbf{Input}: $\bm{C}^{(0)}$, $\bm{R}^{(0)}$, $\bm{L}_{1i}^{(0)}$, $\bm{L}_{2i}^{(0)}$ for $i=1,\dots,n$, step size $\eta$, no. of iterations $T$, sparsity levels $s_1$ and $s_2$\\
\textbf{for} $t=0,\dots,T-1$\\
\hspace*{1cm}$\bm{C}^{(t+0.5)}\leftarrow \bm{C}^{(t)} - \eta\nabla_{\bm{C}}\mathcal{L}_n(\bm{\Theta}^{(t)})\left(\widetilde{\bm{C}}^{(t)\top}\widetilde{\bm{C}}^{(t)}\right)^{-1}$\\
\hspace*{1cm}$\bm{R}^{(t+0.5)}\leftarrow \bm{R}^{(t)} - \eta\nabla_{\bm{R}}\mathcal{L}_n(\bm{\Theta}^{(t)})\left(\widetilde{\bm{R}}^{(t)\top}\widetilde{\bm{R}}^{(t)}\right)^{-1}$\\
\hspace*{1cm}\textbf{for} $i=1,\dots,n$ in parallel\\
\hspace*{2cm}$\bm{L}_{1i}^{(t+1)}\leftarrow\bm{L}_{1i}^{(t)}-\eta\left(\bm{C}^{(t)\top}\bm{C}^{(t)}\right)^{-1}\nabla_{\bm{L}_{1i}}\mathcal{L}_n(\bm{\Theta}^{(t)})\left(\bm{L}_{2i}^{(t)\top}\bm{R}^{(t)\top}\bm{R}^{(t)}\bm{L}_{2i}^{(t)}\right)^{-1}$\\
\hspace*{2cm}$\bm{L}_{2i}^{(t+1)}\leftarrow\bm{L}_{2i}^{(t)}-\eta\left(\bm{R}^{(t)\top}\bm{R}^{(t)}\right)^{-1}\nabla_{\bm{L}_{2i}}\mathcal{L}_n(\bm{\Theta}^{(t)})\left(\bm{L}_{1i}^{(t)\top}\bm{C}^{(t)\top}\bm{C}^{(t)}\bm{L}_{2i}^{(t)}\right)^{-1}$\\
\hspace*{1cm}\textbf{end for}\\
\hspace*{1cm}$\bm{C}^{(t+1)}\leftarrow \text{HT}\left(\bm{C}^{(t+0.5)}\left(\widetilde{\bm{C}}^{(t+0.5)\top}\widetilde{\bm{C}}^{(t+0.5)}\right)^{1/2},s_1\right)\left(\widetilde{\bm{C}}^{(t+0.5)\top}\widetilde{\bm{C}}^{(t+0.5)}\right)^{-1/2}$\\
\hspace*{1cm}$\bm{R}^{(t+1)}\leftarrow \text{HT}\left(\bm{R}^{(t+0.5)}\left(\widetilde{\bm{R}}^{(t+0.5)\top}\widetilde{\bm{R}}^{(t+0.5)}\right)^{1/2},s_2\right)\left(\widetilde{\bm{R}}^{(t+0.5)\top}\widetilde{\bm{R}}^{(t+0.5)}\right)^{-1/2}$\\
\textbf{end for}
\end{flushleft}\vspace{-0.1cm}
\end{algorithm}

\subsection{Initialization and Rank Selection with Sparsity}

If homogeneity is ignored, the scaled gradient descent algorithm with scaled hard thresholding can be extended in a straightforward manner. For each individual $i$, after taking the scaled gradient descent update for $\bm{C}_i$ and $\bm{R}_i$, we apply the hard thresholding to the scaled components $\bm{C}_i(\bm{R}_i^\top\bm{R}_i)^{1/2}$ and $\bm{R}_i(\bm{C}_i^\top\bm{C}_i)^{1/2}$, followed by the de-scaled factors $(\bm{R}_i^\top\bm{R}_i)^{-1/2}$ and $(\bm{C}_i^\top\bm{C}_i)^{-1/2}$, respectively, to restore the magnitudes. The resulting updates are summarized in Algorithm \ref{alg:4}.

\begin{algorithm}[!htp]
\caption{Scaled Gradient Descent Algorithm with Scaled Hard Thresholding for Heterogeneous Model}
\label{alg:4}
\begin{flushleft}
\linespread{1.75}\selectfont
\textbf{input}: initial values $\bm{C}_i^{(0)}$, $\bm{R}_i^{(0)}$ for $i=1,\dots,n$, step sizes $\eta_i$, no. of iterations $T$, sparsity levels $(s_1,s_2)$\\
\textbf{for} $i=1,\dots,n$ in parallel\\
\hspace*{1cm}\textbf{for} $t=0,\dots,T-1$\\
\hspace*{2cm}$\bm{C}_{i}^{(t+0.5)}\leftarrow\bm{C}_{i}^{(t)}-\eta_i\cdot\nabla_{\bm{C}_{i}}\overline{\mathcal{L}}_i(\bm{C}_{i}^{(t)},\bm{R}_i^{(t)})\left(\bm{R}_i^{(t)\top}\bm{R}_i^{(t)}\right)^{-1}$\\
\hspace*{2cm}$\bm{R}_{i}^{(t+0.5)}\leftarrow\bm{R}_{i}^{(t)}-\eta_i\cdot\nabla_{\bm{R}_{i}}\overline{\mathcal{L}}_i(\bm{C}_{i}^{(t)},\bm{R}_i^{(t)})\left(\bm{C}_i^{(t)\top}\bm{C}_i^{(t)}\right)^{-1}$\\
\hspace*{2cm}$\bm{C}_i^{(t+1)}\leftarrow \text{HT}\left(\bm{C}_i^{(t+0.5)}\left(\bm{R}_i^{(t+0.5)\top}\bm{R}_i^{(t+0.5)}\right)^{1/2},s_1\right)\left(\bm{R}_i^{(t+0.5)\top}\bm{R}_i^{(t+0.5)}\right)^{-1/2}$\\
\hspace*{2cm}$\bm{R}_i^{(t+1)}\leftarrow \text{HT}\left(\bm{R}^{(t+0.5)}\left(\bm{C}_i^{(t+0.5)\top}\bm{C}_i^{(t+0.5)}\right)^{1/2},s_2\right)\left(\bm{C}_i^{(t+0.5)\top}\bm{C}_i^{(t+0.5)}\right)^{-1/2}$\\
\hspace*{1cm}\textbf{end for}\\
\textbf{end for}\\
\textbf{return} $(\widecheck{\bm{C}}_1,\widecheck{\bm{R}}_1,\dots,\widecheck{\bm{C}}_n,\widecheck{\bm{R}}_n) = (\bm{C}_1^{(T)},\bm{R}_1^{(T)},\dots,\bm{C}_n^{(T)},\bm{R}_n^{(T)})$
\end{flushleft}\vspace{-0.1cm}
\end{algorithm}

This algorithm can be used for initialization with sparsity structures on $\bm{C}_i$ and $\bm{R}_i$. Then, after obtaining the sparse $\widecheck{\bm{C}}_i$ and $\widecheck{\bm{R}}_i$, for $i=1,\dots,n$, we can construct the cumulative subspace estimator $\bm{M}_C$ and $\bm{M}_R$ as in \eqref{eq:M_C_R}. Taking the top $K_1$ and $K_2$ eigenvectors of $\bm{M}_C$ and $\bm{M}_R$, we can obtain the initial values of $\bm{C}^{(0)}$ and $\bm{R}^{(0)}$ and the corresponding loading factors $\bm{L}_{1i}^{(0)}=\bm{C}^{(0)\top}\widecheck{\bm{C}}_i$ and $\bm{L}_{2i}^{(0)}=\bm{R}^{(0)\top}\widecheck{\bm{R}}_i$.

For rank selection, we can also consider a rank upper bound $\bar{r}$ and use the ridge-type ratio estimator
\begin{equation}\label{eq:ridge1}
    \widehat{r}={\arg\max}_{1\leq r\leq \bar{r}-1}\frac{\sum_{i=1}^n\sigma_{r}(\widecheck{\bm{B}}_i(\bar{r})) + \delta_1(n,m,\bar{s})}{\sum_{i=1}^n\sigma_{r+1}(\widecheck{\bm{B}}_i(\bar{r})) + \delta_1(n,m,\bar{s})},
\end{equation}
where the ridge parameter is replaced by $\delta_1(n,m,\bar{s})$ depending on $\bar{s}=\max(s_1,s_2)$. Similarly, after selecting $r$, we can also apply the ridge-type ratio estimators 
\begin{equation}\label{eq:ridge21}
  \widehat{K}_1 = {\arg\max}_{1\leq k\leq Cr}\frac{\lambda_{k}(\bm{M}_C) + \delta_2(n,m,\bar{s})}{\lambda_{r+1}(\bm{M}_C) + \delta_2(n,m,\bar{s})},
\end{equation}
and
\begin{equation}\label{eq:ridge22}
  \widehat{K}_2 = {\arg\max}_{1\leq k\leq Cr}\frac{\lambda_{k}(\bm{M}_R) + \delta_2(n,m,\bar{s})}{\lambda_{r+1}(\bm{M}_R) + \delta_2(n,m,\bar{s})},
\end{equation}
where $\delta_2(n,m,\bar{s})$ is the ridge parameter related to the sparsity level $\bar{s}$.

\section{A Couple of Specific Examples}\label{sec:statistical_models}

\subsection{Linear Trace Regression}\label{sec:5.1}

We first consider the linear trace regression setting, in which repeated observations are available for each of $n$ individuals. For invidual $i$ and observation $j$, the response is given by
\begin{equation}
    Y_{ij} = \langle\bm{X}_{ij},\bm{B}_i^*\rangle + \varepsilon_{ij},~\text{for}~j=1,\dots,m_i,~\text{and}~i=1,\dots,n,
\end{equation}
where $\varepsilon_{ij}$ is a mean-zero noise term. The coefficient matrices $\bm{B}_i^*$ are assumed to follow a low-rank structure with shared and individualized components, as introduced previously. Specifically, the coefficien matrix for individual $i$ admits the decomposition
\begin{equation}
  \bm{B}_i^*=\bm{C}\bm{L}_{1i}^*\bm{L}_{2i}^{*\top}\bm{R}^\top,
\end{equation}
where $\bm{C}\in\mathbb{R}^{p_1\times K_1}$ and $\bm{R}\in\mathbb{R}^{p_2\times K_2}$ are the shared components across individuals, and $\bm{L}_{1i}\in\mathbb{R}^{K_1\times r}$ and $\bm{L}_2\in\mathbb{R}^{K_2\times r}$ are the components specific to individual $i$.

To establish theoretical guarantees, we impose the following assumptions.
\begin{assumption}[Sub-Gaussian Covariates and Noise]\label{asmp:LinearModel1}
    The matrix-valued covariate $\bm{X}_{ij}$ is $\kappa^2$-sub-Gaussian, i.e., for any $\lambda>0$,
    \begin{equation}
        \max_{1\leq i\leq n}\sup_{\|\bm{v}_1\|_2=\|\bm{v}_2\|_2=1}\mathbb{E}[\exp(\lambda\bm{v}_1^\top\bm{X}_{ij}\bm{v}_2)]\leq\lambda^2\kappa^2/2.
    \end{equation}
    For $i=1,\dots,n$, the variance of $\bm{X}_{ij}$ satisfies
    \begin{equation}
      \alpha_{x,i}\leq \lambda_{\min}(\mathbb{E}[\textup{vec}(\bm{X}_{ij})\textup{vec}(\bm{X}_{ij})^\top]) \leq \lambda_{\max}(\mathbb{E}[\textup{vec}(\bm{X}_{ij})\textup{vec}(\bm{X}_{ij})^\top])\leq \beta_{x,i}.
    \end{equation}
    Moreover, the noise terms $\varepsilon_{ij}$ are mean-zero and $\sigma^2$-sub-Gaussian, 
    \begin{equation}
      \mathbb{E}[\exp(\lambda\varepsilon_{ij})]\leq \lambda^2\sigma^2/2,\quad\text{for any }\lambda>0.
    \end{equation}
\end{assumption}

\begin{assumption}[Balanced Sample Sizes]\label{asmp:LinearModel2}
    For all individuals, the number of repeated samples is asymptotically identical, i.e., $m_1\asymp \cdots\asymp m_n\asymp m$.
\end{assumption}

\begin{assumption}[Signal Strength]\label{asmp:LinearModel3}
    The population-level signal strength associated with the shared components satisfies
    \begin{equation}
      \sigma_1(\bm{\Sigma}_C)\asymp\sigma_{K_1}(\bm{\Sigma}_C)\asymp \sigma_1(\bm{\Sigma}_R)\asymp\sigma_{K_2}(\bm{\Sigma}_R)\asymp \sqrt{n},
    \end{equation}
    and the individual-specific signal strength associated with the individualized components satisfies that
    \begin{equation}
      \sigma_{r}(\bm{\Sigma}_1)\asymp\cdots\asymp\sigma_r(\bm{\Sigma}_n)\asymp C,
    \end{equation}
    for some constant $C$.
\end{assumption}

The sub-Gaussian conditions on $\bm{X}_{ij}$ and $\varepsilon_{ij}$ in Assumption \ref{asmp:LinearModel1} are widely adopted in matrix regression literature. Additionally, their distributions are allowed to vary across individuals. Assumption \ref{asmp:LinearModel2} can simplify the theoretical results, and it can be relaxed to accommodate imbalanced sample sizes. Assumption \ref{asmp:LinearModel3} implies that the shared low-rank subspaces are sufficiently informative, and that each individualized component contributes to a non-negligible signal.

As Algorithm \ref{alg:1} can be viewed as a special case of Algorithm \ref{alg:2} with $s_1=p_1$ and $s_2=p_2$, we only present the statistical results for the sparse setting. We apply Algorithm \ref{alg:2} to obtain the initial values $\widecheck{\bm{C}}_i$ and $\widecheck{\bm{R}}_i$, select $(\widehat{r},\widehat{K}_1,\widehat{K}_2)$ using the ridge-type ratio estimators in \eqref{eq:ridge1}, \eqref{eq:ridge21} and \eqref{eq:ridge22} with appropriate tuning parameters $\delta_1(n,m,\bar{s})\asymp n\bar{s}^{1/4}m^{-1/4}$ and $\delta_2(n,m,\bar{s})=n\bar{s}^{1/2}m^{-1/2}$. We then execute Algorithm \ref{alg:1} to obtain the final estimates $\widehat{\bm{\Theta}}=(\widehat{\bm{C}},\widehat{\bm{R}},\{\widehat{\bm{L}}_{1i},\widehat{\bm{L}}_{2i}\}_{i=1}^n)$ and corresponding $\widehat{\bm{B}}_i$. 

To evaluate the estimation accuracy of the shared components $\bm{C}$ and $\bm{R}$, we consider the $\sin\theta$ distance $\|\sin\theta(\widehat{\bm{C}},\bm{C}^*)\|^2_\textup{F}$ and $\|\sin\theta(\widehat{\bm{R}},\bm{R}^*)\|^2_\textup{F}$.

\begin{theorem}[Rates of Trace Regression under Homogeneity Pursuit]\label{thm:LinearModel}
    Under Assumptions \ref{asmp:LinearModel1} to \ref{asmp:LinearModel3}, if the sample size satisfies
    \begin{equation}
      m\gtrsim(\bar{\beta}_x^3\ubar{\alpha}_x^{-3}+\kappa^2\ubar{\alpha}_x^{-2})\bar{s}\log(n\bar{p}),
    \end{equation}
    as $nm\to\infty$, the ridge-type ratio estimators consistently select the true ranks:
    \begin{equation}
      \mathbb{P}(\widehat{r}=r,\widehat{K}_1=K_1,\widehat{K}_2=K_2)\to 1.
    \end{equation}
    Moreover, using the tuning parameters as in Theorem \ref{thm:2}, after
    \begin{equation}
      T\gtrsim \log((1-C\ubar{\alpha}_x\bar{\beta}_x^{-1})^{-1})\log n
    \end{equation}
    iterations, the estimates satisfy
    \begin{equation}
        \begin{split}
            \textup{dist}(\widehat{\bm{\Theta}},\bm{\Theta}^*)^2 & \lesssim \frac{\sigma^2\bar{\beta}_x^2}{\ubar{\alpha}_x^{2}}\cdot\frac{\bar{s}\log\bar{p}+\log n}{m},\\
            \frac{1}{n}\sum_{i=1}^n\|\widehat{\bm{B}}_i-\bm{B}_i^*\|_\textup{F}^2 & \lesssim \frac{\sigma^2\bar{\beta}_x^2}{\ubar{\alpha}_x^{2}}\cdot \frac{\bar{s}\log\bar{p}+\log n}{nm},\\
            \|\sin\theta(\widehat{\bm{C}},\bm{C}^*)\|^2_\textup{F} & \lesssim \frac{\sigma^2\bar{\beta}_x^2}{\ubar{\alpha}_x^{2}}\cdot \frac{\bar{s}\log\bar{p}+\log n}{nm},\\
            \|\sin\theta(\widehat{\bm{R}},\bm{R}^*)\|^2_\textup{F} & \lesssim \frac{\sigma^2\bar{\beta}_x^2}{\ubar{\alpha}_x^{2}}\cdot \frac{\bar{s}\log\bar{p}+\log n}{nm}.
        \end{split}
    \end{equation}
\end{theorem}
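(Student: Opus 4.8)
The plan is to specialize the abstract machinery of Sections \ref{sec:estimation_methodology}--\ref{sec:sparsity} to the squared-error loss $\mathcal{L}(\bm{B};\bm{X},Y)=\tfrac12(\langle\bm{X},\bm{B}\rangle-Y)^2$, for which $\nabla\mathcal{L}(\bm{B}_i;\bm{X}_{ij},Y_{ij})=(\langle\bm{X}_{ij},\bm{B}_i-\bm{B}_i^*\rangle-\varepsilon_{ij})\bm{X}_{ij}$. The argument proceeds in three stages: (i) instantiate the restricted correlated gradient condition (Assumption \ref{def:RCG}) and the gradient-stability conditions (Assumptions \ref{def:stability2} and \ref{def:sparse_stability}) with explicit constants for the sub-Gaussian trace-regression design; (ii) bound the initialization error of Algorithm \ref{alg:2} via Corollary \ref{cor:2} together with an eigenvector perturbation argument for $\bm{M}_C,\bm{M}_R$, and establish rank-selection consistency of the ridge-type ratio estimators; and (iii) invoke Theorem \ref{thm:2} to obtain geometric convergence to the stated statistical floor, then read off the component-wise rates through Corollary \ref{cor:1}.

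For stage (i), the population gradient is the reshaping of $\mathbb{E}[\textup{vec}(\bm{X}_{ij})\textup{vec}(\bm{X}_{ij})^\top]\textup{vec}(\bm{B}_i-\bm{B}_i^*)$, so the two-sided eigenvalue bound in Assumption \ref{asmp:LinearModel1} yields the RCG condition directly with $\alpha_i\asymp\alpha_{x,i}$, $\beta_i\asymp\beta_{x,i}$, hence $\ubar{\alpha}\asymp\ubar{\alpha}_x$ and $\bar{\beta}\asymp\bar{\beta}_x$ under Assumption \ref{asmp:LinearModel2}. The substantive work is the gradient-stability verification: I would split $\nabla_{\bm{C}}\mathcal{L}_n-\mathbb{E}[\nabla_{\bm{C}}\mathcal{L}_n]$ (and likewise the other three partial gradients) into a multiplicative deviation term, linear in $\cm{B}-\cm{B}^*$ and arising from $\langle\bm{X}_{ij},\bm{B}_i-\bm{B}_i^*\rangle\bm{X}_{ij}$ minus its mean, and an additive noise term $-\varepsilon_{ij}\bm{X}_{ij}$. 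The first is controlled by a restricted-isometry-type concentration bound for $\kappa^2$-sub-Gaussian matrices, which — after inserting the preconditioners, which satisfy $\widetilde{\bm{C}}^\top\widetilde{\bm{C}}\asymp n\bm{I}_{K_1}$ and $\widetilde{\bm{R}}^\top\widetilde{\bm{R}}\asymp n\bm{I}_{K_2}$ by Assumption \ref{asmp:LinearModel3} — gives $\phi\lesssim\kappa^2(\bar{s}\log\bar{p}+\log n)/m$, satisfying $\phi\lesssim\ubar{\alpha}_x^2m^2$ under the sample-size lower bound. The second is controlled by sub-Gaussian (Bernstein) concentration for $\sum_{i,j}\varepsilon_{ij}(\cdot)$, uniformly over row-supports of cardinality at most $s_1+s_1^*$ — a union bound over $\binom{p_1}{s_1+s_1^*}$ supports produces the $\bar{s}\log\bar{p}$ term — and over the $n$ individuals, producing the $\log n$ term; the invariance of the scaled hard-thresholding target established in Section \ref{sec:sparsity} is what legitimizes a single support-uniform bound. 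This yields $\xi_{C,s_1},\xi_{R,s_2},\xi_{1i},\xi_{2i}$ and hence $\xi_s\asymp\sigma^2\bar{\beta}_x^2(\bar{s}\log\bar{p}+\log n)m$, with the $n^{-1}$ coming from $(\widetilde{\bm{C}}^\top\widetilde{\bm{C}})^{-1}$ being precisely the source of the $1/(nm)$ rates for the shared subspaces.

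For stage (ii), the same concentration bounds verify Assumption \ref{def:stability2} for Algorithm \ref{alg:2} (with $\psi\lesssim\kappa^2\bar{s}\log\bar{p}/m$ and $\nu_{1i},\nu_{2i}\lesssim\sigma^2\bar{\beta}_x(\bar{s}\log\bar{p}+\log n)m$), so Corollary \ref{cor:2} gives $\|\widecheck{\bm{B}}_i-\bm{B}_i^*\|_\textup{F}^2\lesssim\sigma^2\bar{\beta}_x^2\ubar{\alpha}_x^{-2}(\bar{s}\log\bar{p}+\log n)/m$, which is $o(1)$ under the sample-size condition. Since $\bm{M}_C=\sum_i\widecheck{\bm{C}}_i\widecheck{\bm{C}}_i^\top$ concentrates around $\sum_i\bm{C}_i^*\bm{C}_i^{*\top}$, whose leading $K_1$-dimensional eigenspace is $\textup{span}(\bm{C}^*)$ with eigengap $\asymp\sigma_{K_1}^2(\bm{\Sigma}_C)\asymp n$, the Davis--Kahan theorem gives $\|\sin\theta(\bm{C}^{(0)},\bm{C}^*)\|_\textup{F}^2\lesssim(\bar{s}\log\bar{p}+\log n)/m$, and, combined with the loading initialization $\bm{L}_{1i}^{(0)}=\bm{C}^{(0)\top}\widecheck{\bm{C}}_i$ and the weighting $\bm{\Sigma}_C\asymp\sqrt{n}\bm{I}$, this certifies $\textup{dist}(\bm{\Theta}^{(0)},\bm{\Theta}^*)\lesssim\ubar{\alpha}_x^{1/2}\bar{\beta}_x^{-1/2}\ubar{\sigma}$. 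Rank-selection consistency follows from the same bounds: $\sigma_r(\widecheck{\bm{B}}_i(\bar r))\gtrsim\sigma_r(\bm{\Sigma}_i)\asymp 1$ while $\sigma_{r+1}(\widecheck{\bm{B}}_i(\bar r))\le\|\widecheck{\bm{B}}_i(\bar r)-\bm{B}_i^*\|\lesssim\sqrt{(\bar{s}\log\bar{p}+\log n)/m}$, so the ridge $\delta_1\asymp n\bar{s}^{1/4}m^{-1/4}$ lies between $\sum_i\sigma_{r+1}(\widecheck{\bm{B}}_i(\bar r))$ and $\sum_i\sigma_r(\widecheck{\bm{B}}_i(\bar r))\asymp n$, so the ratio in \eqref{eq:ridge1} peaks at $r$; analogously the leading $K_1$ eigenvalues of $\bm{M}_C$ are $\asymp n$ and the rest $O(n\sqrt{(\bar{s}\log\bar{p}+\log n)/m})$, so $\delta_2\asymp n\bar{s}^{1/2}m^{-1/2}$ separates them and $\widehat{K}_1=K_1$, $\widehat{K}_2=K_2$, $\widehat r=r$ with probability tending to one.

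For stage (iii), with the ranks correctly identified and $\bm{\Theta}^{(0)}$ in the basin of attraction, Theorem \ref{thm:2} with $\eta\asymp\ubar{\alpha}_x/\bar{\beta}_x$ and $s_1\asymp s_1^*$, $s_2\asymp s_2^*$ (so the sparsity condition $\sqrt{s_1^*/(s_1-s_1^*)}+\sqrt{s_2^*/(s_2-s_2^*)}\lesssim\ubar{\alpha}_x\bar{\beta}_x^{-1}$ holds) gives, after the stated number of iterations (enough to drive the geometric term below the floor $\ubar{\alpha}_x^{-2}m^{-2}\xi_s$), $\textup{dist}(\widehat{\bm{\Theta}},\bm{\Theta}^*)^2\lesssim\ubar{\alpha}_x^{-2}m^{-2}\xi_s\asymp\sigma^2\bar{\beta}_x^2\ubar{\alpha}_x^{-2}(\bar{s}\log\bar{p}+\log n)/m$; Proposition \ref{prop:1} passes this to $\|\cm{B}^{(T)}-\cm{B}^*\|_\textup{F}^2=\sum_i\|\widehat{\bm{B}}_i-\bm{B}_i^*\|_\textup{F}^2$, and dividing by $n$ gives the averaged $\bm{B}_i$-rate; finally the shared-component bounds of Corollary \ref{cor:1} together with $\sigma_{K_1}^2(\bm{\Sigma}_C)\asymp\sigma_{K_2}^2(\bm{\Sigma}_R)\asymp n$ give the $\sin\theta$ rates, using $\|\sin\theta(\widehat{\bm{C}},\bm{C}^*)\|_\textup{F}\lesssim\|\widehat{\bm{C}}\widehat{\bm{Q}}_1-\bm{C}^*\|_\textup{F}$ since $\bm{C}^*$ is orthonormal. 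The main obstacle is the gradient-stability verification in stage (i): one must obtain the $\phi$- and $\xi$-bounds uniformly over the low-dimensional but non-compact set of candidate iterates $\bm{\Theta}$ — in particular over the rescaled, orthonormalized covariates such as $(\widehat{\bm{C}}^\top\widehat{\bm{C}})^{-1/2}\widehat{\bm{C}}^\top\bm{X}_{ij}\widehat{\bm{R}}\widehat{\bm{L}}_{2i}(\widehat{\bm{L}}_{2i}^\top\widehat{\bm{R}}^\top\widehat{\bm{R}}\widehat{\bm{L}}_{2i})^{-1/2}$, whose distribution drifts with $\bm{\Theta}$ — which calls for an $\epsilon$-net argument over the relevant Grassmannians fused with the sub-Gaussian tail bounds, together with careful tracking of the preconditioners $(\widetilde{\bm{C}}^\top\widetilde{\bm{C}})^{-1}$ and $(\widetilde{\bm{R}}^\top\widetilde{\bm{R}})^{-1}$, whose $\asymp n^{-1}$ scaling (via Assumption \ref{asmp:LinearModel3}) delivers the information-pooling improvement.
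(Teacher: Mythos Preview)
Your three-stage plan matches the paper's proof structure closely: the paper likewise verifies the RCG condition directly from the covariance-eigenvalue bounds, splits each de-scaled partial gradient into a noise piece $-\varepsilon_{ij}\bm{X}_{ij}$ and a deviation piece $\langle\bm{X}_{ij},\bm{B}_i-\bm{B}_i^*\rangle\bm{X}_{ij}-\mathbb{E}[\cdot]$, bounds the initialization via Corollary~\ref{cor:2}, and proves rank-selection consistency by comparing $\sum_i\sigma_j(\widecheck{\bm B}_i(\bar r))$ against the ridge $\delta_1$ exactly as you describe (invoking Mirsky's inequality to transfer the $\|\widecheck{\bm B}_i-\bm B_i^*\|_{\mathrm F}$ bound to singular-value perturbations), then reads off the final rates from Theorem~\ref{thm:2} and Corollary~\ref{cor:1}.

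The one genuine technical difference is in how the gradient-stability constants are obtained. Rather than covering the Grassmannians, the paper works \emph{entry by entry}: after preconditioning, $\breve{\bm C}=\widetilde{\bm C}(\widetilde{\bm C}^\top\widetilde{\bm C})^{-1/2}$, $\ddot{\bm C}=\bm C(\bm C^\top\bm C)^{-1/2}$, $\breve{\bm R}_i=\bm R_i(\bm R_i^\top\bm R_i)^{-1/2}$, etc., all have orthonormal columns, so the $(k,\ell)$ entry of $\bm G_C-\mathbb{E}[\bm G_C]$ is $\sum_{i,j}(\cdot)\,\bm e_k^\top\bm X_{ij}\bm v_{\ell i}$ with $\sum_i\|\bm v_{\ell i}\|_2^2=1$. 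The noise part is controlled by a conditional-Chernoff argument (sub-Gaussian in $\varepsilon_{ij}$ given $z_{ijk\ell}=\bm e_k^\top\bm X_{ij}\bm v_{\ell i}$, then a sub-exponential tail bound on $R_{k\ell}=\sum_{i,j}z_{ijk\ell}^2$); the deviation part by the sub-exponential tail of a product of two sub-Gaussians. A union bound over the row index $k\in[p_1]$ yields the $\log p_1$ factor, and summing over the $s_1$ retained rows gives $\xi_{C,s_1}\asymp m s_1\log p_1\,\sigma^2\bar\beta_x^2$. The same entrywise argument on the $K_1\times r$ matrix $\bm G_{1i}$, with a union over $i\in[n]$, produces the $\log n$ term. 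This is more direct than the $\varepsilon$-net route you propose and explains why no Grassmannian metric entropy appears; on the other hand, as you correctly flag, uniformity over the data-dependent directions $\bm v_{\ell i}$ is the delicate point, and the paper's treatment leaves it implicit rather than carrying out a covering. One small scale slip in your proposal: with the unnormalized loss $\mathcal L_n=\sum_{i,j}\mathcal L$, the paper's $\phi$ carries an extra factor of $m$ relative to what you wrote (e.g.\ $\phi\asymp\kappa^2 m\bar s\log\bar p/n$ for $\bm G_C$ and $\phi\asymp\kappa^2 m\log n$ for $\bm G_{1i}$), but the condition $\phi\lesssim\ubar\alpha_x^2 m^2$ then collapses precisely to the stated sample-size requirement.
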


Theorem \ref{thm:LinearModel} presents the selection consistency of $(\widehat{r},\widehat{K}_1,\widehat{K}_2)$ and the statistical rates for linear trace regression, as well as sample size requirement and number of iteration requirement for convergence. Due to the homogeneity pursuit and sparsity structures, the requirement of $m$ is very mild. In addition, the sufficient number of iterations is roughly $\log(n)$, showing its computational efficiency and broad applicability for large-scale data with large $m$ and $n$. The statistical rates demonstrate that homogeneity pursuit yields an additional $n^{-1}$ factor in the convergence of heterogeneous components, improving estimation accuracy across individuals. The results can be directly extended to those of Algorithm \ref{alg:1} with $\bar{s}=\bar{p}$. Due to homogeneity pursuit, the average convergence rates of the individualized parameters $\bm{B}_i$, $\bm{L}_{1i}$, and $\bm{L}_{2i}$, as well as those of the shared components $\bm{C}$ and $\bm{R}$, have a rate improvement with a factor of $n^{-1}$, compared with the following corollary if homogeneity is ignored.

\begin{corollary}[Rates with Homogeneity Ignored]
  If $m\gtrsim \ubar{\alpha}_x^{-2}\kappa^2\bar{s}\log(n\bar{s})$, then under the same conditions as in Theorem \ref{thm:LinearModel}, for all $i=1,\dots,n$,
  \begin{equation}
    \|\widecheck{\bm{B}}_i - \bm{B}_i^*\|_\textup{F}^2 \lesssim \frac{\sigma^2\ubar{\alpha}_x^{-2}\bar{\beta}_x^2\bar{s}\log(n\bar{s})}{m}.
  \end{equation}
\end{corollary}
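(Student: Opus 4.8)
The plan is to derive this corollary by specializing the ``without homogeneity pursuit'' convergence guarantee --- the sparse, per-individual counterpart of Corollary~\ref{cor:2} that underlies Algorithm~\ref{alg:4} --- to the linear trace regression model of Section~\ref{sec:5.1}, and then bounding the two model-specific quantities it depends on: the restricted correlated gradient parameters $(\alpha_i,\beta_i)$ and the gradient-stability parameters $(\psi,\nu_{1i},\nu_{2i})$. Because the heterogeneous estimator fits each individual separately, the entire argument is carried out for a fixed $i$ and the ``for all $i$'' conclusion follows by a union bound over $i=1,\dots,n$; this union is what produces the $\log n$ inside $\log(n\bar s)$.

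First I would verify the Restricted Correlated Gradient condition (Assumption~\ref{def:RCG}) for the squared loss $g(t)=t^2/2$. Writing $\mathcal X_i(\bm E)=\mathbb E[\langle\bm X_{ij},\bm E\rangle\bm X_{ij}]$ for the population covariance operator and $\bm E_i=\bm B_i-\bm B_i^*$, one has $Y_{ij}-\langle\bm X_{ij},\bm B_i\rangle=-\langle\bm X_{ij},\bm E_i\rangle+\varepsilon_{ij}$ with $\varepsilon_{ij}$ mean zero and independent of $\bm X_{ij}$, hence $\mathbb E[\nabla\mathcal L(\bm B_i;\bm X_{ij},Y_{ij})]=\mathcal X_i(\bm E_i)$. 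Assumption~\ref{asmp:LinearModel1} then gives $\langle\mathcal X_i(\bm E_i),\bm E_i\rangle\ge\alpha_{x,i}\|\bm E_i\|_\textup{F}^2$ and $\|\mathcal X_i(\bm E_i)\|_\textup{F}^2\le\beta_{x,i}\langle\mathcal X_i(\bm E_i),\bm E_i\rangle$, i.e.\ the RCG condition holds with $\alpha_i\asymp\alpha_{x,i}$ and $\beta_i\asymp\beta_{x,i}$; combined with Assumption~\ref{asmp:LinearModel2}, this yields $\ubar\alpha\asymp\ubar\alpha_x$, $\bar\beta\asymp\bar\beta_x$, so the step-size requirement of the per-individual result becomes $\eta_i\asymp\ubar\alpha_x\bar\beta_x^{-1}$ and the basin-of-attraction requirement an $O(\ubar\alpha_x^{1/2}\bar\beta_x^{-1/2})$-relative-accuracy condition on the initialization. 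I would check that the thresholded spectral initialization of each $\bm B_i$ (and the accompanying sparsity-overlap condition $\sqrt{s_1^*/(s_1-s_1^*)}+\sqrt{s_2^*/(s_2-s_2^*)}\lesssim\ubar\alpha_x\bar\beta_x^{-1}$, inherited from the tuning choice of Theorem~\ref{thm:2}) lands inside this basin once $m$ meets the stated lower bound.

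The heart of the proof is verifying the sparse gradient-stability condition (the per-individual, sparse-support analogue of Assumptions~\ref{def:stability2} and~\ref{def:sparse_stability}) for the centered, preconditioned partial gradients, e.g.\ $\{\nabla_{\bm C_i}\overline{\mathcal L}_i-\mathbb E[\nabla_{\bm C_i}\overline{\mathcal L}_i]\}(\bm R_i^\top\bm R_i)^{-1/2}$ restricted to row supports of cardinality $\le s_1+s_1^*$, and symmetrically for $\bm R_i$. I would split this object into (a) an empirical-process term $\sum_j(\langle\bm X_{ij},\bm E_i\rangle\bm X_{ij}-\mathbb E[\langle\bm X_{ij},\bm E_i\rangle\bm X_{ij}])\bm R_i(\bm R_i^\top\bm R_i)^{-1/2}$ and (b) a pure-noise term $\sum_j\varepsilon_{ij}\bm X_{ij}\bm R_i(\bm R_i^\top\bm R_i)^{-1/2}$. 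For (a), a restricted-isometry-type deviation bound for the operator $m_i^{-1}\sum_j\langle\bm X_{ij},\cdot\rangle\bm X_{ij}$ over matrices that are simultaneously rank-$\le r$ and supported on $\le s_1+s_1^*$ rows and $\le s_2+s_2^*$ columns --- obtained from sub-Gaussian chaos concentration together with an $\varepsilon$-net over such sparse low-rank sets --- shows this term contributes to the $\psi\|\bm B_i-\bm B_i^*\|_\textup{F}^2$ piece with $\psi\lesssim\kappa^4 m_i\bar s\log(n\bar s)$; the hypothesis $m\gtrsim\ubar\alpha_x^{-2}\kappa^2\bar s\log(n\bar s)$ is precisely what forces $\psi\lesssim\ubar\alpha_x^2 m^2$, so this term is absorbed into the contraction. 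For (b), note $\bm R_i(\bm R_i^\top\bm R_i)^{-1/2}$ has orthonormal columns and, along the trajectory, $\sigma_r(\bm R_i^\top\bm R_i)\asymp\sigma_r(\bm\Sigma_i)\asymp1$ by Assumption~\ref{asmp:LinearModel3}; each entry of $\sum_j\varepsilon_{ij}\bm X_{ij}\bm R_i(\bm R_i^\top\bm R_i)^{-1/2}$ is then a sum of $m_i$ independent mean-zero sub-exponential products whose summands have second moment $\lesssim\sigma^2\bar\beta_x$, so Bernstein's inequality followed by a union bound over the $\binom{p_1}{s_1}$ candidate supports (and over $i$) bounds the restricted squared Frobenius norm, giving $\nu_{1i}+\nu_{2i}\lesssim\sigma^2\bar\beta_x^2 m\bar s\log(n\bar s)$, where the second factor of $\bar\beta_x$ is picked up from the scaled step size and preconditioning, exactly as the noise parameters $\xi_{1i},\xi_{2i}$ scale in the pooled analysis.

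Plugging $\alpha_i\asymp\ubar\alpha_x$, $m_i\asymp m$, and $\nu_{1i}+\nu_{2i}\lesssim\sigma^2\bar\beta_x^2 m\bar s\log(n\bar s)$ into the per-individual error bound $\|\widecheck{\bm B}_i-\bm B_i^*\|_\textup{F}^2\lesssim(\nu_{1i}+\nu_{2i})/(\alpha_i^2 m_i^2)$ from the sparse version of Corollary~\ref{cor:2} yields the claimed rate for each $i$, and the union bound already built into step~(b) makes it simultaneous over $i=1,\dots,n$. I expect the main obstacle to be part~(a): establishing the restricted-isometry / concentration bound over sparse \emph{low-rank} matrix sets with sub-exponential (rather than sub-Gaussian) summands, and tracking the sparsity and logarithmic factors carefully enough that the resulting $\psi$ stays compatible with the sample-size condition while the resulting $\nu$'s reproduce exactly the stated rate. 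This is essentially the single-individual specialization of the concentration machinery already developed for the pooled estimator in the proof of Theorem~\ref{thm:LinearModel}, so much of it can be reused.
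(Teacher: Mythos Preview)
Your proposal is correct and follows essentially the same route as the paper: verify the RCG condition for the squared loss using Assumption~\ref{asmp:LinearModel1}, establish the sparse gradient-stability bounds by splitting into a noise term and an empirical-process term (the paper's $T_{k\ell 1}$ and $T_{k\ell 2}$), and then plug the resulting $(\psi,\nu_{1i},\nu_{2i})$ into the per-individual bound of Corollary~\ref{cor:2}. The paper carries out the concentration step entry-by-entry (each entry of the preconditioned gradient is a sum of sub-exponential products, then one sums over the $O(\bar s)$ retained rows) rather than via an $\varepsilon$-net over sparse low-rank sets as you suggest for part~(a), which is why it obtains $\psi\lesssim\kappa^2 m\bar s\log(n\bar s)$ rather than $\kappa^4$; but this is a minor implementation difference, and you already note that the machinery from the proof of Theorem~\ref{thm:LinearModel} can be reused directly --- indeed, the paper literally invokes Step~3 of that proof with $K_1=s_1$, $K_2=s_2$.
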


Define the class
\begin{equation}
  \begin{split}
    \mathcal{B}(\bar{p},\bar{s}^*;r,K_1,K_2)
    = & \{\cm{B}\in\mathbb{R}^{p_1\times p_2\times n}:p_1,p_2\leq\bar{p},\text{rank}(\cm{B}_i)\leq r\text{ for }1\leq i\leq n,\\
    & \|\cm{B}_{(\ell)}\|_{2,0}\leq\bar{s},~\text{rank}(\cm{B}_{(\ell)})\leq K_\ell,\text{ for }\ell=1,2 \}.
  \end{split}
\end{equation}

\begin{theorem}[Minimax Lower Bound for Linear Trace Regression]\label{thm:LinearModelLower}
  Suppose $\cm{B}^*$ in linear trace regressionbelongs to $\mathcal{B}(\bar{p},\bar{s}^*;r,K_1,K_2)$ for any $\bar{p}\geq\bar{s}^*$ and finite $(r,K_1,K_2)$. Then, for any estimator $\widetilde{\cm{B}}$ based on $n$ individuals and $m$ observations per individual,
  \begin{equation}
    \begin{split}
      \inf_{{\widetilde{\scalebox{0.7}{\cm{B}}}}}\sup_{\scalebox{0.7}{\cm{B}}^*\in\mathcal{B}(\bar{p},\bar{s}^*;r,K_1,K_2)}\mathbb{E}\left[\|\widetilde{\cm{B}}-\cm{B}^*\|_\textup{F}^2\right] \gtrsim \frac{\sigma^2}{\bar{\beta}_x^2}\cdot\frac{\bar{s}^*\log\bar{p}+\log n}{m}.
    \end{split}
  \end{equation}
\end{theorem}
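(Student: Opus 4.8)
The plan is to establish the lower bound via a standard information-theoretic argument, combining a Fano-type reduction over a carefully constructed packing set of the parameter class $\mathcal{B}(\bar{p},\bar{s}^*;r,K_1,K_2)$. The bound has two additive components, $\bar{s}^*\log\bar{p}$ and $\log n$, which correspond to two distinct sources of difficulty: (i) recovering the shared sparse row/column supports $S_1,S_2$ together with the shared subspaces $\bm{C}^*,\bm{R}^*$, and (ii) estimating the $n$ individual loading cores. I would therefore prove two separate lower bounds of order $\sigma^2\bar{\beta}_x^{-2}\bar{s}^*\log\bar{p}/m$ and $\sigma^2\bar{\beta}_x^{-2}\log n/m$ respectively, and then take the maximum (which dominates the sum up to a constant).

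For the first component, I would fix all individual loadings to a common nondegenerate value (say $\bm{L}_{1i}^*=\bm{L}_1$, $\bm{L}_{2i}^*=\bm{L}_2$ with $\sigma_r(\bm{\Sigma}_i)\asymp C$ as allowed by the class), reducing to the problem of estimating a single shared sparse low-rank matrix $\bm{B}^*=\bm{C}^*\bm{L}_1\bm{L}_2^\top\bm{R}^{*\top}$ that is observed through $nm$ trace measurements in total. The effective number of free parameters is of order $\bar{s}^*$ (the support size) plus $\log\binom{\bar p}{\bar{s}^*}\asymp\bar{s}^*\log\bar p$ bits needed to encode the support. Using the Varshamov–Gilbert bound I would construct a packing of $\exp(c\bar{s}^*\log\bar p)$ matrices in this sub-family, each pair separated by Frobenius distance $\asymp\delta$, with per-sample KL divergence controlled by $\beta_{x,i}\delta^2/(2\sigma^2)$ because the covariates have covariance operator norm at most $\bar{\beta}_x$. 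Balancing $nm\cdot\bar{\beta}_x\delta^2/\sigma^2 \asymp \bar{s}^*\log\bar p$ in Fano's inequality gives $\delta^2\asymp \sigma^2\bar{s}^*\log\bar p/(\bar{\beta}_x nm)$; I would then need to check that the Frobenius error is measured on $\cm{B}$ (which stacks $n$ copies of the same $\bm{B}$), contributing a factor $n$ that cancels, yielding the claimed $\sigma^2\bar{s}^*\log\bar p/(\bar{\beta}_x m)$ — here I must be careful with the normalization of $\bar\beta_x$ versus $\bar\beta_x^2$; the quadratic power in the statement suggests the relevant comparison is with the strong-convexity constant, so I would track constants through the KL computation, likely using $\lambda_{\max}(\mathbb{E}[\mathrm{vec}(\bm X)\mathrm{vec}(\bm X)^\top])\leq\bar\beta_x$ to upper-bound KL and obtaining an effective $\bar\beta_x^2$ after relating the average Fisher information to signal-to-noise ratios across the $n$ blocks. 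For the second component, I would instead fix the shared factors $\bm{C}^*,\bm{R}^*$ and perturb only the $n$ individual cores $\bm{G}_i^*$, each living in a fixed low-dimensional space of constant size; by a two-point (or Assouad-type) argument over $n$ independent blocks, each block's error is at least $\asymp \sigma^2/(\bar\beta_x^2 m)$, and detecting which of $\asymp n$ configurations differs in a single block costs $\asymp\log n$ bits, yielding an aggregate bound $\sigma^2\log n/(\bar\beta_x^2 m)$ after normalizing by $n$.

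The main obstacle I anticipate is twofold: first, ensuring that the constructed packing sets genuinely lie inside $\mathcal{B}(\bar p,\bar s^*;r,K_1,K_2)$ — i.e. that the perturbed tensors simultaneously respect the individual rank bound $r$, the Tucker ranks $K_1,K_2$, and the row-sparsity $\|\cm{B}_{(\ell)}\|_{2,0}\leq\bar s$, all while keeping the pairwise Frobenius separations and KL divergences at the prescribed orders. The sparse-support packing in particular must be embedded so that adding a perturbation does not inflate the support beyond $\bar s^*$ nor break the low-rank structure, which typically forces one to perturb within a fixed $\bar s^* \times K_\ell$ block and to invoke Varshamov–Gilbert on that block's entries. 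Second, getting the exact power of $\bar\beta_x$ right: the KL divergence between two Gaussian-noise trace-regression models differing by $\Delta=\bm B-\bm B'$ is $\tfrac{1}{2\sigma^2}\sum_{i,j}\mathbb{E}\langle\bm X_{ij},\Delta_i\rangle^2$, which lies between $\tfrac{\alpha_{x,i}}{2\sigma^2}\|\Delta_i\|_\mathrm{F}^2$ and $\tfrac{\beta_{x,i}}{2\sigma^2}\|\Delta_i\|_\mathrm{F}^2$; reconciling this single power of $\beta_{x,i}$ with the $\bar\beta_x^2$ appearing in the bound requires carefully defining $\bar\beta_x$ (and whether the stated rate uses a possibly different normalization matching the upper bound in Theorem \ref{thm:LinearModel}), and I would resolve this by matching the construction's signal scale $\delta$ to the regime $\sigma_r(\bm\Sigma_i)\asymp C$, $\sigma_{K_\ell}(\bm\Sigma)\asymp\sqrt n$ mandated by Assumption \ref{asmp:LinearModel3}, so that the lower bound is stated in exactly the same units as the matching upper bound.
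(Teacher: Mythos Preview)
Your proposal is correct and follows essentially the same information-theoretic template as the paper, but organizes the argument differently. The paper carries out a \emph{single} Fano reduction over one packing set built by combining a sparse-support packing (its Lemma on sparse hypercube vectors, giving $\log J\gtrsim \bar s^*\log(\bar p/\bar s^*)$) with a low-rank tensor packing (its Lemma giving $\log J\gtrsim p_1r_1+p_2r_2+p_3r_3$), then balances mutual information $\mathbb{E}_X[I_X]\leq m\delta^2$ against $\log K$ in one shot; its treatment of the $\log n$ contribution is quite terse. You instead split the bound into two independent sub-problems—perturbing the shared sparse low-rank block with all individuals identical for the $\bar s^*\log\bar p$ term, and a ``which block is perturbed'' Fano over the $n$ individual cores for the $\log n$ term—then take the maximum. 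Your route is more modular and makes the provenance of each term explicit (in particular, your derivation of the $\log n$ piece is cleaner than the paper's), while the paper's unified packing is more compact but relies on the reader to see that both complexity contributions are captured simultaneously.

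Your concern about the power of $\bar\beta_x$ is well-founded. Under Assumption~\ref{asmp:LinearModel1} one has $\mathbb{E}\langle\bm X_{ij},\Delta_i\rangle^2\leq \beta_{x,i}\|\Delta_i\|_{\mathrm F}^2$, so the per-observation KL is at most $\bar\beta_x\|\Delta_i\|_{\mathrm F}^2/(2\sigma^2)$, giving a single factor $\bar\beta_x$. The paper's proof also writes $\bar\beta_x^2$ in the mutual-information bound without further justification and simultaneously scales the packing radii by $\sigma^2\bar\beta_x^{-2}$, so the squared power appears to be a bookkeeping convention chosen to match the form of the upper bound rather than an additional step you are missing. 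Tracking the KL as you describe will give you the rate with $\bar\beta_x$; the discrepancy is cosmetic.
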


The minimax lower bound in Theorem \ref{thm:LinearModelLower} matches the upper bounds in Theorem \ref{thm:LinearModel} up to constants depending on $\bar{\beta}_x$ and $\ubar{\alpha}_x$. This establishes the minimax optimality of the proposed estimator. By the approximate equivalence in Proposition \ref{prop:1}, the same conclusion holds for $\widehat{\bm{\Theta}}$.

\subsection{Generalized Linear Model}

We now extend the framework to the generalized linear model (GLM) with loss function
\begin{equation}\label{eq:GLM_loss}
    \mathcal{L}(\bm{B}_i;\bm{X}_{ij},y_{ij}) = g(\langle\bm{X}_{ij},\bm{B}_i\rangle) - y_{ij}\langle\bm{X}_{ij},\bm{B}_i\rangle,
\end{equation}
where $g(\cdot)$ is a smooth function. For instance, if $g(t)=t^2/2$, it reduces to the linear model in Section \ref{sec:5.1}; if $g(t)=\log(1+\exp(t))$, it becomes the matrix logistic regression. In this subsection, we focus on \eqref{eq:GLM_loss} with $g$ being a Lipschitz function. 

\begin{assumption}[Lipschitz Smooth Link Function]
  \label{asmp:Lip}
  The function $g$ is $L$-Lipschitz smooth, i.e., $|\nabla^2 g|\leq L$, for some constant $L>0$.
\end{assumption}

The Lipschitz condition naturally holds for logistic and probit regression. For this class of models, we consider Assumptions \ref{asmp:LinearModel2}-\ref{asmp:LinearModel3}, and the following assumption.

\begin{assumption}[Sub-Gaussian Covariates and Bounded Responses]\label{asmp:GLM1}
    The covariate $\bm{X}_{ij}$ is $\kappa^2$-sub-Gaussian and has bounded covariance eigenvalues (as in Assumption \ref{asmp:LinearModel1}). The response satisfies $|Y_{ij}|\leq B$ for some constant $B>0$.
\end{assumption}

For the binary classification problem with $Y_{ij}\in\{0,1\}$, Assumption \ref{asmp:GLM1} holds with $B=1$. In addition, we also consider the balanced sample size condition in Assumption \ref{alg:2} and signal strength condition in Assumption \ref{alg:3}. 

For the GLM in \eqref{eq:GLM_loss}, we consider the model with sparsity, as it naturally reduces to the conventional case by setting $s_1=p_1$ and $s_2=p_2$. For initialization and rank selection, we apply Algorithm \ref{alg:4} and ridge-type ratio estimators in \eqref{eq:ridge1}, \eqref{eq:ridge21} and \eqref{eq:ridge22}. Then, we have the following rank consistency and  convergence rates of the estimate obtained by Algorithm \ref{alg:3}.

\begin{theorem}\label{thm:GLM}
    Under Assumptions \ref{asmp:LinearModel2}, \ref{asmp:LinearModel3}, \ref{asmp:Lip} and \ref{asmp:GLM1}, if 
    \begin{equation}
      m\gtrsim(\bar{\beta}_x^3\ubar{\alpha}_x^{-3}+\kappa^2\ubar{\alpha}_x^{-2})\bar{s}\log(n\bar{p}),
    \end{equation}
    as $nm\to\infty$, the ridge-type ratio estimators consistently select the true ranks,
    \begin{equation}
      \mathbb{P}(\widehat{r}=r,\widehat{K}_1=K_1,\widehat{K}_2=K_2)\to 1.
    \end{equation}
    Moreover, using the tuning parameters as in Theorem \ref{thm:2}, after
    \begin{equation}
      T\gtrsim \log((1-C\ubar{\alpha}_x\bar{\beta}_x^{-1})^{-1})\log n
    \end{equation}
    iterations, the estimates satisfy
    \begin{equation}
        \begin{split}
            \textup{dist}(\widehat{\bm{\Theta}},\bm{\Theta}^*)^2 & \lesssim \frac{(L+B)^2\bar{\beta}_x^2}{\ubar{\alpha}_x^{2}}\cdot \frac{\bar{s}\log\bar{p}+\log n}{m},\\
            \frac{1}{n}\sum_{i=1}^n\|\widehat{\bm{B}}_i-\bm{B}_i^*\|_\textup{F}^2 & \lesssim \frac{(L+B)^2\bar{\beta}_x^2}{\ubar{\alpha}_x^{2}}\cdot \frac{\bar{s}\log\bar{p}+\log n}{nm},\\
            \|\sin\theta(\widehat{\bm{C}},\bm{C}^*)\|^2_\textup{F} & \lesssim \frac{(L+B)^2\bar{\beta}_x^2}{\ubar{\alpha}_x^{2}}\cdot \frac{\bar{s}\log\bar{p}+\log n}{m},\\
            \|\sin\theta(\widehat{\bm{R}},\bm{R}^*)\|^2_\textup{F} & \lesssim \frac{(L+B)^2\bar{\beta}_x^2}{\ubar{\alpha}_x^{2}}\cdot \frac{\bar{s}\log\bar{p}+\log n}{nm}.
        \end{split}
    \end{equation}
\end{theorem}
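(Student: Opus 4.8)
The strategy is to cast Theorem~\ref{thm:GLM} as an instance of the generic machinery: verify that the GLM loss in~\eqref{eq:GLM_loss} satisfies the Restricted Correlated Gradient condition (Assumption~\ref{def:RCG}) and the sparse gradient stability condition (Assumption~\ref{def:sparse_stability}) with explicit parameters, confirm that Algorithm~\ref{alg:4} together with the ridge-type ratio estimators yields a valid initialization and correct rank selection, and then invoke Theorem~\ref{thm:2}. The per-sample gradient is $\nabla\mathcal{L}(\bm{B}_i;\bm{X}_{ij},Y_{ij})=(g'(\langle\bm{X}_{ij},\bm{B}_i\rangle)-Y_{ij})\bm{X}_{ij}$, and since $\mathbb{E}[Y_{ij}\mid\bm{X}_{ij}]=g'(\langle\bm{X}_{ij},\bm{B}_i^*\rangle)$ the population gradient equals $\mathbb{E}[(g'(\langle\bm{X}_{ij},\bm{B}_i\rangle)-g'(\langle\bm{X}_{ij},\bm{B}_i^*\rangle))\bm{X}_{ij}]$. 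A mean-value expansion gives $\langle\mathbb{E}[\nabla\mathcal{L}(\bm{B}_i)],\bm{B}_i-\bm{B}_i^*\rangle=\mathbb{E}[g''(\widetilde t_{ij})\langle\bm{X}_{ij},\bm{B}_i-\bm{B}_i^*\rangle^2]$; combining Assumption~\ref{asmp:Lip} ($|g''|\le L$), the covariance eigenvalue bounds of Assumption~\ref{asmp:GLM1}, and a local lower bound on $g''$ valid when $|\langle\bm{X}_{ij},\bm{B}_i\rangle|$ stays bounded (which holds throughout the basin of attraction of Theorem~\ref{thm:2} after truncating the sub-Gaussian covariates) establishes Assumption~\ref{def:RCG} with $\alpha_i\asymp\alpha_{x,i}$ and $\beta_i\asymp\beta_{x,i}$, hence $\ubar{\alpha}\asymp\ubar{\alpha}_x$ and $\bar{\beta}\asymp\bar{\beta}_x$.

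Next I would establish the sparse gradient stability. The gradient noise at the truth, $(g'(\langle\bm{X}_{ij},\bm{B}_i^*\rangle)-Y_{ij})\bm{X}_{ij}$, has entries equal to a bounded prefactor of order $L+B$ (using boundedness of $g'$ for the logistic/probit links and $|Y_{ij}|\le B$) times sub-Gaussian covariate entries, while the deterministic remainder in the gradient expansion is controlled through $|g''|\le L$ and the smallness of $\|\bm{B}_i-\bm{B}_i^*\|_\textup{F}$. A covering-number/chaining argument over the set of rank-$\le r$ coefficient tensors with at most $s_1+s_1^*$ active rows in $\bm{C}$ and $s_2+s_2^*$ in $\bm{R}$ --- whose metric entropy is of order $\bar{s}\log\bar{p}$, with an additive $\log n$ from a union bound over the $n$ individuals and the associated frontal slices and individual gradients --- then delivers Assumption~\ref{def:sparse_stability} on a high-probability event, with $\phi$ negligible relative to $\ubar{\alpha}_x^2m^2$ under the stated sample-size condition, and with $\xi_s=\xi_{C,s_1}+\xi_{R,s_2}+\sum_{i=1}^n(\xi_{1i}+\xi_{2i})$ of an order making the statistical term $C\ubar{\alpha}_x^{-2}m^{-2}\xi_s$ of Theorem~\ref{thm:2} reproduce the target rate $(L+B)^2\bar{\beta}_x^2\ubar{\alpha}_x^{-2}(\bar{s}\log\bar{p}+\log n)/m$.

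For rank selection, Assumption~\ref{asmp:LinearModel3} provides a clean separation --- of order $n$ between the $K_\ell$-th and $(K_\ell+1)$-th eigenvalues of the population aggregate subspace matrices, and of constant order between the $r$-th and $(r+1)$-th singular values of each $\bm{B}_i^*$ --- while the initialization error of Algorithm~\ref{alg:4} and the aggregation matrices $\bm{M}_C,\bm{M}_R$ concentrate at rates dominated by $\delta_1\asymp n\bar{s}^{1/4}m^{-1/4}$ and $\delta_2\asymp n\bar{s}^{1/2}m^{-1/2}$; the empirical ridge ratios therefore spike exactly at the true ranks with probability tending to one, and the same initialization bound propagated through $\bm{C}^{(0)},\bm{R}^{(0)},\bm{L}_{1i}^{(0)},\bm{L}_{2i}^{(0)}$ certifies $\textup{dist}(\bm{\Theta}^{(0)},\bm{\Theta}^*)\lesssim\ubar{\alpha}_x^{1/2}\bar{\beta}_x^{-1/2}\ubar{\sigma}$. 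Invoking Theorem~\ref{thm:2} and taking $T\gtrsim\log n/\log((1-C\ubar{\alpha}_x\bar{\beta}_x^{-1})^{-1})$ makes the geometric optimization term negligible against the statistical term, yielding the $\textup{dist}(\widehat{\bm{\Theta}},\bm{\Theta}^*)^2$ bound; the remaining bounds follow from the analog of Corollary~\ref{cor:1}, dividing by the individual signal strength $\sigma_r^2(\bm{\Sigma}_i)\asymp 1$ and the averaging factor $n^{-1}$ for the $\bm{B}_i$ rate, and by the shared signal strengths $\sigma_{K_1}^2(\bm{\Sigma}_C)\asymp\sigma_{K_2}^2(\bm{\Sigma}_R)\asymp n$ together with a subspace perturbation bound of the form $\|\sin\theta(\widehat{\bm{C}},\bm{C}^*)\|_\textup{F}\lesssim\|\widehat{\bm{C}}\widehat{\bm{Q}}_1-\bm{C}^*\|_\textup{F}$ for the $\sin\theta$ rates.

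The main obstacle is the RCG / restricted-strong-convexity step: unlike the linear model where $g''\equiv 1$, for a general Lipschitz-smooth link $g''$ need not be bounded below globally, so one must simultaneously localize to the region where $|\langle\bm{X}_{ij},\bm{B}_i\rangle|$ is bounded --- exploiting both the basin-of-attraction hypothesis and sub-Gaussian tail/truncation control of the covariates --- and establish a uniform lower bound on the empirical quadratic form $m_i^{-1}\sum_j g''(\widetilde t_{ij})\langle\bm{X}_{ij},\bm{B}_i-\bm{B}_i^*\rangle^2$ over the low-rank, row-sparse perturbation set. This uniform lower-isometry argument is the technically heaviest ingredient, and is where the sample-size requirement $m\gtrsim(\bar{\beta}_x^3\ubar{\alpha}_x^{-3}+\kappa^2\ubar{\alpha}_x^{-2})\bar{s}\log(n\bar{p})$ is consumed.
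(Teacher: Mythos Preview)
Your proposal is essentially correct and follows the same high-level route as the paper: verify the RCG and sparse gradient-stability hypotheses for the GLM loss, feed these into the deterministic Theorem~\ref{thm:2}/Corollary~\ref{cor:1} machinery, and handle initialization and rank selection by reduction to the heterogeneous Algorithm~\ref{alg:4} exactly as in the linear case.

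Two points of comparison are worth noting. First, for the gradient-stability step the paper does not use a covering/chaining argument as you propose; instead it argues entrywise, as in the linear proof, observing that the prefactor $g'(\langle\bm{X}_{ij},\bm{B}_i\rangle)-Y_{ij}$ is uniformly bounded by $L+B$ (for the logistic/probit family) so that each entry of the scaled gradient error is a sum of bounded-times-sub-Gaussian summands, and then takes a union bound over the $(s_1+s_1^*)\times K_1$ active coordinates. Your covering approach would work as well and gives the same $\bar s\log\bar p+\log n$ entropy, but the entrywise route is shorter because it recycles the linear-case computations verbatim with $\sigma$ replaced by $L+B$. Second, you correctly flag the RCG verification (the lower bound on $g''$ over the restricted perturbation set) as the substantive new ingredient relative to the linear model; the paper's written proof is quite terse on this point and essentially treats it as absorbed into the standing assumptions, so your more careful localization-plus-truncation argument is not redundant --- it is supplying a detail the paper leaves implicit.
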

Theorem \ref{thm:GLM} presents the rank selection consistency of the ridge-type ratio estimators and statistical rates of the estimate obtained by Algorithm \ref{alg:3}. The required sample size and number of iterations are the same as those for linear trace model, as in Theorem \ref{thm:LinearModel}. The convergence rates confirm the efficiency improvement by homogeneity pursuit, compared with those of the fully heterogeneous models.

Similarly to trace regression, we have the following minimax lower bound for the matrix logistic regression.

\begin{theorem}[Minimax Lower Bound of Matrix Logistic Regression]\label{thm:GLMLower}
  Suppose $\cm{B}^*$ in matrix logistic regression belongs to $\mathcal{B}(\bar{p},\bar{s}^*;r,K_1,K_2)$ for any $\bar{p}\geq\bar{s}^*$ and finite $(r,K_1,K_2)$. Then, for any estimator $\widetilde{\cm{B}}$ based on $n$ individuals and $m$ observations per individual,
  \begin{equation}
    \begin{split}
      \inf_{{\widetilde{\scalebox{0.7}{\cm{B}}}}}\sup_{\scalebox{0.7}{\cm{B}}^*\in\mathcal{B}(\bar{p},\bar{s}^*;r,K_1,K_2)}\mathbb{E}\left[\|\widetilde{\cm{B}}-\cm{B}^*\|_\textup{F}^2\right] \gtrsim \frac{\bar{s}^*\log\bar{p}+\log n}{\bar{\beta}_x^2m}.
    \end{split}
  \end{equation}
\end{theorem}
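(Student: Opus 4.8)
The plan is to establish the bound by Fano's method in Tsybakov's form, building two finite sub-families of $\mathcal{B}(\bar{p},\bar{s}^*;r,K_1,K_2)$ that isolate the two parts of the rate --- $\bar{s}^*\log\bar{p}/(\bar{\beta}_x^2 m)$ and $\log n/(\bar{\beta}_x^2 m)$ --- and maximizing over the two constructions. The overall structure mirrors the proof of Theorem~\ref{thm:LinearModelLower}, and the only genuinely new ingredient is the Kullback--Leibler bound for the logistic likelihood, after which the same packing bookkeeping carries through with the noise variance $\sigma^2$ replaced by an absolute constant.

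I would first record the KL bound. For coefficient tensors $\cm{B},\cm{B}'$ with frontal slices $\bm{B}_i,\bm{B}_i'$, the logistic log-likelihood of one observation from individual $i$ is $Y_{ij}\langle\bm{X}_{ij},\bm{B}_i\rangle-g(\langle\bm{X}_{ij},\bm{B}_i\rangle)$ with $g(t)=\log(1+e^t)$ and $\mathbb{E}[Y_{ij}\mid\bm{X}_{ij}]=g'(\langle\bm{X}_{ij},\bm{B}_i\rangle)$, so that, conditionally on $\bm{X}_{ij}$, the KL divergence to the model indexed by $\bm{B}_i'$ equals the Bregman divergence $D_g(\langle\bm{X}_{ij},\bm{B}_i'\rangle,\langle\bm{X}_{ij},\bm{B}_i\rangle)$. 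Since $0<g''\le 1/4$, this is at most $\tfrac{1}{8}\langle\bm{X}_{ij},\bm{B}_i-\bm{B}_i'\rangle^2$; taking expectation over $\bm{X}_{ij}$ and using $\lambda_{\max}(\mathbb{E}[\mathrm{vec}(\bm{X}_{ij})\mathrm{vec}(\bm{X}_{ij})^\top])\le\bar{\beta}_x$ bounds it by $\tfrac{1}{8}\bar{\beta}_x\|\bm{B}_i-\bm{B}_i'\|_\textup{F}^2$. Summing over the $m$ observations of each individual, the KL divergence between the two induced data distributions is at most $(m\bar{\beta}_x/8)\,\|\cm{B}-\cm{B}'\|_\textup{F}^2$, and it is contributed only by those individuals whose slices differ.

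Next I would build the packings. For the dimension/sparsity term, fix a $1$-sparse unit vector $\bm{v}\in\mathbb{R}^{p_2}$ and, by the Varshamov--Gilbert lemma, choose $\bar{s}^*$-sparse unit vectors $\bm{u}^{(1)},\dots,\bm{u}^{(M)}\in\mathbb{R}^{p_1}$ with $\log M\gtrsim\bar{s}^*\log(\bar{p}/\bar{s}^*)\asymp\bar{s}^*\log\bar{p}$ and $\|\bm{u}^{(a)}-\bm{u}^{(b)}\|_2^2\asymp 1$; set $\cm{B}^{(a)}=\rho\,(\bm{u}^{(a)}\circ\bm{v}\circ\mathbf{1}_n)$, so every frontal slice equals $\rho\,\bm{u}^{(a)}\bm{v}^\top$. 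Each such tensor lies in $\mathcal{B}$ --- every slice and every nontrivial matricization has rank $1$ and at most $\bar{s}^*$ nonzero rows --- with $\|\cm{B}^{(a)}-\cm{B}^{(b)}\|_\textup{F}^2\asymp n\rho^2$ and, by the KL bound, pairwise divergence $\lesssim m\bar{\beta}_x n\rho^2$; choosing $\rho^2\asymp\bar{s}^*\log\bar{p}/(m\bar{\beta}_x n)$ makes the average KL a small fraction of $\log M$, so Fano gives minimax risk over $\mathcal{B}$ that is $\gtrsim n\rho^2\asymp\bar{s}^*\log\bar{p}/(m\bar{\beta}_x)$. For the per-individual term, fix a single $\bar{s}^*$-sparse unit $\bm{u}$ and $1$-sparse unit $\bm{v}$, let $\cm{B}^{(a)}$, $a=1,\dots,n$, have $a$-th frontal slice $\rho\,\bm{u}\bm{v}^\top$ and all remaining slices zero, and put $\cm{B}^{(0)}=\bm{0}$; all of these lie in $\mathcal{B}$, are separated by $\gtrsim\rho^2$ in Frobenius norm, and, since $\cm{B}^{(a)}$ differs from $\cm{B}^{(0)}$ only in individual $a$'s distribution, have $\mathrm{KL}\le(m\bar{\beta}_x/8)\rho^2$. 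Taking $\rho^2\asymp\log n/(m\bar{\beta}_x)$ and applying Fano over these $n+1$ hypotheses gives minimax risk $\gtrsim\log n/(m\bar{\beta}_x)$. Maximizing the two bounds produces the stated rate, up to the $\bar{\beta}_x$-versus-$\bar{\beta}_x^2$ factor, which is handled exactly as in the proof of Theorem~\ref{thm:LinearModelLower} (it is immaterial under the usual normalization $\bar{\beta}_x\gtrsim 1$).

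The main obstacle is not the KL computation but arranging the two packings so that they simultaneously satisfy three requirements: large log-cardinality ($\asymp\bar{s}^*\log\bar{p}$, resp.\ $\asymp\log n$), large Frobenius separation, and genuine membership in $\mathcal{B}$, i.e.\ respecting the per-slice rank $r$, both Tucker ranks $K_1,K_2$, and the row-sparsity level all at once. Building them around rank-one ``ground-truth directions'' on a zero base makes membership automatic, so the remaining care goes into the second (``needle-over-individuals'') construction, where one must check that the relevant divergences localize to a single individual --- yielding the $\log n$ factor --- and into propagating the $\bar{\beta}_x$-dependence precisely so as to match the statement rather than only up to a power of $\bar{\beta}_x$.
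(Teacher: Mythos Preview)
Your proposal is correct and follows essentially the same strategy as the paper: bound the per-observation KL divergence for the logistic model by $\tfrac{1}{8}\langle\bm{X}_{ij},\bm{B}_i-\bm{B}_i'\rangle^2$ (via $g''\le 1/4$), then rerun the Fano argument from the proof of Theorem~\ref{thm:LinearModelLower} with $\sigma^2$ replaced by an absolute constant. Your two explicit packings --- sparse directions for the $\bar{s}^*\log\bar{p}$ term and the needle-over-individuals family for the $\log n$ term --- are a cleaner and more transparent treatment than the paper's, which folds both contributions into a single packing via Lemmas~\ref{lemma:packing1}--\ref{lemma:packing2} without spelling out the $\log n$ piece; your honest flag on the $\bar{\beta}_x$ versus $\bar{\beta}_x^2$ discrepancy is also appropriate, since the paper simply writes $\bar{\beta}_x^2$ in the linear case without further comment.
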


The minimax lower bound in Theorem \ref{thm:GLMLower} matches the upper bounds in Theorem \ref{thm:GLM}, which shows the minimax optimality of the proposed method. By the approximate equivalence in Proposition \ref{prop:1}, the minimax optimality of $\widehat{\bm{\Theta}}$ can be established accordingly.

\section{Simulation Experiments and Real Data Analysis}\label{sec:numerical}

In the first two subsections, we conduct a series of simulation experiments to empirically verify the theoreitical guarantees of the proposed estimation procedure. In the third subsection, we apply the methodology to New York Taxi Dataset to evaluate its practical utility. Following the framework outlined in Section \ref{sec:statistical_models}, we consider two models:
\begin{itemize}
  \item Model I (Linear Trace Regression):
  \begin{equation}
    Y_{ij} = \langle\bm{X}_{ij},\bm{B}_i^*\rangle + \varepsilon_{ij},\quad i=1,\dots,n,\quad j=1,\dots,m,
  \end{equation}
  where $\varepsilon_{ij}$ is a mean-zero noise term.
  \item Model II (Matrix Logistic Regression):
  \begin{equation}
    \mathbb{P}(Y_{ij}=1|\bm{X}_{ij}) = \frac{\exp(\langle\bm{X}_{ij},\bm{B}_i^*\rangle)}{1+\exp(\langle\bm{X}_{ij},\bm{B}_i^*\rangle)},\quad\mathbb{P}(Y_{ij}=0|\bm{X}_{ij}) = \frac{1}{1+\exp(\langle\bm{X}_{ij},\bm{B}_i^*\rangle)},
  \end{equation}
  for $i=1,\dots,n$ and $j=1,\dots,m$.
\end{itemize}
For both models, the coefficient matrix $\bm{B}_i^*$ is assumed to follow a low-rank structure with shared and individualized components. Specifically,
\begin{equation}
  \bm{B}_i^*=\bm{C}^*\bm{L}_{1i}^*\bm{L}_{2i}^{*\top}\bm{R}^{*\top},\quad i=1,\dots,n.
\end{equation}

\subsection{Experiment I: Rank Selection Consistency}

Experiment I aims to verify the rank selection consistency of proposed ridge-type ratio estimators. For both models, we consider the dimensions $p_1=p_2=20$, ranks $K_1=K_2=4$, $r=2$, and two sets of sample size settings $(n=16,m\in\{8,16,32,64,128\})$ and $(m=16,n\in\{8,16,32,64,128\})$. For $\bm{B}_i^*$, we consider two parameter settings (a) no sparsity structure in $\bm{C}$ and $\bm{R}$; (b) only the first five rows in $\bm{C}$ and $\bm{R}$ are nonzero. For each setting and replication, we fix $\bm{\Sigma}=\text{diag}(5,5)$ and generate the random matrices $\bm{U}_i$, $\bm{V}_i$, $\bm{C}^*$, and $\bm{R}^*$ with orthonormal columns, so that $\bm{B}_i^*=\bm{C}^*\bm{U}_i\bm{\Sigma}\bm{V}_i^\top\bm{R}^{*\top}$. Moreover, for both models we consider $\text{vec}(\bm{X}_{ij})\sim N(0,\bm{I}_{p_1p_2})$, and generate $Y_{ij}\sim N(0,1)$ for Model I.

For the parameter setting (a) without sparsity structure, we apply the Algorithm \ref{alg:2} with rank $\bar{r}=5$ to obtain $\widecheck{\bm{B}}_i(5)$ for $i=1,\dots,n$, and use the ridge-type ratio estimator in \eqref{eq:ridge_1} with $\delta_1(n,m,\bar{p})=0.1n\bar{p}m^{-1/4}$ to select $\widehat{r}$. Then, we apply the Algorithm \ref{alg:2} again with rank $\widehat{r}$ to obtain $\widecheck{\bm{B}}_i(\widehat{r})$ and the corresponding aggregate matrices $\bm{M}_C$ and $\bm{M}_R$ in \eqref{eq:M_C_R}, and apply another ridge-type ratio estimators in \eqref{eq:ridge_21} and \eqref{eq:ridge_22} with $\delta_2(n,m,\bar{p})=0.1n\bar{p}m^{-1/2}$ to select $\widehat{K}_1$ and $\widehat{K}_2$. For the parameter setting (b), we replace Algorithm \ref{alg:2} with Algorithm \ref{alg:4}, and use the ridge-type ratio estimators in \eqref{eq:ridge1}, \eqref{eq:ridge21} and \eqref{eq:ridge22} with $\delta_1(n,m,\bar{s})=0.1n\bar{s}m^{-1/4}$ and $\delta_2(n,m,\bar{s})=0.1n\bar{s}m^{-1/2}$. For each configuration, we replicate the rank selection procedure 500 times, and summarize the proportion of replications that achieve correct selection of $(r,K_1,K_2)$ in Figure \ref{fig:Exp1}. 

\begin{figure}[!htp]
  \includegraphics[width=\textwidth]{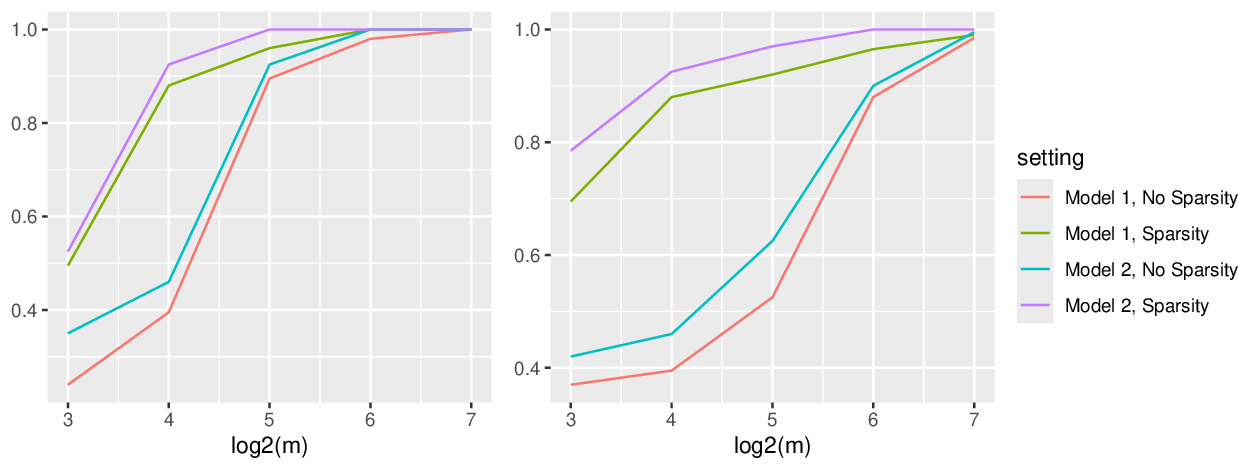}
  \vspace{-0.6cm}
  \caption{Proportions of correct rank selection for $n=16$ and $\log_2(m)\in\{3,4,5,6,7\}$ (left panel) and for $m=16$ and $\log_2(n)\in\{3,4,5,6,7\}$ (right panel).}
  \label{fig:Exp1}
\end{figure}
\vspace{-0.5cm}

The results in Figure \ref{fig:Exp1} confirms that when either $m$ or $n$ is sufficiently large, the proposed ridge-type ratio estimators consistently recover the true ranks. In addition, for model with additional sparsity structures on $\bm{C}$ and $\bm{R}$, the correct selection rate is higher than those without sparsity, reflecting the improved initialized efficiency from enforcing structured sparsity on the coefficient matrices.

\subsection{Experiment II: Convergence Rates}

Experiment II is designed to empirically verify the statistical convergence rates of the proposed estimators, as established in Theorems \ref{thm:LinearModel} and \ref{thm:GLM} for linear and logistic models, respectively. Similarly to Experiment I, we consider both models with $p_1=p_2=20$, $K_1=K_2=4$ and $r=2$, as well as the parameter settings (a) and (b) (without and with sparsity structures). We first apply the data-driven rank selection methods as in Experiment I, and then apply Algorithm \ref{alg:1} for setting (a) and \ref{alg:3} for setting (b), respectively.

\begin{figure}[!htp]
  \includegraphics[width=\textwidth]{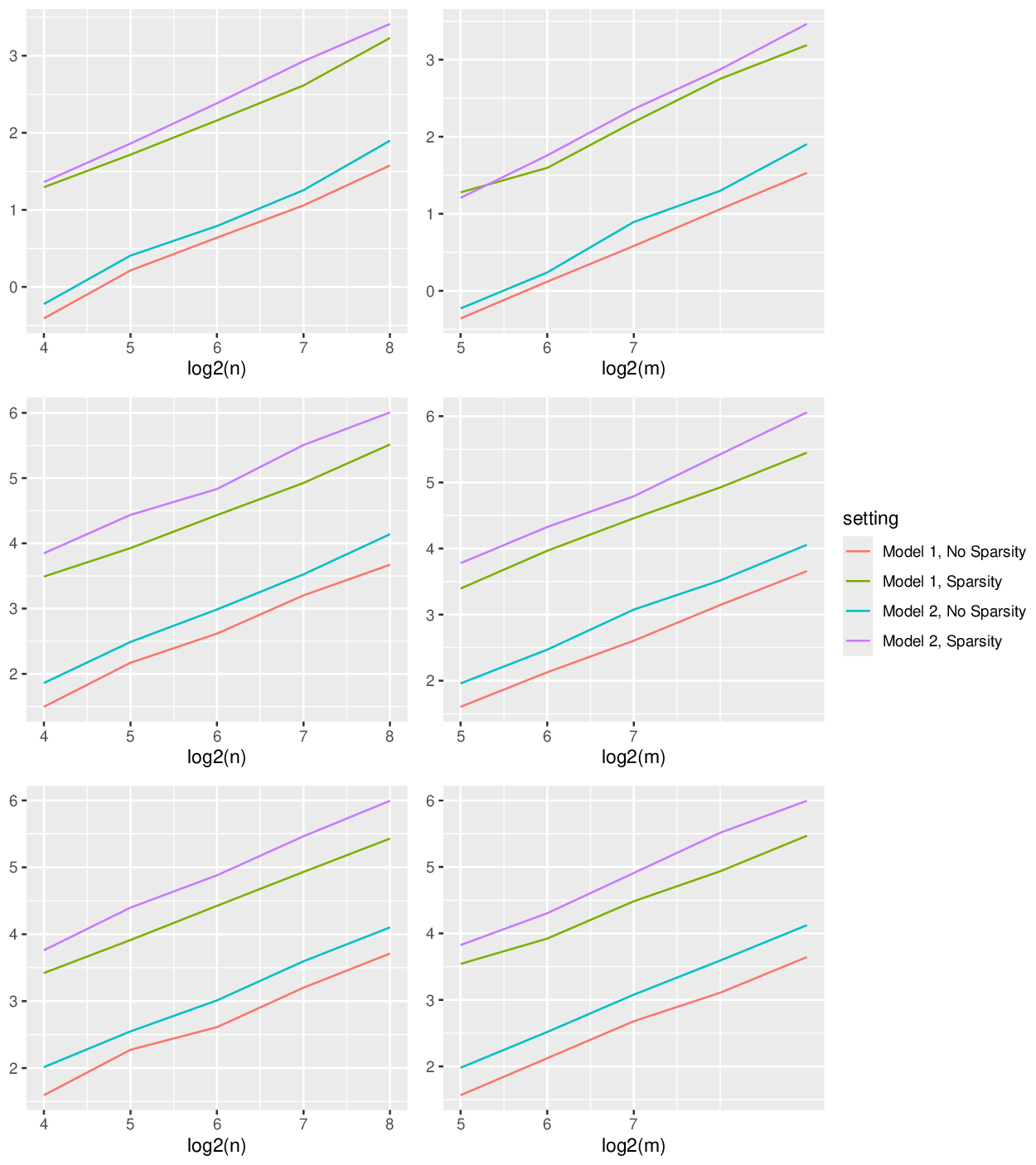}
  \vspace{-0.6cm}
  \caption{Average log errors $-\log(\|\widehat{\cm{B}}-\cm{B}^*\|_\textup{F}^2/n)$ (first row), $-\log\|\widehat{\bm{C}}\widehat{\bm{C}}^\top-\bm{C}^*\bm{C}^{*\top}\|_\textup{F}^2$ (second row), and $-\log\|\widehat{\bm{R}}\widehat{\bm{R}}^\top-\bm{R}^*\bm{R}^{*\top}\|_\textup{F}^2$ (third row), with $m=40$ and $\log_2(n)\in\{4,5,6,7,8\}$ (left panel) and with $n=20$ and $\log_2(m)\in\{5,6,7,8,9\}$ (right panel).}
  \label{fig:Exp2}
\end{figure}

First, for both models, we consider the fixed $m=40$ and varying $n\in\{16,32,64,128,256\}$. According to Theorems \ref{thm:LinearModel} and \ref{thm:GLM}, $-\log(\|\widehat{\cm{B}}-\cm{B}^*\|_\text{F}^2/n) = C+\log n + \log m-\log(\bar{s}\log(\bar{p})+\log n)$, which implies that the negative log error has an approximate linear relationship with $\log(n)$ given other parameters fixed. Similarly, we also consider the fixed $n=20$ and varying $m\in\{32,64,128,256,512\}$, where the negative log error grows linearly with $\log m$. To evaluate the estimation accuracy of the shared components $\bm{C}$ and $\bm{R}$, we normalize $\widehat{\bm{C}}$ and $\widehat{\bm{R}}$ to have orthonormal columns, and calculate $-\log\|\widehat{\bm{C}}\widehat{\bm{C}}^\top - \bm{C}^*\bm{C}^{*\top}\|_\text{F}^2$ and $-\log\|\widehat{\bm{R}}\widehat{\bm{R}}^\top - \bm{R}^*\bm{R}^{*\top}\|_\text{F}^2$. By the approximate equivalence between the projection matrix error and $\sin\Theta$ distance, the negative log errors for $\bm{C}$ and $\bm{R}$ are also supposed to grow linearly with $\log n$ and $\log m$. To numerically verify these results, we replicate the estimation procedure 500 times for each pair of $m$ and $n$, and summarize the average negative log errors in Figure \ref{fig:Exp2}.

The observed linear trends in negative log errors confirm the predicted convergence rates in Theorems \ref{thm:LinearModel} and \ref{thm:GLM}. The shared components $\bm{C}$ and $\bm{R}$ exhibit improved estimation accuracy under homogeneity pursuit, highlighting the benefits of shared low-rank structure in multi-individual regression.

\subsection{Real Data Analysis: New York Taxi Trip Dataset}

This subsection presents an application of the proposed multi-individual matrix regression framework with homogeneity pursuit to the New York City (NYC) Taxi Trip Dataset, aiming to predict taxi trip fees based on pickup-dropoff area interactions across different times of the day.

The data are obtained from the NYC Taxi Trip Record Dataset (\url{www.nyc.gov/site/tlc/about/tlc-trip-record-data.page}), which records detailed trip information for yellow taxis operating in NYC. For our analysis, we focus on predicting the total taxi fare (in thousands of U.S. dollars), denoted as $Y_{ij}$, during the time slot $i$ and week $j$. The trip origin-destination interaction is encoded via a $69\times 69$ matrix $\bm{X}_{ij}$, where each row corresponds to one of the 69 predefined taxi pickup zones in downtown Manhattan, and each column corresponds to one of the 69 dropoff zones. The entry in $X_{ij}$ reflects the aggregate trip distance during time slot $i$ and week $j$. 

To capture temporal heterogeneity in traffic patterns, the daytime is divided into four distinct time slots: morning (8 AM - 11 AM) (typically associated with trraffic congestion), noon (11 AM - 2 PM), afternoon (2 PM - 5 PM), and later afternoon (5 PM - 8 PM) (also prone to congestion). Each of these four time slots per day is treated as an individual, resulting in 28 individuals per week (i.e., $n=28$), aggregated over 52 weeks in the year 2018 (i.e., $m=52$). Thus, the full dataset consists of $1456$ observations, each contributing multiple trip records.

We adopt the linear trace regression model to relate the response $Y_{ij}$ (total fare in \$1000s) to the predictor matrix $\bm{X}_{ij}$ (trip origin-destination matrix):
\begin{equation}
  Y_{ij} = \langle\bm{X}_{ij},\bm{B}_i^*\rangle+\varepsilon_{ij},
\end{equation}
where $\bm{B}_i^*\in\mathbb{R}^{69\times 69}$ is the coefficient matrix for individual $i$, and $\varepsilon_{ij}$ represents noise. The data are split into a training set (first 40 weeks) and a testing set (last 12 weeks). All modeling and rank selection are performed on the training data, and prediction performance is evaluated on the test set.

We compare the following three modeling approaches.
\begin{itemize}
  \item[1.] Fully Homogeneous Model: Assumes a single shared matrix across all individuals $\bm{B}_i^*=\bm{B}^*$. Esimated via ordinary least squares (Homo-OLS) and via low-rank estimation as in Algorithm \ref{alg:2} with $n=1$, $m=28\times 40=1120$, and $r=2$ (Homo-LR).
  \item[2.] Fully Heterogeneous Model: Allows each individual $i$ to have its own uncontrained coefficient matrix $\bm{B}_i^*$. Estimated via OLS (Hetero-OLS) and Algorithm \ref{alg:2} with $n=28$, $m=40$, and $r=2$ (Hetero-LR).
  \item[3.] Proposed Homogeneity-Pursuit Model: Assumes a structured low-rank representation with shared components $\bm{C}$ and $\bm{R}$, and individualized components $\bm{L}_{1i}$ and $\bm{L}_{2i}$: $\bm{B}_i^* = \bm{C}\bm{L}_{1i}\bm{L}_{2i}^\top\bm{R}^{*\top}$. Estimated via Algorithm \ref{alg:1} with $r=2$ and $K_1=K_2=4$ (Homo-Pursuit) and Algorithm \ref{alg:3} with $r=2$, $K_1=K_2=4$, and $s_1=s_2=10$ (Homo-Pursuit-Sparsity).
\end{itemize}

We evaluate the out-of-sample predictive performance of the six methods using Root Mean Squared Error (RMSE) on the test set (weeks 41-52). The average RMSE and its standard deviation are summarized in Table \ref{tbl:1}.

\begin{table}
  \caption{Prediction performance (RMSE and standard deviation) on test set.}
  \label{tbl:1}
  \begin{tabular}{ccc}
    \toprule
    \textbf{Method} & \textbf{Average RMSE} (\$1000) & \textbf{Std. Dev.} \\
    \midrule
    Homo-OLS & 298.7 & 44.1 \\
    Homo-LR  & 280.1 & 43.2 \\
    Hetero-OLS & 274.1 & 43.8 \\
    Hetero-LR & 263.3  & 39.4 \\
    Homo-Pursuit & 233.1 & \textbf{35.4} \\
    Homo-Pursuit-Sparsity & \textbf{231.5} & 36.0 \\
    \bottomrule
  \end{tabular}
\end{table}

The results demonstrate several key insights. Across all modeling strategies, incorporating low-rank structure leads to better predictive accuracy compared to their OLS counterparts, which highlights the benefit of dimension reduction in handling high-dimensional yet intrinsically low-rank structure of $\bm{B}_i^*$. In addition, the heterogeneous models (Hetero-OLS and Hetero-LR) generally outperform the fully homogeneous ones, reflecting the temporal and spatial variability in taxi demand and traffic conditions. However, the proposed models (Homo-Pursuit and Homo-Pursuit-Sparsity), which balance shared information and individual flexibility, achieve the best predictive performance. By learning a small number of shared components $(r=2,K_1=K_2=4)$, the proposed methods effectively distills common traffic and fare patterns across different time slots, while still allowing for individualized adjustments. The reduced vriance in prediction error further indicates greater robustness and stability of the homogeneity puruit approach.

The real data analysis on the NYC Taxi Trip Dataset provides empirical support for the effectiveness of the proposed multi-individual matrix regression framework with homogeneity pursuit. By leveraging structured low-rank representations, the method not only improves predictive accuracy but also enhances model interpretability and stability, making it a promising tool for urban mobility analytics and beyond.

\section{Conclusion and Discussion}\label{sec:conclusion}

We have introduced a multi-individual matrix regression framework with homogeneity pursuit, which models matrix-valued responses across individuals by decomposing each coefficient matrix into shared low-rank components and individualized components. This structure captures common patterns across individuals while allowing for personalized deviations, improving both interpretability and estimation efficiency.

We formulated the problem under a linear trace regression model and extended it to generalized linear models (GLMs). A scaled gradient descent algorithm with scaled hard thresholding was developed to estimate the shared and individualized components, with theoretical guarantees on estimation error, rank selection consistency, and convergence rates. Our method incorporates sparsity on shared components and uses ridge-type ratio estimators for rank selection, achieving strong theoretical performance under mild conditions.

Theoretically, we demonstrated that homogeneity pursuit leads to improved convergence rates, with an additional $n^{-1}$ factor reflecting the benefits of shared structure. Minimax lower bounds established the optimality of our approach in high-dimensional settings. Simulation experiments verified rank selection consistency, convergence behavior, and the advantages of low-rank and sparse modeling. In particular, the proposed estimators achieved rates aligned with our theoretical predictions.

In a real data application to the New York City Taxi Trip Dataset, the proposed method outperformed fully homogeneous, fully heterogeneous, and their low-rank or OLS counterparts, achieving the lowest prediction error and smallest variability. These results highlight the practical value of homogeneity pursuit in extracting structured information from multi-individual matrix data.

In summary, our framework provides a flexible, interpretable, and statistically efficient approach to modeling matrix-valued responses with shared and individual structures. It advances the methodology for multi-individual regression and is applicable to problems in urban analytics, personalized modeling, and beyond. Future work may explore nonlinear extensions and dynamic adaptations.



\begin{appendix}

\section{Tensor Algebra and Notations}\label{append:A}

Tensors are multi-dimensional arrays that generalize matrices to higher-order data. A $d$-th order tensor is represented as $\cm{X} \in \mathbb{R}^{p_1 \times p_2 \times \cdots \times p_d}$, where $p_k$ is the dimension along the $k$-th mode. In this article, we adopt the following notation:

\begin{itemize}
  \item \textbf{Vectors}: denoted by boldface lowercase letters, e.g., $\bm{x} \in \mathbb{R}^p$,
  \item \textbf{Matrices}: denoted by boldface uppercase letters, e.g., $\bm{X} \in \mathbb{R}^{p \times q}$,
  \item \textbf{Tensors}: denoted by boldface Euler letters, e.g., $\cm{X} \in \mathbb{R}^{p_1 \times \cdots \times p_d}$.
\end{itemize}

We refer readers to \citet{kolda2009tensor} for a comprehensive review of tensor operations and decompositions.

\paragraph*{Mode-$k$ matricization}

The \textbf{mode-$k$ matricization} (or \textbf{unfolding}) of a tensor $\cm{X} \in \mathbb{R}^{p_1 \times \cdots \times p_d}$ is a matrix obtained by rearranging the fibers of $\cm{X}$ along the $k$-th mode into columns. The result is denoted by $\cm{X}_{(k)} \in \mathbb{R}^{p_k \times p_{-k}}$, where $p_{-k} = \prod_{\ell=1, \ell \neq k}^d p_\ell$.

Each column of $\cm{X}_{(k)}$ corresponds to a \textit{fiber} of $\cm{X}$ along mode $k$, stacked in lexicographic order. The element $(i_1, i_2, \ldots, i_d)$ of $\cm{X}$ is mapped to the $(i_k, j)$-th entry of $\cm{X}_{(k)}$, where the index $j$ is given by:
\begin{equation}
  j = 1 + \sum_{\substack{s=1 \\ s 
\neq k}}^d (i_s - 1) \cdot J_s^{(k)},
  \quad \text{with} \quad
  J_s^{(k)} = \prod_{\substack{\ell=1 \\ \ell < s \\ \ell 
\neq k}}^d p_\ell, \quad \text{and} \quad p_0 = 1.
\end{equation}
This operation is central to defining mode-$k$ products and understanding Tucker decompositions.

\paragraph*{Mode-$k$ Product}

For a tensor $\cm{X} \in \mathbb{R}^{p_1 \times \cdots \times p_d}$ and a matrix $\bm{Y} \in \mathbb{R}^{q_k \times p_k}$, the \textbf{mode-$k$ product}, denoted $\cm{X} \times_k \bm{Y}$, results in a new tensor of size $p_1 \times \cdots \times p_{k-1} \times q_k \times p_{k+1} \times \cdots \times p_d$. Its entries are given by:
\begin{equation}
  \left( \cm{X} \times_k \bm{Y} \right)_{i_1 \cdots i_{k-1} j i_{k+1} \cdots i_d}
  = \sum_{i_k=1}^{p_k} \cm{X}_{i_1 \cdots i_k \cdots i_d} \cdot \bm{Y}_{j i_k},
  \quad \text{for all } j = 1, \ldots, q_k.
\end{equation}
This operation recombines the tensor along mode $k$ with the matrix $\bm{Y}$.

\paragraph*{Inner Product}

For two tensors $\cm{X} \in \mathbb{R}^{p_1 \times \cdots \times p_d}$ and $\cm{Y} \in \mathbb{R}^{p_1 \times \cdots \times p_{d}}$, their \textbf{inner product} is defined as:
\begin{equation}
  \langle \cm{X}, \cm{Y} \rangle
  = \sum_{i_1=1}^{p_1} \cdots \sum_{i_{d}=1}^{p_{d}} \cm{X}_{i_1 \cdots i_d} \cdot \cm{Y}_{i_1 \cdots i_d},
\end{equation}
and it results in a $(d - d_0)$-th order tensor with entries indexed by $(i_{d_0+1}, \ldots, i_d)$.
In the special case where $d = d_0$, the generalized inner product reduces to the standard \textbf{Frobenius inner product}, and we define the \textbf{Frobenius norm} of $\cm{X}$ as:
\begin{equation}
  \| \cm{X} \|_{\text{F}} = \sqrt{ \langle \cm{X}, \cm{X} \rangle }.
\end{equation}

\paragraph*{Tucker Decomposition and Tucker Ranks}

The \textbf{Tucker rank} of a tensor $\cm{X} \in \mathbb{R}^{p_1 \times \cdots \times p_d}$ is a vector $(r_1, \ldots, r_d)$, where each $r_k$ is the rank of the mode-$k$ matricization $\cm{X}_{(k)}$, i.e.,
\begin{equation}
  r_k = \text{rank}(\cm{X}_{(k)}) \in \mathbb{N}, \quad \text{for } k = 1, \ldots, d.
\end{equation}

While Tucker ranks are defined via matricization ranks, they correspond to the number of components retained in the \textbf{Tucker decomposition} of $\cm{X}$. If $\cm{X}$ has Tucker ranks $(r_1, \ldots, r_d)$, it can be written as:
\begin{equation}
  \cm{X} = \cm{Y} \times_{j=1}^d \bm{Y}_j
  = \cm{Y} \times_1 \bm{Y}_1 \times_2 \bm{Y}_2 \cdots \times_d \bm{Y}_d,
\end{equation}
where $\bm{Y}_j \in \mathbb{R}^{p_j \times r_j}$ is the factor matrix for mode $j$, and $\cm{Y} \in \mathbb{R}^{r_1 \times \cdots \times r_d}$ is the core tensor.

The mode-$k$ matricization of $\cm{X}$ under the Tucker decomposition can be expressed as:
\begin{equation}
  \cm{X}_{(k)} = \cm{Y}_{(k)} \left( \otimes_{j=1,j\neq k}^d \bm{Y}_j \right)^\top,
\end{equation}
where $\otimes$ is the Kronecker product, and the product is taken over all modes except $k$. This structure is central to our algorithm and theoretical analysis, as it allows us to work with low-rank representations in high-dimensional spaces.

\section{Computational Convergence Analysis}\label{append:B}

This appendix presents the proof of the computational convergence results for the proposed algorithms. Appendix \ref{sec:A.1} presents the proof of Theorem \ref{thm:1}, Appendix \ref{sec:A.2} presents the proof of Corollaries \ref{cor:1} and \ref{cor:2}, and Appendix \ref{sec:A.3} presents the proof of Theorem \ref{thm:2}. Auxiliary lemmas and their proofs are relegated to Appendix \ref{sec:A.4}.

\subsection{Proof of Theorem \ref{thm:1}}\label{sec:A.1}

\begin{proof}
    
    The proof of Theorem \ref{thm:1} consists of three steps and is constructed in an inductive approach. In the first step, we present the notations and state some key conditions. In the second step, we develop the core convergence analysis. Specifically, given some conditions hold at the step $t$, we present an upper bound for the step $t+1$, i.e.,
    \begin{equation}
        \text{dist}(\bm{\Theta}^{(t+1)},\bm{\Theta}^*)^2 \leq (1-C\ubar{\alpha}/\bar{\beta})\cdot\text{dist}(\bm{\Theta}^{(t)},\bm{\Theta}^*)^2 + C\ubar{\alpha}^{-1}\bar{\beta}^{-1}\xi.
    \end{equation}
    In the last step, we show that the required conditions also hold for the step $t+1$, and complete the proof by induction.\\

    \noindent\textit{Step 1.} (Notations and conditions) 

    \noindent We first present the notations in the proof. Denote the de-scaled partial gradients as
    \begin{equation}
        \begin{split}
            \bm{G}_C^{(t)} & = \nabla_{\bm{C}}\mathcal{L}_n(\bm{\Theta}^{(t)})(\widetilde{\bm{C}}^{(t)\top}\widetilde{\bm{C}}^{(t)})^{-1/2},~~\bm{G}_R^{(t)} = \nabla_{\bm{R}}\mathcal{L}_n(\bm{\Theta}^{(t)})(\widetilde{\bm{R}}^{(t)\top}\widetilde{\bm{R}}^{(t)})^{-1/2}, \\
            \bm{G}_{1i}^{(t)} & = (\bm{C}^{(t)\top}\bm{C}^{(t)})^{-1/2}\nabla_{\bm{L}_{1i}}\mathcal{L}_n(\bm{\Theta}^{(t)})(\bm{L}_{2i}^{(t)\top}\bm{R}^{(t)\top}\bm{R}^{(t)}\bm{L}_{2i}^{(t)})^{-1/2}, \\
            \bm{G}_{2i}^{(t)} & = (\bm{R}^{(t)\top}\bm{R}^{(t)})^{-1/2}\nabla_{\bm{L}_{2i}}\mathcal{L}_n(\bm{\Theta}^{(t)})(\bm{L}_{1i}^{(t)\top}\bm{C}^{(t)\top}\bm{C}^{(t)}\bm{L}_{1i}^{(t)})^{-1/2},
        \end{split}
    \end{equation}
    and their corresponding estimation errors as
    \begin{equation}
        \begin{split}
            \bm{\Delta}_C^{(t)} & =\bm{G}_C^{(t)} - \mathbb{E}[\nabla_{\bm{C}}\mathcal{L}_n(\bm{\Theta}^{(t)})](\widetilde{\bm{C}}^{(t)\top}\widetilde{\bm{C}}^{(t)})^{-1/2},\\
            \bm{\Delta}_R^{(t)} & =\bm{G}_R^{(t)} - \mathbb{E}[\nabla_{\bm{R}}\mathcal{L}_n(\bm{\Theta}^{(t)})](\widetilde{\bm{R}}^{(t)\top}\widetilde{\bm{R}}^{(t)})^{-1/2},\\
            \bm{\Delta}_{1i}^{(t)} & = \bm{G}_{1i}^{(t)} - (\bm{C}^{(t)\top}\bm{C}^{(t)})^{-1/2}\mathbb{E}[\nabla_{\bm{L}_{1i}}\mathcal{L}_n(\bm{\Theta}^{(t)})](\bm{L}_{2i}^{(t)\top}\bm{R}^{(t)\top}\bm{R}^{(t)}\bm{L}_{2i}^{(t)})^{-1/2},\\
            \bm{\Delta}_{2i}^{(t)} & = \bm{G}_{2i}^{(t)} - (\bm{R}^{(t)\top}\bm{R}^{(t)})^{-1/2}\mathbb{E}[\nabla_{\bm{L}_{2i}}\mathcal{L}_n(\bm{\Theta}^{(t)})](\bm{L}_{1i}^{(t)\top}\bm{C}^{(t)\top}\bm{C}^{(t)}\bm{L}_{1i}^{(t)})^{-1/2}.
        \end{split}
    \end{equation}
    Then, the proposed scaled gradient descent update at step $t$ can be formulated as
    \begin{equation}
        \begin{split}
            \bm{C}^{(t+1)} = & \bm{C}^{(t)} - \eta\cdot\bm{G}_C^{(t)}(\widetilde{\bm{C}}^{(t)\top}\widetilde{\bm{C}}^{(t)})^{-1/2}\\
            = & \bm{C}^{(t)} - \eta\cdot\mathbb{E}[\nabla_{\bm{C}}\mathcal{L}_n(\bm{\Theta}^{(t)})](\widetilde{\bm{C}}^{(t)\top}\widetilde{\bm{C}}^{(t)})^{-1} - \eta\cdot\bm{\Delta}_C^{(t)}(\widetilde{\bm{C}}^{(t)\top}\widetilde{\bm{C}}^{(t)})^{-1/2},\\
            \bm{R}^{(t+1)} = & \bm{R}^{(t)} - \eta\cdot\bm{G}_R^{(t)}(\widetilde{\bm{R}}^{(t)\top}\widetilde{\bm{R}}^{(t)})^{-1/2}\\
            = & \bm{R}^{(t)} - \eta\cdot\mathbb{E}[\nabla_{\bm{R}}\mathcal{L}_n(\bm{\Theta}^{(t)})](\widetilde{\bm{R}}^{(t)\top}\widetilde{\bm{R}}^{(t)})^{-1} - \eta\cdot\bm{\Delta}_R^{(t)}(\widetilde{\bm{R}}^{(t)\top}\widetilde{\bm{R}}^{(t)})^{-1/2},\\
            \bm{L}_{1i}^{(t+1)} = & \bm{L}_{1i}^{(t)} - \eta\cdot(\bm{C}^{(t)\top}\bm{C}^{(t)})^{-1/2}\bm{G}_{1i}^{(t)}(\bm{L}_{2i}^{(t)\top}\bm{R}^{(t)\top}\bm{R}^{(t)}\bm{L}_{2i}^{(t)})^{-1/2}\\
            = & \bm{L}_{1i}^{(t)} - \eta\cdot(\bm{C}^{(t)\top}\bm{C}^{(t)})^{-1}\mathbb{E}[\nabla_{\bm{L}_{1i}}\mathcal{L}_n(\bm{\Theta}^{(t)})](\bm{L}_{2i}^{(t)\top}\bm{R}^{(t)\top}\bm{R}^{(t)}\bm{L}_{2i}^{(t)})^{-1} \\
            & - \eta\cdot(\bm{C}^{(t)\top}\bm{C}^{(t)})^{-1/2}\bm{\Delta}_{1i}^{(t)}(\bm{L}_{2i}^{(t)\top}\bm{R}^{(t)\top}\bm{R}^{(t)}\bm{L}_{2i}^{(t)})^{-1/2},\\
            \bm{L}_{2i}^{(t+1)} = & \bm{L}_{2i}^{(t)} - \eta\cdot(\bm{R}^{(t)\top}\bm{R}^{(t)})^{-1/2}\bm{G}_{2i}^{(t)}(\bm{L}_{1i}^{(t)\top}\bm{C}^{(t)\top}\bm{C}^{(t)}\bm{L}_{1i}^{(t)})^{-1/2}\\
            = & \bm{L}_{2i}^{(t)} - \eta\cdot(\bm{R}^{(t)\top}\bm{R}^{(t)})^{-1}\mathbb{E}[\nabla_{\bm{L}_{2i}}\mathcal{L}_n(\bm{\Theta}^{(t)})](\bm{L}_{1i}^{(t)\top}\bm{C}^{(t)\top}\bm{C}^{(t)}\bm{L}_{1i}^{(t)})^{-1} \\
            & - \eta\cdot(\bm{R}^{(t)\top}\bm{R}^{(t)})^{-1/2}\bm{\Delta}_{2i}^{(t)}(\bm{L}_{1i}^{(t)\top}\bm{C}^{(t)\top}\bm{C}^{(t)}\bm{L}_{1i}^{(t)})^{-1/2}.
        \end{split}
    \end{equation}

    Define the optimal transformations
    \begin{equation}
        \begin{split}
            & (\bm{Q}_{1t},\bm{Q}_{2t},\bm{P}_{1t},\dots,\bm{P}_{nt})\\
            = & \inf_{\bm{Q}_1,\bm{Q}_2,\bm{P}_1,\dots,\bm{P}_n}\Bigg\{ \|(\bm{C}^{(t)}\bm{Q}_1-\bm{C}^*)\bm{\Sigma}_{C}\|_\text{F}^2 + \|(\bm{R}^{(t)}\bm{Q}_2-\bm{R}^*)\bm{\Sigma}_{R}\|_\text{F}^2 \\
            & + \sum_{i=1}^n\|(\bm{Q}_1^{-1}\bm{L}_{1i}^{(t)}\bm{P}_i - \bm{L}_{1i}^*)\bm{\Sigma}_i^{1/2}\|_\text{F}^2 + \sum_{i=1}^n\|(\bm{Q}_2^{-1}\bm{L}_{2i}^{(t)}\bm{P}_i^{-\top} - \bm{L}_{2i}^*)\bm{\Sigma}_i^{1/2}\|_\text{F}^2 \Bigg\}
        \end{split}
    \end{equation}
    to align $(\bm{C}^{(t)},\bm{R}^{(t)},\bm{L}_{11}^{(t)},\bm{L}_{21}^{(t)},\dots,\bm{L}_{1n}^{(t)},\bm{L}_{2n}^{(t)})$ and $(\bm{C}^{*},\bm{R}^{*},\bm{L}_{11}^{*},\bm{L}_{21}^{*},\dots,\bm{L}_{1n}^{*},\bm{L}_{2n}^{*})$. 

    For simplicity, as we focus on the generic step $t$, denote the decomposition components with these optimal alignments as 
    $\bm{C}=\bm{C}^{(t)}\bm{Q}_{1t}$, 
    $\widetilde{\bm{C}}=\widetilde{\bm{C}}^{(t)}\bm{Q}_{1t}^{-\top}$, 
    $\bm{R}=\bm{R}^{(t)}\bm{Q}_{2t}$, 
    $\widetilde{\bm{R}}=\widetilde{\bm{R}}^{(t)}\bm{Q}_{2t}^{-\top}$, 
    $\cm{G}=\cm{G}^{(t)}\times_1\bm{Q}_{1t}^{\top}\times_2\bm{Q}_{2t}^{\top}$, 
    $\bm{L}_{1i}=\bm{Q}_{1t}^{-1}\bm{L}_{1i}^{(t)}\bm{P}_{it}$, 
    $\bm{L}_{2i}=\bm{Q}_{2t}^{-1}\bm{L}_{2i}^{(t)}\bm{P}_{it}^{-\top}$, 
    $\bm{R}_i=\bm{R}\bm{L}_{2i}$, and 
    $\bm{C}_i=\bm{C}\bm{L}_{2i}$. 
    Thus, we have $(\bm{C}^{(t)\top}\bm{C}^{(t)})^{-1}=\bm{Q}_{1t}(\bm{C}^\top\bm{C})^{-1}\bm{Q}_{1t}^{\top}$, $(\widetilde{\bm{C}}^{(t)\top}\widetilde{\bm{C}}^{(t)})^{-1}=\bm{Q}_{1t}^{-\top}(\widetilde{\bm{C}}^\top\widetilde{\bm{C}})^{-1}\bm{Q}_{1t}^{-1}$, $(\bm{R}^{(t)\top}\bm{R}^{(t)})^{-1}=\bm{Q}_{2t}(\bm{R}^\top\bm{R})^{-1}\bm{Q}_{2t}^{\top}$, $(\widetilde{\bm{R}}^{(t)\top}\widetilde{\bm{R}}^{(t)})^{-1}=\bm{Q}_{2t}^{-\top}(\widetilde{\bm{R}}^\top\widetilde{\bm{R}})^{-1}\bm{Q}_{2t}^{-1}$, $(\bm{L}_{2i}^{(t)\top}\bm{R}^{(t)\top}\bm{R}^{(t)}\bm{L}_{2i}^{(t)})^{-1}=\bm{P}_{it}^{-\top}(\bm{R}_i^\top\bm{R}_i)^{-1}\bm{P}_{it}^{-1}$, $(\bm{L}_{1i}^{(t)\top}\bm{C}^{(t)\top}\bm{C}^{(t)}\bm{L}_{1i}^{(t)})^{-1}=\bm{P}_{it}(\bm{C}_i^\top\bm{C}_i)^{-1}\bm{P}_{it}^{\top}$.

    Then, the proposed scaled gradient descent can be rewritten as
    \begin{equation}
        \begin{split}
            \bm{C}^{(t+1)} = & \bm{C}^{(t)} - \eta\cdot\bm{H}_C^{(t)}\widetilde{\bm{C}}^{(t)}(\widetilde{\bm{C}}^{(t)\top}\widetilde{\bm{C}}^{(t)})^{-1} - \eta\cdot\bm{\Delta}_C^{(t)}(\widetilde{\bm{C}}^{(t)\top}\widetilde{\bm{C}}^{(t)})^{-1/2}\\
            = & \bm{C}^{(t)} - \eta\cdot\bm{H}_C^{(t)}\widetilde{\bm{C}}(\widetilde{\bm{C}}^{\top}\widetilde{\bm{C}})^{-1}\bm{Q}_{1t}^{-1} - \eta\cdot\bm{\Delta}_C^{(t)}(\widetilde{\bm{C}}^{(t)\top}\widetilde{\bm{C}}^{(t)})^{-1/2},
        \end{split}
    \end{equation}
    \begin{equation}
        \begin{split}
            \bm{R}^{(t+1)} = & \bm{R}^{(t)} - \eta\cdot\bm{H}_R^{(t)}\widetilde{\bm{R}}^{(t)}(\widetilde{\bm{R}}^{(t)\top}\widetilde{\bm{R}}^{(t)})^{-1} - \eta\cdot\bm{\Delta}_R^{(t)}(\widetilde{\bm{R}}^{(t)\top}\widetilde{\bm{R}}^{(t)})^{-1/2}\\
            = & \bm{R}^{(t)} - \eta\cdot\bm{H}_R^{(t)}\widetilde{\bm{R}}(\widetilde{\bm{R}}^{\top}\widetilde{\bm{R}})^{-1}\bm{Q}_{2t}^{-1} - \eta\cdot\bm{\Delta}_R^{(t)}(\widetilde{\bm{R}}^{(t)\top}\widetilde{\bm{R}}^{(t)})^{-1/2},
        \end{split}
    \end{equation}
    \begin{equation}
        \begin{split}
            \bm{L}_{1i}^{(t+1)} = & \bm{L}_{1i}^{(t)} - \eta\cdot(\bm{C}^{(t)\top}\bm{C}^{(t)})^{-1}\bm{C}^{(t)\top}\bm{H}_i^{(t)}\bm{R}^{(t)}\bm{L}_{2i}^{(t)}(\bm{L}_{2i}^{(t)\top}\bm{R}^{(t)\top}\bm{R}^{(t)}\bm{L}_{2i}^{(t)})^{-1} \\
            & - \eta\cdot(\bm{C}^{(t)\top}\bm{C}^{(t)})^{-1/2}\bm{\Delta}_{1i}^{(t)}(\bm{L}_{2i}^{(t)\top}\bm{R}^{(t)\top}\bm{R}^{(t)}\bm{L}_{2i}^{(t)})^{-1/2}\\
            = & \bm{L}_{1i}^{(t)} - \eta\cdot\bm{Q}_{1t}(\bm{C}^{\top}\bm{C})^{-1}\bm{C}^{\top}\bm{H}_i^{(t)}\bm{R}_i(\bm{R}_i^{\top}\bm{R}_i)^{-1}\bm{P}_{it}^{-1}\\
            & - \eta\cdot(\bm{C}^{(t)\top}\bm{C}^{(t)})^{-1/2}\bm{\Delta}_{1i}^{(t)}(\bm{L}_{2i}^{(t)\top}\bm{R}^{(t)\top}\bm{R}^{(t)}\bm{L}_{2i}^{(t)})^{-1/2},\\
        \end{split}
    \end{equation}
    and
    \begin{equation}
        \begin{split}
            \bm{L}_{2i}^{(t+1)} = & \bm{L}_{2i}^{(t)} - \eta\cdot(\bm{R}^{(t)\top}\bm{R}^{(t)})^{-1}\bm{R}^{(t)\top}\bm{H}_i^{(t)\top}\bm{C}^{(t)}\bm{L}_{1i}^{(t)}(\bm{L}_{1i}^{(t)\top}\bm{C}^{(t)\top}\bm{C}^{(t)}\bm{L}_{1i}^{(t)})^{-1} \\
            & - \eta\cdot(\bm{R}^{(t)\top}\bm{R}^{(t)})^{-1/2}\bm{\Delta}_{2i}^{(t)}(\bm{L}_{1i}^{(t)\top}\bm{C}^{(t)\top}\bm{C}^{(t)}\bm{L}_{1i}^{(t)})^{-1/2}\\
            = & \bm{L}_{2i}^{(t)} - \eta\cdot\bm{Q}_{2t}(\bm{R}^{\top}\bm{R})^{-1}\bm{R}^{\top}\bm{H}_i^{(t)\top}\bm{C}_i(\bm{C}_i^{\top}\bm{C}_i)^{-1}\bm{P}_{it}^{\top}\\
            & - \eta\cdot(\bm{R}^{(t)\top}\bm{R}^{(t)})^{-1/2}\bm{\Delta}_{1i}^{(t)}(\bm{L}_{1i}^{(t)\top}\bm{C}^{(t)\top}\bm{C}^{(t)}\bm{L}_{1i}^{(t)})^{-1/2},
        \end{split}
    \end{equation}
    where
    \begin{equation}
        \begin{split}
            \bm{H}_C^{(t)} & = \sum_{i=1}^nm_i\cdot\mathbb{E}\left[\bm{e}_n(n)^\top\otimes\nabla\mathcal{L}(\bm{B}_i^{(t)};\bm{X}_{ij},y_{ij})\right],\\
            \bm{H}_R^{(t)} & = \sum_{i=1}^nm_i\cdot\mathbb{E}\left[\bm{e}_n(n)^\top\otimes\nabla\mathcal{L}(\bm{B}_i^{(t)};\bm{X}_{ij},y_{ij})^\top\right],\\
            \text{and  }\bm{H}_i^{(t)} & = m_i\cdot\mathbb{E}[\nabla\mathcal{L}(\bm{B}_i^{(t)};\bm{X}_{ij},y_{ij})],
        \end{split}
    \end{equation}
    such that $\|\bm{H}_C^{(t)}\|_\text{F}^2=\|\bm{H}_R^{(t)}\|_\text{F}^2=\sum_{i=1}^n\|\bm{H}_i^{(t)}\|_\text{F}^2$.

    Next, we present some necessary conditions. First, the de-scaled partial gradients satisfy the stability conditions
    \begin{equation}
        \begin{split}
            \|\bm{\Delta}_C^{(t)}\|_\text{F}^2 & \leq \phi\|\cm{B}^{(t)} - \cm{B}^*\|_\text{F}^2 + \xi_C,~~\|\bm{\Delta}_R^{(t)}\|_\text{F}^2 \leq \phi\|\cm{B}^{(t)} - \cm{B}^*\|_\text{F}^2 + \xi_R, \\
            \|\bm{\Delta}_{1i}^{(t)}\|_\text{F}^2 & \leq \phi\|\bm{B}_i^{(t)} - \bm{B}_i^*\|_\text{F}^2 + \xi_{1i}, ~~\|\bm{\Delta}_{2i}^{(t)}\|_\text{F}^2 \leq \phi\|\bm{B}_i^{(t)} - \bm{B}_i^*\|_\text{F}^2 + \xi_{2i} .
        \end{split}
    \end{equation}
    Second, we assume that for all $t=0,1,2,\dots,T$,
    \begin{equation}\label{eq:cond1}
        \text{dist}(\bm{\Theta}^{(t)},\bm{\Theta}^*)\leq \epsilon\underline{\sigma}
    \end{equation}
    for some sufficiently small $\epsilon$. Then, by Lemma \ref{lemma:matrix_perturb}, we have
    \begin{equation}
        \|\cm{G}^{(t)}\times_1\bm{Q}_1^{-1}\times_2\bm{Q}_2^{-1}-\cm{G}^*\|_\text{F}^2 = \|\cm{G}-\cm{G}^*\|_\text{F}^2 \leq (1+\sqrt{2})\epsilon^2\underline{\sigma}^2
    \end{equation}
    and
    \begin{equation}
        \|(\bm{C}-\bm{C}^*)\bm{\Sigma}_C\|_\text{F}^2 + \|(\bm{R}-\bm{R}^*)\bm{\Sigma}_R\|_\text{F}^2 + \|\cm{G}-\cm{G}^*\|_\text{F}^2 \leq (2+\sqrt{2})\epsilon^2\underline{\sigma}^2.
    \end{equation}
    Further, by Lemma \ref{lemma:tensor_perturb}, the following upper bounds hold for some small $C$,
    \begin{equation}\label{eq:perturb_upper_bound1}
        \begin{split}
            \max(\|\bm{C}-\bm{C}^*\|,\|\bm{R}-\bm{R}^*\|,\|(\cm{G}-\cm{G}^*)_{(1)}\bm{\Sigma}_C^{-1}\|,\|(\cm{G}-\cm{G}^*)_{(2)}\bm{\Sigma}_R^{-1}\|) & \leq C\epsilon,\\
            \|\bm{C}(\bm{C}^\top\bm{C})^{-1}\| \leq 1+C\epsilon,~~~
            \|\bm{C}(\bm{C}^\top\bm{C})^{-1}-\bm{C}^*\| & \leq C\epsilon,\\
            \|(\widetilde{\bm{C}}-\widetilde{\bm{C}}^*)\bm{\Sigma}_C^{-1}\| \leq C\epsilon,~~~
            \|\widetilde{\bm{C}}(\widetilde{\bm{C}}^\top\widetilde{\bm{C}})^{-1}\bm{\Sigma}_C\| & \leq 1+C\epsilon,\\
            \|\widetilde{\bm{C}}(\widetilde{\bm{C}}^\top\widetilde{\bm{C}})^{-1}\bm{\Sigma}_C - \widetilde{\bm{C}}\bm{\Sigma}_C^{-1}\| \leq C\epsilon,~~~
            \|\bm{\Sigma}_C(\widetilde{\bm{C}}^\top\widetilde{\bm{C}})^{-1}\bm{\Sigma}_C\| & \leq 1+C\epsilon.
        \end{split}
    \end{equation}
    By symmetry, the corresponding upper bounds for $\bm{R}$ and $\widetilde{\bm{R}}$ hold as well. Similarly, by Lemma \ref{lemma:matrix_perturb}, we also have
    \begin{equation}\label{eq:perturb_upper_bound2}
        \begin{split}
            \max_{1\leq i\leq n}(\|(\bm{C}_i-\bm{C}_i^*)\bm{\Sigma}_i^{1/2}\|,\|(\bm{R}_i-\bm{R}_i^*)\bm{\Sigma}_i^{1/2}\|) & \leq C\epsilon,\\
            \|(\bm{C}_i^*(\bm{C}_i^{*\top}\bm{C}_i^*)^{-1}-\bm{C}_i(\bm{C}_i^\top\bm{C}_i)^{-1})\bm{\Sigma}_i^{1/2}\| & \leq C\epsilon,\\
            \|(\bm{R}_i^*(\bm{R}_i^{*\top}\bm{R}_i^*)^{-1}-\bm{R}_i(\bm{R}_i^\top\bm{R}_i)^{-1})\bm{\Sigma}_i^{1/2}\| & \leq C\epsilon.
        \end{split}
    \end{equation}~

    \noindent\textit{Step 2.} (Upper bound for $\text{dist}(\bm{\Theta}^{(t+1)},\bm{\Theta}^*)$)

    \noindent By definition of $\bm{Q}_{1t}$, $\bm{Q}_{2t}$, and $\bm{P}_{it}$'s,
    \begin{equation}
        \begin{split}
            & \text{dist}(\bm{\Theta}^{(t+1)},\bm{\Theta}^*)\\
            = & \|(\bm{C}^{(t+1)}\bm{Q}_{1,t+1}-\bm{C}^*)\bm{\Sigma}_C\|_\text{F}^2 + \|(\bm{R}^{(t+1)}\bm{Q}_{2,t+1}-\bm{R}^*)\bm{\Sigma}_R\|_\text{F}^2 \\
            & + \sum_{i=1}^n\|(\bm{Q}_{1,t+1}^{-1}\bm{L}_{1i}^{(t+1)}\bm{P}_{i,t+1} - \bm{L}_{1i}^*)\bm{\Sigma}_i^{1/2}\|_\text{F}^2 + \sum_{i=1}^n\|(\bm{Q}_{2,t+1}^{-1}\bm{L}_{2i}^{(t+1)}\bm{P}_{i,t+1}^{-\top} - \bm{L}_{2i}^*)\bm{\Sigma}_i^{1/2}\|_\text{F}^2\\
            \leq & \|(\bm{C}^{(t+1)}\bm{Q}_{1t}-\bm{C}^*)\bm{\Sigma}_C\|_\text{F}^2 + \|(\bm{R}^{(t+1)}\bm{Q}_{2t}-\bm{R}^*)\bm{\Sigma}_R\|_\text{F}^2 \\
            & + \sum_{i=1}^n\|(\bm{Q}_{1t}^{-1}\bm{L}_{1i}^{(t+1)}\bm{P}_{it} - \bm{L}_{1i}^*)\bm{\Sigma}_i^{1/2}\|_\text{F}^2 + \sum_{i=1}^n\|(\bm{Q}_{2t}^{-1}\bm{L}_{2i}^{(t+1)}\bm{P}_{it}^{-\top} - \bm{L}_{2i}^*)\bm{\Sigma}_i^{1/2}\|_\text{F}^2.
        \end{split}
    \end{equation}
    Then, we will derive the upper bounds for the terms of $\bm{C}$, $\bm{R}$, $\bm{L}_{1i}$'s and $\bm{L}_{2i}$'s, respectively.\\

    \noindent{\textit{Step 2.1}} ($\bm{C}$ and $\bm{R}$ upper bounds)

    \noindent We first develop an upper bound for $\|(\bm{C}^{(t+1)}\bm{Q}_{1t}-\bm{C}^*)\bm{\Sigma}_C\|_\text{F}$. By mean inequality and the upper bounds in \eqref{eq:perturb_upper_bound1}, for any $\zeta>0$,
    \begin{equation}
        \begin{split}
            & \|(\bm{C}^{(t+1)}\bm{Q}_{1t}-\bm{C}^*)\bm{\Sigma}_C\|_\text{F}^2\\
            = & \|(\bm{C} - \eta\cdot\bm{H}_C^{(t)}\widetilde{\bm{C}}(\widetilde{\bm{C}}^{\top}\widetilde{\bm{C}})^{-1} - \eta\cdot\bm{\Delta}_C^{(t)}(\widetilde{\bm{C}}^{(t)\top}\widetilde{\bm{C}}^{(t)})^{-1/2}\bm{Q}_{1t}-\bm{C}^*)\bm{\Sigma}_C\|_\text{F}^2\\
            \leq & (1+\zeta)\|(\bm{C}-\bm{C}^*-\eta\cdot\bm{H}_C^{(t)}\widetilde{\bm{C}}(\widetilde{\bm{C}}^{\top}\widetilde{\bm{C}})^{-1})\bm{\Sigma}_C\|_\text{F}^2\\
            & + (1+\zeta^{-1})\eta^2\|\bm{\Delta}_C^{(t)}(\widetilde{\bm{C}}^{(t)\top}\widetilde{\bm{C}}^{(t)})^{-1/2}\bm{Q}_{1t}\bm{\Sigma}_C\|_\text{F}^2\\
            \leq & (1+\zeta)\|(\bm{C}-\bm{C}^*-\eta\cdot\bm{H}_C^{(t)}\widetilde{\bm{C}}(\widetilde{\bm{C}}^{\top}\widetilde{\bm{C}})^{-1})\bm{\Sigma}_C\|_\text{F}^2\\
            & + (1+\zeta^{-1})\eta^2\|\bm{\Delta}_C^{(t)}\|_\text{F}^2\cdot\|\bm{\Sigma}_C\bm{Q}_{1t}^\top(\widetilde{\bm{C}}^{(t)\top}\widetilde{\bm{C}}^{(t)})^{-1}\bm{Q}_{1t}\bm{\Sigma}_C\|\\
            \leq & (1+\zeta)\|(\bm{C}-\bm{C}^*-\eta\cdot\bm{H}_C^{(t)}\widetilde{\bm{C}}(\widetilde{\bm{C}}^{\top}\widetilde{\bm{C}})^{-1})\bm{\Sigma}_C\|_\text{F}^2+ C(1+\zeta^{-1})\eta^2\|\bm{\Delta}_C^{(t)}\|_\text{F}^2.
        \end{split}
    \end{equation}
    For any $\gamma>0$, the first term can be rewritten as
    \begin{equation}
        \begin{split}
            & \|(\bm{C}-\bm{C}^*-\eta\cdot\bm{H}_C^{(t)}\widetilde{\bm{C}}(\widetilde{\bm{C}}^{\top}\widetilde{\bm{C}})^{-1})\bm{\Sigma}_C\|_\text{F}^2\\
            = & \|(\bm{C}-\bm{C}^*)\bm{\Sigma}_C\|_\text{F}^2 + \eta^2\cdot\|\bm{H}_C^{(t)}\widetilde{\bm{C}}(\widetilde{\bm{C}}^{\top}\widetilde{\bm{C}})^{-1}\bm{\Sigma}_C\|_\text{F}^2 \\
            & - 2\eta\langle (\bm{C}-\bm{C}^*)\bm{\Sigma}_C, \bm{H}_C^{(t)}\widetilde{\bm{C}}(\widetilde{\bm{C}}^{\top}\widetilde{\bm{C}})^{-1}\bm{\Sigma}_C\rangle\\
            = & \|(\bm{C}-\bm{C}^*)\bm{\Sigma}_C\|_\text{F}^2 + \eta^2\cdot\|\bm{H}_C^{(t)}\widetilde{\bm{C}}(\widetilde{\bm{C}}^{\top}\widetilde{\bm{C}})^{-1}\bm{\Sigma}_C\|_\text{F}^2 \\
            & - 2\eta\langle (\bm{C}-\bm{C}^*)\widetilde{\bm{C}}^{*\top}, \bm{H}_C^{(t)}\rangle\\
            & + 2\eta\langle (\bm{C}-\bm{C}^*)\bm{\Sigma}_C, \bm{H}_C^{(t)}(\widetilde{\bm{C}}^*(\widetilde{\bm{C}}^{*\top}\widetilde{\bm{C}}^*)^{-1/2} - \widetilde{\bm{C}}(\widetilde{\bm{C}}^{\top}\widetilde{\bm{C}})^{-1}\bm{\Sigma}_C)\rangle\\
            \leq & \|(\bm{C}-\bm{C}^*)\bm{\Sigma}_C\|_\text{F}^2 + \eta^2\cdot\|\bm{H}_C^{(t)}\|_\text{F}^2\cdot\|\widetilde{\bm{C}}(\widetilde{\bm{C}}^{\top}\widetilde{\bm{C}})^{-1}\bm{\Sigma}_C\|^2-2\eta\cdot M_C \\
            & + 2\eta\cdot\|(\bm{C}-\bm{C}^*)\bm{\Sigma}_C\|_\text{F}\cdot\|\bm{H}_C^{(t)}\|_\text{F}\cdot\|\widetilde{\bm{C}}^*(\widetilde{\bm{C}}^{*\top}\widetilde{\bm{C}}^*)^{-1/2} - \widetilde{\bm{C}}(\widetilde{\bm{C}}^{\top}\widetilde{\bm{C}})^{-1}\bm{\Sigma}_C\|\\
            \leq & \|(\bm{C}-\bm{C}^*)\bm{\Sigma}_C\|_\text{F}^2 + C\eta^2\cdot\|\bm{H}_C^{(t)}\|_\text{F}^2 - 2\eta\cdot M_C\\
            & + C\epsilon\eta\gamma^{-1}\cdot\|\bm{H}_C^{(t)}\|_\text{F}^2 + C\epsilon\eta \gamma\cdot\|(\bm{C}-\bm{C}^*)\bm{\Sigma}_C\|_\text{F}^2,
        \end{split}
    \end{equation}
    where $M_C=\langle(\bm{C}-\bm{C}^*)\widetilde{\bm{C}}^{*\top},\bm{H}_C^{(t)}\rangle$.

    Similarly, for $\bm{R}$, we have
    \begin{equation}
        \begin{split}
            & \|(\bm{R}^{(t+1)}\bm{Q}_{2t}-\bm{R}^*)\bm{\Sigma}_R\|_\text{F}^2\\
            \leq & (1+\zeta)\|(\bm{R}-\bm{R}^*-\eta\cdot\mathbb{E}[\nabla\overline{\mathcal{L}}_n(\cm{B}^{(t)})]_{(2)}\widetilde{\bm{R}}(\widetilde{\bm{R}}^\top\widetilde{\bm{R}})^{-1})\bm{\Sigma}_R\|_\text{F}^2 \\
            & + (1+\zeta^{-1})\eta^2\|\bm{\Delta}_R^{(t)}(\widetilde{\bm{R}}^{(t)\top}\widetilde{\bm{R}}^{(t)})^{-1/2}\bm{Q}_{2t}\bm{\Sigma}_R\|_\text{F}^2\\
            \leq & (1+\zeta)\|(\bm{R}-\bm{R}^*-\eta\cdot\mathbb{E}[\nabla\overline{\mathcal{L}}_n(\cm{B}^{(t)})]_{(2)}\widetilde{\bm{R}}(\widetilde{\bm{R}}^\top\widetilde{\bm{R}})^{-1})\bm{\Sigma}_R\|_\text{F}^2  + C(1+\zeta^{-1})\eta^2\|\bm{\Delta}_R^{(t)}\|_\text{F}^2.
        \end{split}
    \end{equation}
    For any $\gamma>0$, the first term can be further bounded by
    \begin{equation}
        \begin{split}
            & \|(\bm{R}-\bm{R}^*-\eta\cdot\bm{H}_R^{(t)}\widetilde{\bm{R}}(\widetilde{\bm{R}}^\top\widetilde{\bm{R}})^{-1})\bm{\Sigma}_R\|_\text{F}^2\\
            \leq & \|(\bm{R}-\bm{R}^*)\bm{\Sigma}_2^*\|_\text{F}^2 + C_1\eta^2\|\bm{H}_R^{(t)}\|_\text{F}^2 - 2\eta\cdot M_R\\
            & + C\epsilon\eta\gamma^{-1}\cdot\|\bm{H}_R^{(t)}\|_\text{F}^2 + C\epsilon\eta\gamma\cdot\|(\bm{R}-\bm{R}^*)\bm{\Sigma}_R\|_\text{F}^2,
        \end{split}
    \end{equation}
    where $M_R=\langle(\bm{R}-\bm{R}^*)\widetilde{\bm{R}}^{*\top},\bm{H}_R^{(t)}\rangle$.\\

    \noindent\textit{Step 2.2} (Upper bounds for $\bm{L}_{1i}$ and $\bm{L}_{2i}$)

    \noindent Next, we focus on the terms related to $\bm{L}_{1i}$ and $\bm{L}_{2i}$. For any $\zeta>0$,
    \begin{equation}
        \begin{split}
            & \|(\bm{Q}_{1t}^{-1}\bm{L}_{1i}^{(t+1)}\bm{P}_{it} - \bm{L}_{1i}^*)\bm{\Sigma}_i^{1/2}\|_\text{F}^2 \\
            = & \|(\bm{Q}_{1t}^{-1}[\bm{L}_{1i}^{(t)} - \eta\cdot\bm{Q}_{1t}(\bm{C}^{\top}\bm{C})^{-1}\bm{C}^{\top}\mathbb{E}[\nabla\mathcal{L}(\bm{B}_i^{(t)})]\bm{R}_i(\bm{R}_i^{\top}\bm{R}_i)^{-1}\bm{P}_{it}^{-1}]\bm{P}_{it} \\ 
            & -\eta\bm{Q}_{1t}^{-1}(\bm{C}^{(t)\top}\bm{C}^{(t)})^{-1/2}\bm{\Delta}_{1i}^{(t)}(\bm{L}_{2i}^{(t)\top}\bm{R}^{(t)\top}\bm{R}^{(t)}\bm{L}_{2i}^{(t)})^{-1/2}\bm{P}_{it} - \bm{L}_{1i}^*)\bm{\Sigma}_i^{1/2}\|_\text{F}^2 \\
            \leq & (1+\zeta)\|(\bm{L}_{1i}-\bm{L}_{1i}^*-\eta(\bm{C}^{\top}\bm{C})^{-1}\bm{C}^{\top}\bm{H}_i^{(t)}\bm{R}_i(\bm{R}_i^{\top}\bm{R}_i)^{-1})\bm{\Sigma}_i^{1/2}\|_\text{F}^2\\
            & + (1+\zeta^{-1})\eta^2\|\bm{Q}_{1t}^{-1}(\bm{C}^{(t)\top}\bm{C}^{(t)})^{-1/2}\bm{\Delta}_{1i}^{(t)}(\bm{L}_{2i}^{(t)\top}\bm{R}^{(t)\top}\bm{R}^{(t)}\bm{L}_{2i}^{(t)})^{-1/2}\bm{P}_{it}\bm{\Sigma}_i^{1/2}\|_\text{F}^2\\
            \leq & (1+\zeta)\|(\bm{L}_{1i}-\bm{L}_{1i}^*-\eta(\bm{C}^{\top}\bm{C})^{-1}\bm{C}^{\top}\bm{H}_i^{(t)}\bm{R}_i(\bm{R}_i^{\top}\bm{R}_i)^{-1})\bm{\Sigma}_i^{1/2}\|_\text{F}^2 + C(1+\zeta^{-1})\eta^2\|\bm{\Delta}_{1i}^{(t)}\|_\text{F}^2.
        \end{split}
    \end{equation}
    For any $\gamma>0$, by \eqref{eq:perturb_upper_bound1} and \eqref{eq:perturb_upper_bound2}, the first term can be bounded by
    \begin{equation}
        \begin{split}
            & \|(\bm{L}_{1i}-\bm{L}_{1i}^*-\eta(\bm{C}^{\top}\bm{C})^{-1}\bm{C}^{\top}\bm{H}_i^{(t)}\bm{R}_i(\bm{R}_i^{\top}\bm{R}_i)^{-1})\bm{\Sigma}_i^{1/2}\|_\text{F}^2\\
            = & \|(\bm{L}_{1i}-\bm{L}_{1i}^*)\bm{\Sigma}_i^{1/2}\|_\text{F}^2 + \eta^2\|(\bm{C}^{\top}\bm{C})^{-1}\bm{C}^{\top}\bm{H}_i^{(t)}\bm{R}_i(\bm{R}_i^{\top}\bm{R}_i)^{-1}\bm{\Sigma}_i^{1/2}\|_\text{F}^2\\
            & -2\eta\langle(\bm{L}_{1i}-\bm{L}_{1i}^*)\bm{\Sigma}_i^{1/2}, (\bm{C}^{\top}\bm{C})^{-1}\bm{C}^{\top}\bm{H}_i^{(t)}\bm{R}_i(\bm{R}_i^{\top}\bm{R}_i)^{-1}\bm{\Sigma}_i^{1/2}\rangle\\
            \leq & \|(\bm{L}_{1i}-\bm{L}_{1i}^*)\bm{\Sigma}_i^{1/2}\|_\text{F}^2 + \eta^2\|(\bm{C}^{\top}\bm{C})^{-1}\bm{C}^{\top}\bm{H}_i^{(t)}\bm{R}_i(\bm{R}_i^{\top}\bm{R}_i)^{-1}\bm{\Sigma}_i^{1/2}\|_\text{F}^2\\
            & -2\eta\langle(\bm{L}_{1i}-\bm{L}_{1i}^*)\bm{\Sigma}_i^{1/2}, \bm{C}^{*\top}\bm{H}_i^{(t)}\bm{R}_i^*(\bm{R}_i^{*\top}\bm{R}_i^*)^{-1}\bm{\Sigma}_i^{1/2}\rangle\\
            & +2\eta\langle(\bm{L}_{1i}-\bm{L}_{1i}^*)\bm{\Sigma}_i^{1/2}, (\bm{C}^{*\top}-(\bm{C}^\top\bm{C})^{-1}\bm{C}^\top)\bm{H}_i^{(t)}\bm{R}_i^*(\bm{R}_i^{*\top}\bm{R}_i^*)^{-1}\bm{\Sigma}_i^{1/2}\rangle\\
            & +2\eta\langle(\bm{L}_{1i}-\bm{L}_{1i}^*)\bm{\Sigma}_i^{1/2}, \bm{C}^{*\top}\bm{H}_i^{(t)}(\bm{R}_i^*(\bm{R}_i^{*\top}\bm{R}_i^*)^{-1}-\bm{R}_i(\bm{R}_i^\top\bm{R}_i)^{-1})\bm{\Sigma}_i^{1/2}\rangle\\
            &-2\eta\langle(\bm{L}_{1i}-\bm{L}_{1i}^*)\bm{\Sigma}_i^{1/2}, (\bm{C}^{*\top}-(\bm{C}^\top\bm{C})^{-1}\bm{C}^\top)\bm{H}_i^{(t)}(\bm{R}_i^*(\bm{R}_i^{*\top}\bm{R}_i^*)^{-1}-\bm{R}_i(\bm{R}_i^\top\bm{R}_i)^{-1})\bm{\Sigma}_i^{1/2}\rangle\\
            \leq & \|(\bm{L}_{1i}-\bm{L}_{1i}^*)\bm{\Sigma}_i^{1/2}\|_\text{F}^2 +C \eta^2\|\bm{H}_i^{(t)}\|_\text{F}^2 - 2\eta M_{1i}\\
            & +C\epsilon\eta\gamma^{-1}\cdot\|\bm{H}_i^{(t)}\|_\text{F}^2 + C\epsilon\eta\gamma\cdot\|(\bm{L}_{1i}-\bm{L}_{1i}^*)\bm{\Sigma}_i^{1/2}\|_\text{F}^2,
        \end{split}
    \end{equation}
    where $M_{1i}=\langle(\bm{L}_{1i}-\bm{L}_{1i}^*)\bm{\Sigma}_i^{1/2}, \bm{C}^{*\top}\bm{H}_i^{(t)}\bm{R}_i^*(\bm{R}_i^{*\top}\bm{R}_i^*)^{-1}\bm{\Sigma}_i^{1/2}\rangle$.

    Similarly, for any $\zeta>0$, we can have the bound for $\bm{L}_{2i}$
    \begin{equation}
        \begin{split}
            & \|(\bm{Q}_{2t}^{-1}\bm{L}_{2i}^{(t+1)}\bm{P}_{it}^{-\top} - \bm{L}_{2i}^*)\bm{\Sigma}_i^{1/2}\|_\text{F}^2 \\
            = & \|(\bm{Q}_{2t}^{-1}[\bm{L}_{2i}^{(t)} - \eta\cdot\bm{Q}_{2t}(\bm{R}^{\top}\bm{R})^{-1}\bm{R}^{\top}\bm{H}_i^{(t)\top}\bm{C}_i(\bm{C}_i^{\top}\bm{C}_i)^{-1}\bm{P}_{it}^{\top}]\bm{P}_{it}^{-\top} \\ 
            & -\eta\bm{Q}_{2t}^{-1}(\bm{R}^{(t)\top}\bm{R}^{(t)})^{-1/2}\bm{\Delta}_{2i}^{(t)}(\bm{L}_{1i}^{(t)\top}\bm{C}^{(t)\top}\bm{C}^{(t)}\bm{L}_{1i}^{(t)})^{-1/2}\bm{P}_{it}^{-\top} - \bm{L}_{2i}^*)\bm{\Sigma}_i^{1/2}\|_\text{F}^2 \\
            \leq & (1+\zeta)\|(\bm{L}_{2i}-\bm{L}_{2i}^*-\eta(\bm{R}^{\top}\bm{R})^{-1}\bm{C}^{\top}\bm{H}_i^{(t)\top}\bm{C}_i(\bm{C}_i^{\top}\bm{C}_i)^{-1})\bm{\Sigma}_i^{1/2}\|_\text{F}^2\\
            & + (1+\zeta^{-1})C\eta^2\|\bm{\Delta}_{2i}^{(t)}\|_\text{F}^2.
        \end{split}
    \end{equation}
    For any $\gamma>0$, the first term can be bounded by
    \begin{equation}
        \begin{split}
            & \|(\bm{L}_{2i}-\bm{L}_{2i}^*-\eta(\bm{R}^{\top}\bm{R})^{-1}\bm{C}^{\top}\bm{H}_i^{(t)\top}\bm{C}_i(\bm{C}_i^{\top}\bm{C}_i)^{-1})\bm{\Sigma}_i^{1/2}\|_\text{F}^2\\
            = & \|(\bm{L}_{2i}-\bm{L}_{2i}^*)\bm{\Sigma}_i^{1/2}\|_\text{F}^2 + \eta^2\|(\bm{R}^{\top}\bm{R})^{-1}\bm{R}^{\top}\bm{H}_i^{(t)\top}\bm{C}_i(\bm{C}_i^{\top}\bm{C}_i)^{-1}\bm{\Sigma}_i^{1/2}\|_\text{F}^2\\
            & -2\eta\langle(\bm{L}_{2i}-\bm{L}_{2i}^*)\bm{\Sigma}_i^{1/2}, (\bm{R}^{\top}\bm{R})^{-1}\bm{R}^{\top}\bm{H}_i^{(t)\top}\bm{C}_i(\bm{C}_i^{\top}\bm{C}_i)^{-1}\bm{\Sigma}_i^{1/2}\rangle\\
            \leq & \|(\bm{L}_{2i}-\bm{L}_{2i}^*)\bm{\Sigma}_i^{1/2}\|_\text{F}^2 + C\eta^2\|\bm{H}_i^{(t)}\|_\text{F}^2 - 2\eta M_{2i}\\
            & +C\epsilon\eta\gamma^{-1}\cdot\|\bm{H}_i^{(t)}\|_\text{F}^2 + C\epsilon\eta\gamma\cdot\|(\bm{L}_{2i}-\bm{L}_{2i}^*)\bm{\Sigma}_i^{1/2}\|_\text{F}^2,
        \end{split}
    \end{equation}
    where $M_{2i}=\langle(\bm{L}_{2i}-\bm{L}_{2i}^*)\bm{\Sigma}_i^{1/2}, \bm{R}^{*\top}\bm{H}_i^{(t)\top}\bm{C}_i^*(\bm{C}_i^{*\top}\bm{C}_i^*)^{-1}\bm{\Sigma}_i^{1/2}\rangle$.\\

    \noindent\textit{Step 2.3} (Combined results)

    \noindent Combining these results, we have
    \begin{equation}
        \begin{split}
            & \|(\bm{C}-\bm{C}^*-\eta\cdot\bm{H}_C^{(t)}\widetilde{\bm{C}}(\widetilde{\bm{C}}^{\top}\widetilde{\bm{C}})^{-1})\bm{\Sigma}_C\|_\text{F}^2\\
            & + \|(\bm{R}-\bm{R}^*-\eta\cdot\bm{H}_R^{(t)}\widetilde{\bm{R}}(\widetilde{\bm{R}}^{\top}\widetilde{\bm{R}})^{-1})\bm{\Sigma}_R\|_\text{F}^2\\
            & + \sum_{i=1}^n\|(\bm{L}_{1i}-\bm{L}_{1i}^*-\eta(\bm{C}^{\top}\bm{C})^{-1}\bm{C}^{\top}\bm{H}_i^{(t)}\bm{R}_i(\bm{R}_i^{\top}\bm{R}_i)^{-1}\bm{P}_{it})\bm{\Sigma}_i^{1/2}\|_\text{F}^2\\
            & + \sum_{i=1}^n\|(\bm{L}_{2i}-\bm{L}_{2i}^*-\eta(\bm{R}^{\top}\bm{R})^{-1}\bm{R}^{\top}\bm{H}_i^{(t)\top}[\bm{R}_i(\bm{R}_i^{\top}\bm{R}_i)^{-1}\bm{P}_{it}^{-\top})\bm{\Sigma}_i^{1/2}\|_\text{F}^2\\
            \leq & (1+C\epsilon\eta\gamma)\text{dist}(\bm{\Theta}^{(t)},\bm{\Theta}^*)^2 + (C\eta^2+C\epsilon\eta\gamma^{-1})\sum_{i=1}^n\|\mathbb{E}[\nabla\mathcal{L}(\bm{B}_i^{(t)};\bm{X}_{ij},y_{ij})]\|_\text{F}^2\\
            &-2\eta\left(M_C+M_R+\sum_{i=1}^n(M_{1i}+M_{2i})\right).
        \end{split}
    \end{equation}
    Hence, it suffices to construct reasonable lower bounds for $M_C+M_R+\sum_{i=1}^n(M_{1i}+M_{2i})$. 

    Denote the tensor $\cm{T}_n\in\mathbb{R}^{p_1\times p_2\times n}$ as
    \begin{equation}
        \begin{split}
            (\cm{T}_n^{(t)})_{(1)} & = [m_1\cdot\mathbb{E}[\nabla\mathcal{L}(\bm{B}_1^{(t)};\bm{X}_{1j},y_{1j})],\dots,m_n\cdot\mathbb{E}[\nabla\mathcal{L}(\bm{B}_n^{(t)};\bm{X}_{nj},y_{nj})]]\\
            (\cm{T}_n^{(t)})_{(2)} & = [m_1\cdot\mathbb{E}[\nabla\mathcal{L}(\bm{B}_1^{(t)};\bm{X}_{1j},y_{1j})]^\top,\dots,m_n\cdot\mathbb{E}[\nabla\mathcal{L}(\bm{B}_n^{(t)};\bm{X}_{nj},y_{nj})]^\top],
        \end{split}
    \end{equation}
    such that $\|\cm{T}_n^{(t)}\|_\text{F}^2 = \sum_{i=1}^nm_i^2\|\mathbb{E}[\nabla\mathcal{L}(\bm{B}_i^{(t)};\bm{X}_{ij},y_{ij})]\|_\text{F}^2 =  \|\bm{H}_C^{(t)}\|_\text{F}^2 = \|\bm{H}_R^{(t)}\|_\text{F}^2 = \sum_{i=1}^n \|\bm{H}_i^{(t)}\|_\text{F}^2$.

    Then, we have
    \begin{equation}
        \begin{split}
            & M_C+M_R+\sum_{i=1}^n(M_{1i}+M_{2i})\\
            = & \langle\cm{T}_n^{(t)},\cm{G}^*\times_1(\bm{C}-\bm{C}^*)\times_2\bm{R}^*\rangle + \langle\cm{T}_n,\cm{G}^*\times_1\bm{C}^*\times_2(\bm{R}-\bm{R}^*)\rangle\\
            & + \frac{1}{n}\sum_{i=1}^n\left[\langle\mathbb{E}[\nabla\mathcal{L}(\bm{B}_i^{(t)})], \bm{C}^*(\bm{L}_{1i}-\bm{L}_{1i}^*)\bm{L}_{2i}^{*\top}\bm{R}^{*\top}+ \bm{C}^*\bm{L}_{1i}^{*}(\bm{L}_{2i}-\bm{L}_{2i}^*)^\top\bm{R}^{*\top}\rangle\right]\\
            = & \langle\cm{T}_n^{(t)},\cm{G}^*\times_1(\bm{C}-\bm{C}^*)\times_2\bm{R}^*+\cm{G}^*\times_1\bm{C}^*\times_2(\bm{R}-\bm{R}^*)+\bm{\Delta}_G\times_1\bm{C}^*\times_2\bm{R}^*\rangle
        \end{split}
    \end{equation}
    where $(\bm{\Delta}_G)_{(1)}=[(\bm{L}_{11}-\bm{L}_{11}^*)\bm{L}_{21}^{*\top}+\bm{L}_{11}^{*}(\bm{L}_{21}-\bm{L}_{21}^*)^\top,\dots,(\bm{L}_{1n}-\bm{L}_{1n}^*)\bm{L}_{2n}^{*\top}+\bm{L}_{1n}^{*}(\bm{L}_{2n}-\bm{L}_{2n}^*)^\top]$. Additionally,
    \begin{equation}
        \begin{split}
            & \cm{G}^*\times_1(\bm{C}-\bm{C}^*)\times_2\bm{R}^*+\cm{G}^*\times_1\bm{C}^*\times_2(\bm{R}-\bm{R}^*)+\bm{\Delta}_G\times_1\bm{C}^*\times_2\bm{R}^*\\
            = & \cm{B} - \cm{B}^* - (\cm{G}-\cm{G}^*-\bm{\Delta}_G)\times_1\bm{C}^*\times_2\bm{R}^* - \cm{G}^*\times_1(\bm{C}-\bm{C}^*)\times_2(\bm{R}-\bm{R}^*)\\
            & - (\cm{G}-\cm{G}^*)\times_1\bm{C}^*\times_2(\bm{R}-\bm{R}^*) - (\cm{G}-\cm{G}^*)\times_1(\bm{C}-\bm{C}^*)\times_2\bm{R}^*\\
            & - (\cm{G}-\cm{G}^*)\times_1(\bm{C}-\bm{C}^*)\times_2(\bm{R}-\bm{R}^*).
        \end{split}
    \end{equation}
    Therefore, by the RCG condition, we have that for all $i=1,2\dots,n$,
    \begin{equation}
        \langle\mathbb{E}[\nabla\mathcal{L}(\bm{B}_i;\bm{X}_{ij},y_{ij})],\bm{B}_i - \bm{B}_i^*\rangle \geq \frac{\alpha_i}{2}\|\bm{B}_i^{(t)} - \bm{B}_i^*\|_\text{F}^2 + \frac{1}{2\beta_i}\|\mathbb{E}[\nabla\mathcal{L}(\bm{B}_i;\bm{X}_{ij},y_{ij})]\|_\text{F}^2
    \end{equation}
    and hence,
    \begin{equation}
        \begin{split}
            \langle\cm{T}_n^{(t)},\cm{B}^{(t)} - \cm{B}^*\rangle & = \sum_{i=1}^n m_i\langle \mathbb{E}[\nabla\mathcal{L}(\bm{B}_i^{(t)};\bm{X}_{ij},y_{ij})], \bm{B}_i^{(t)}-\bm{B}_i^*\rangle \\
            & \geq \sum_{i=1}^n\frac{\alpha_im_i}{2}\|\bm{B}_i^{(t)} - \bm{B}_i^*\|_\text{F}^2 + \sum_{i=1}^n\frac{m_i}{2\beta_i}\|\mathbb{E}[\nabla\mathcal{L}(\bm{B}_i^{(t)};\bm{X}_{ij},y_{ij})]\|_\text{F}^2\\
            & \geq \frac{\ubar{\alpha} m}{2}\|\cm{B}^{(t)} - \cm{B}^*\|_\text{F}^2 + \frac{1}{2\bar{\beta} m}\|\cm{T}_n^{(t)}\|_\text{F}^2,
        \end{split}
    \end{equation}
    where $\ubar{\alpha}=\min_{1\leq i\leq n}(\alpha_i m_i/m)$, $\bar{\beta}=\max_{1\leq i\leq n}(\beta_i m_i/m)$, and $m=n^{-1}\sum_{i=1}^n m_i$.

    Then, we have
    \begin{equation}
        \begin{split}
            & M_C+M_R+\sum_{i=1}^n(M_{1i}+M_{2i})\\
            = & \langle\cm{T}_n^{(t)},\cm{B}^{(t)}-\cm{B}^*\rangle - \langle\cm{T}_n,(\cm{G}-\cm{G}^*-\bm{\Delta}_G)\times_1\bm{C}^*\times_2\bm{R}^*\rangle\\
            & - \langle\cm{T}_n^{(t)},\cm{G}^*\times_1(\bm{C}-\bm{C}^*)\times_2(\bm{R}-\bm{R}^*)\rangle\\
            & - \langle\cm{T}_n^{(t)},(\cm{G}-\cm{G}^*)\times_1\bm{C}^*\times_2(\bm{R}-\bm{R}^*)\rangle\\
            & - \langle\cm{T}_n^{(t)},(\cm{G}-\cm{G}^*)\times_1(\bm{C}-\bm{C}^*)\times_2\bm{R}^*\rangle\\
            & - \langle\cm{T}_n^{(t)},(\cm{G}-\cm{G}^*)\times_1(\bm{C}-\bm{C}^*)\times_2(\bm{R}-\bm{R}^*)\rangle\\
            \geq & \frac{\ubar{\alpha} m}{2}\|\cm{B}^{(t)}-\cm{B}^*\|_\text{F}^2 + \frac{1}{2\bar{\beta} m}\|\cm{T}_n^{(t)}\|_\text{F}^2 - \frac{1}{4\bar{\beta} m}\|\cm{T}_n^{(t)}\|_\text{F}^2\\
            & - 5\bar{\beta} m \|\cm{G}-\cm{G}^*-\bm{\Delta}_G\|_\text{F}^2 - 5\bar{\beta} m \|\cm{G}^*\times_1(\bm{C}-\bm{C}^*)\times_2(\bm{R}-\bm{R}^*)\|_\text{F}^2 \\
            & - 5\bar{\beta} m \|(\cm{G}-\cm{G}^*)\times_2(\bm{R}-\bm{R}^*)\|_\text{F}^2 - 5\bar{\beta} m \|(\cm{G}-\cm{G}^*)\times_1(\bm{C}-\bm{C}^*)\|_\text{F}^2\\
            & - 5\bar{\beta} m \|(\cm{G}-\cm{G}^*)\times_1(\bm{C}-\bm{C}^*)\times_2(\bm{R}-\bm{R}^*)\|_\text{F}^2\\
            \geq & \frac{\ubar{\alpha} m}{2}\|\cm{B}^{(t)}-\cm{B}^*\|_\text{F}^2 + \frac{1}{4\bar{\beta} m}\|\cm{T}_n^{(t)}\|_\text{F}^2 - C\epsilon^2\bar{\beta} m \cdot\text{dist}(\bm{\Theta}^{(t)},\bm{\Theta}^*)^2.
        \end{split}
    \end{equation}

    Combining the above result, we have
    \begin{equation}
        \begin{split}
            & \text{dist}(\bm{\Theta}^{(t+1)},\bm{\Theta}^*)^2\\
            \leq & (1+\zeta)(1+C\epsilon\eta\gamma+C\epsilon^2\eta\bar{\beta} m-C\eta\ubar{\alpha} m)\text{dist}(\bm{\Theta}^{(t)},\bm{\Theta}^*)^2\\
            & + (1+\zeta)(C\eta^2+C\epsilon\eta\gamma^{-1}-C\eta\bar{\beta}^{-1}m^{-1})\|\cm{T}_n^{(t)}\|_\text{F}^2\\
            & + (1+\zeta^{-1})C\eta^2\left(\|\bm{\Delta}_C^{(t)}\|_\text{F}^2 + \|\bm{\Delta}_R^{(t)}\|_\text{F}^2 + \sum_{i=1}^n(\|\bm{\Delta}_{1i}^{(t)}\|_\text{F}^2+\|\bm{\Delta}_{2i}^{(t)}\|_\text{F}^2) \right).
        \end{split}
    \end{equation}
    Taking $\gamma=Cm\ubar{\alpha}^{1/2}\bar{\beta}^{1/2}$, $\epsilon\lesssim\ubar{\alpha}^{1/2}\bar{\beta}^{-1/2}$, $\eta=C\bar{\beta}^{-1}m^{-1}$, and $\zeta\lesssim\ubar{\alpha}\bar{\beta}^{-1}$, we have that
    \begin{equation}
        \begin{split}
            & \text{dist}(\bm{\Theta}^{(t+1)},\bm{\Theta}^*)^2\\
            \leq & (1-C\ubar{\alpha}\bar{\beta}^{-1})\text{dist}(\bm{\Theta}^{(t)},\bm{\Theta}^*)^2 \\
            & + C\ubar{\alpha}^{-1}\bar{\beta}^{-1}m^{-2}\left(\|\bm{\Delta}_C^{(t)}\|_\text{F}^2 + \|\bm{\Delta}_R^{(t)}\|_\text{F}^2 + \sum_{i=1}^n(\|\bm{\Delta}_{1i}^{(t)}\|_\text{F}^2+\|\bm{\Delta}_{2i}^{(t)}\|_\text{F}^2) \right).
        \end{split}
    \end{equation}~
    
    \noindent\textit{Step 3.} (Induction and completion of proof)

    \noindent According to the stability of $\bm{\Delta}_R^{(t)},\bm{\Delta}_C^{(t)},\bm{\Delta}_{11}^{(t)},\bm{\Delta}_{21}^{(t)},\dots,\bm{\Delta}_{1n}^{(t)},\bm{\Delta}_{2n}^{(t)}$, with $\phi\lesssim\ubar{\alpha}^2m^2$, we have
    \begin{equation}
        \begin{split}
            \text{dist}(\bm{\Theta}^{(t+1)},\bm{\Theta}^*)^2 & \leq (1-C\ubar{\alpha}\bar{\beta}^{-1})\text{dist}(\bm{\Theta}^{(t)},\bm{\Theta}^*)^2 + C\ubar{\alpha}^{-1}\bar{\beta}^{-1}m^{-2}\xi.
        \end{split}
    \end{equation}
    As $\xi\lesssim\ubar{\alpha}^2\underline{\sigma}^2m^2$,
    \begin{equation}
        \text{dist}(\bm{\Theta}^{(t+1)},\bm{\Theta}^*)^2 \lesssim \ubar{\alpha}\bar{\beta}^{-1}\underline{\sigma},
    \end{equation}
    which guarantees that the condition \eqref{eq:cond1} holds for the step $t+1$. Then, by induction,
    \begin{equation}
        \text{dist}(\bm{\Theta}^{(t)},\bm{\Theta}^*)^2
        \leq (1-C\ubar{\alpha}\bar{\beta}^{-1})^t\cdot\text{dist}(\bm{\Theta}^{(0)},\bm{\Theta}^*)^2 + C\ubar{\alpha}^{-2}m^{-2}\xi,
    \end{equation}
    for all $t=1,2,\dots,T$.

    By Lemmas \ref{lemma:tensor_perturb} and \ref{lemma:matrix_perturb},
    \begin{equation}
        \|\cm{B}^{(t)}-\cm{B}^*\|_\text{F}^2 \leq C(1-C\ubar{\alpha}\bar{\beta}^{-1})^t\cdot\text{dist}(\bm{\Theta}^{(0)},\bm{\Theta}^*)^2 + C\ubar{\alpha}^{-2}m^{-2}\xi,
    \end{equation}
    and for sufficiently large $T$,
    \begin{equation}
        \max(\|\bm{C}^{(T)}\bm{Q}_{1T}-\bm{C}^*\|,\|\bm{R}^{(T)}\bm{Q}_{2T}-\bm{R}^*\|) \lesssim \underline{\sigma}^{-1}\ubar{\alpha}^{-1}m^{-1}\xi.
    \end{equation}

\end{proof}

\subsection{Proof of Corollaries \ref{cor:1} and \ref{cor:2}}\label{sec:A.2}

\begin{proof}[Proof of Corollary \ref{cor:1}]
    
    Based on the results in Theorem \ref{thm:1}, when
    \begin{equation}
        T \gtrsim \frac{\log(C\alpha^{-2}m\xi/\text{dist}(\bm{\Theta}^{(0)},\bm{\Theta}^*)^2)}{\log(1-C\alpha\beta^{-1})},
    \end{equation}
    the computational error $(1-C\alpha\beta^{-1})^T\text{dist}(\bm{\Theta}^{(0)},\bm{\Theta}^*)^2$ is dominated by the statistical error $C\alpha^{-2}m^{-2}\xi$. Hence,
    \begin{equation}
        \text{dist}(\widehat{\bm{\Theta}},\bm{\Theta}^*) \lesssim \frac{\xi}{\alpha^{2}m^{2}}
    \end{equation}
    and
    \begin{equation}
        \|\widehat{\cm{B}} - \cm{B}^*\|_\text{F}^2 \lesssim \frac{\xi}{\alpha^{2}m^{2}}.
    \end{equation}
    The latter further implies that
    \begin{equation}
        \frac{1}{n}\sum_{i=1}^n\|\widehat{\bm{B}}_i-\bm{B}_i^*\|_\text{F}^2 \lesssim \frac{\xi}{\alpha^2m^2n}.
    \end{equation}

\end{proof}

\newpage

\subsection{Proof of Theorem \ref{thm:2}}\label{sec:A.3}

The proof of Theorem \ref{thm:2} follows a similar approach as that of Theorem \ref{thm:1}. It also consists of three major steps and is constructed in an inductive approach. In the first step, we present the notations used in the proof, and state some key conditions. In the second step, we develop the core convergence analysis. Specifically, given some conditions hold at the step $t$, we present an upper bound for the step $t+1$, i.e.,
\begin{equation}
    \text{dist}(\bm{\Theta}^{(t+1)},\bm{\Theta}^*)^2 \leq (1-C\alpha\beta^{-1})\cdot\text{dist}(\bm{\Theta}^{(t)},\bm{\Theta}^*)^2 + C\alpha^{-1}\beta^{-1}\xi.
\end{equation}
In the last step, we show that the required conditions also hold for the step $t+1$, and complete the proof by induction.

The proposed scaled project gradient descent algorithm is formulated as
\begin{equation}
    \begin{split}
        \bm{C}^{(t+0.5)} = & \bm{C}^{(t)} - \eta\cdot\bm{H}_C^{(t)}\widetilde{\bm{C}}(\widetilde{\bm{C}}^\top\widetilde{\bm{C}})^{-1}\bm{Q}_{1t}^{-1} - \eta\cdot\bm{\Delta}_C^{(t)}(\widetilde{\bm{C}}^{(t)\top}\widetilde{\bm{C}}^{(t)})^{-1/2}\\
        \bm{R}^{(t+0.5)} = & \bm{R}^{(t)} - \eta\cdot\bm{H}_R^{(t)}\widetilde{\bm{R}}(\widetilde{\bm{R}}^\top\widetilde{\bm{R}})^{-1}\bm{Q}_{2t}^{-1} - \eta\cdot\bm{\Delta}_R^{(t)}(\widetilde{\bm{R}}^{(t)\top}\widetilde{\bm{R}}^{(t)})^{-1/2}\\
        \bm{L}_{1i}^{(t)} = & \bm{L}_{1i}^{(t)} - \eta\cdot\bm{Q}_{1t}(\bm{C}^{\top}\bm{C})^{-1}\bm{C}^{\top}\bm{H}_i^{(t)}\bm{R}_i(\bm{R}_i^{\top}\bm{R}_i)^{-1}\bm{P}_{it}^{-1}\\
        & - \eta\cdot(\bm{C}^{(t)\top}\bm{C}^{(t)})^{-1/2}\bm{\Delta}_{1i}^{(t)}(\bm{L}_{2i}^{(t)\top}\bm{R}^{(t)\top}\bm{R}^{(t)}\bm{L}_{2i}^{(t)})^{-1/2},\\
        \bm{L}_{2i}^{(t)} = & \bm{L}_{2i}^{(t)} - \eta\cdot\bm{Q}_{2t}(\bm{R}^{\top}\bm{R})^{-1}\bm{R}^{\top}\bm{H}_i^{(t)\top}\bm{C}_i(\bm{C}_i^{\top}\bm{C}_i)^{-1}\bm{P}_{it}^{\top}\\
        & - \eta\cdot(\bm{R}^{(t)\top}\bm{R}^{(t)})^{-1/2}\bm{\Delta}_{2i}^{(t)}(\bm{L}_{1i}^{(t)\top}\bm{C}^{(t)\top}\bm{C}^{(t)}\bm{L}_{1i}^{(t)})^{-1/2},\\
        \bm{C}^{(t+1)} = & \text{HT}(\bm{C}^{(t+0.5)}(\widetilde{\bm{C}}^{(t+0.5)\top}\widetilde{\bm{C}}^{(t+0.5)})^{1/2},s_1)(\widetilde{\bm{C}}^{(t+0.5)\top}\widetilde{\bm{C}}^{(t+0.5)})^{-1/2}\\
        \bm{R}^{(t+1)} = & \text{HT}(\bm{R}^{(t+0.5)}(\widetilde{\bm{R}}^{(t+0.5)\top}\widetilde{\bm{R}}^{(t+0.5)})^{1/2},s_2)(\widetilde{\bm{R}}^{(t+0.5)\top}\widetilde{\bm{R}}^{(t+0.5)})^{-1/2}
    \end{split}
\end{equation}

First, we analyze the scaled hard thresholding step. For any $t=1,2,\dots,T$,
\begin{equation}
    \begin{split}
        & \|(\bm{C}^{(t+1)}\bm{Q}_{1,t+1}-\bm{C}^*)\bm{\Sigma}_{C}\|_\text{F}^2 + \|(\bm{R}^{(t+1)}\bm{Q}_{2,t+1}-\bm{R}^*)\bm{\Sigma}_{R}\|_\text{F}^2 \\
        + & \sum_{i=1}^n\left\{ \|(\bm{Q}_{1,t+1}^{-1}\bm{L}_{1i}^{(t+1)}\bm{P}_{it}-\bm{L}_{1i}^*)\bm{\Sigma}_i^{1/2}\|_\text{F}^2 + \|(\bm{Q}_{2,t+1}^{-1}\bm{L}_{2i}^{(t+1)}\bm{P}_{i,t+1}^{-\top}-\bm{L}_{2i}^*)\bm{\Sigma}_i^{1/2}\|_\text{F}^2 \right\} \\
        = & \|(\bm{C}^{(t+1)}\bm{Q}_{1,t+1}-\bm{C}^*)_{\bar{S}_{1,t+1}}\bm{\Sigma}_{C}\|_\text{F}^2 + \|(\bm{R}^{(t+1)}\bm{Q}_{2,t+1}-\bm{R}^*)_{\bar{S}_{2,t+1}}\bm{\Sigma}_{R}\|_\text{F}^2 \\
        + & \sum_{i=1}^n\left\{ \|(\bm{Q}_{1,t+1}^{-1}\bm{L}_{1i}^{(t+1)}\bm{P}_{i,t+1}-\bm{L}_{1i}^*)\bm{\Sigma}_i^{1/2}\|_\text{F}^2 + \|(\bm{Q}_{2,t+1}^{-1}\bm{L}_{2i}^{(t+1)}\bm{P}_{i,t+1}^{-\top}-\bm{L}_{2i}^*)\bm{\Sigma}_i^{1/2}\|_\text{F}^2 \right\} \\
        \leq & \|(\bm{C}^{(t+1)}\bm{Q}_{1,t+0.5}-\bm{C}^*)_{\bar{S}_{1,t+1}}\bm{\Sigma}_{C}\|_\text{F}^2 + \|(\bm{R}^{(t+1)}\bm{Q}_{2,t+0.5}-\bm{R}^*)_{\bar{S}_{2,t+1}}\bm{\Sigma}_{R}\|_\text{F}^2 \\
        + & \sum_{i=1}^n\left\{ \|(\bm{Q}_{1,t+0.5}^{-1}\bm{L}_{1i}^{(t+1)}\bm{P}_{i,t+0.5}-\bm{L}_{1i}^*)\bm{\Sigma}_i^{1/2}\|_\text{F}^2 + \|(\bm{Q}_{2,t+0.5}^{-1}\bm{L}_{2i}^{(t+1)}\bm{P}_{i,t+0.5}^{-\top}-\bm{L}_{2i}^*)\bm{\Sigma}_i^{1/2}\|_\text{F}^2 \right\},
    \end{split}
\end{equation}
where $\bm{Q}_{1,t+0.5}$, $\bm{Q}_{2,t+0.5}$, $\bm{L}_{1,t+0.5},\dots,\bm{L}_{n,t+0.5}$ are the optimal alignment matrices for $\bm{C}^{(t+0.5)},\bm{R}^{(t+0.5)}$, $\bm{L}_{1i}^{(t+0.5)}$, $\bm{L}_{2i}^{(t+0.5)}$.

Then, since $\bm{C}^{(t+1)} = \text{HT}(\bm{C}^{(t+0.5)}(\widetilde{\bm{C}}^{(t+0.5)\top}\widetilde{\bm{C}}^{(t+0.5)})^{1/2},s_1)(\widetilde{\bm{C}}^{(t+0.5)\top}\widetilde{\bm{C}}^{(t+0.5)})^{-1/2}$, for $q_1=2\sqrt{s_1^*/(s_1-s_1^*)}$, we have the upper bound
\begin{equation}
    \begin{split}
        & \|(\bm{C}^{(t+1)}\bm{Q}_{1,t+0.5}-\bm{C}^*)_{\bar{S}_{1,t+1}}\bm{\Sigma}_{C}\|_\text{F}^2 \\
        = & \|[\text{HT}(\bm{C}^{(t+0.5)}(\widetilde{\bm{C}}^{(t+0.5)\top}\widetilde{\bm{C}}^{(t+0.5)})^{1/2},s_1)(\widetilde{\bm{C}}^{(t+0.5)\top}\widetilde{\bm{C}}^{(t+0.5)})^{-1/2}\bm{Q}_{1,t+0.5}-\bm{C}^*]_{\bar{S}_{1,t+1}}\bm{\Sigma}_{C}\|_\text{F}^2 \\
        \leq & \|\text{HT}(\bm{C}^{(t+0.5)}_{\bar{S}_{1,t+1}}(\widetilde{\bm{C}}^{(t+0.5)\top}\widetilde{\bm{C}}^{(t+0.5)})^{1/2},s_1)(\widetilde{\bm{C}}^{(t+0.5)\top}\widetilde{\bm{C}}^{(t+0.5)})^{-1/2}\bm{Q}_{1,t+0.5}\bm{\Sigma}_{C}-\bm{C}^*_{\bar{S}_{1,t+1}}\bm{\Sigma}_{C}\|_\text{F}^2 \\
        \leq & \|\text{HT}(\bm{C}^{(t+0.5)}_{\bar{S}_{1,t+1}}(\widetilde{\bm{C}}^{(t+0.5)\top}\widetilde{\bm{C}}^{(t+0.5)})^{1/2},s_1) - \bm{C}^*_{\bar{S}_{1,t+1}}\bm{Q}_{1,t+0.5}^{-1}(\widetilde{\bm{C}}^{(t+0.5)\top}\widetilde{\bm{C}}^{(t+0.5)})^{1/2}\|_\text{F}^2\\
        &\cdot\|(\widetilde{\bm{C}}^{(t+0.5)\top}\widetilde{\bm{C}}^{(t+0.5)})^{-1/2}\bm{Q}_{1,t+0.5}\bm{\Sigma}_{C}\|^2\\
        \leq & (1+q_1)\|\bm{C}^{(t+0.5)}_{\bar{S}_{1,t+1}}(\widetilde{\bm{C}}^{(t+0.5)\top}\widetilde{\bm{C}}^{(t+0.5)})^{1/2} - \bm{C}^*_{\bar{S}_{1,t+1}}\bm{Q}_{1,t+0.5}^{-1}(\widetilde{\bm{C}}^{(t+0.5)\top}\widetilde{\bm{C}}^{(t+0.5)})^{1/2}\|_\text{F}^2\\
        & \cdot\|(\widetilde{\bm{C}}^{(t+0.5)\top}\widetilde{\bm{C}}^{(t+0.5)})^{-1/2}\bm{Q}_{1,t+0.5}\bm{\Sigma}_{C}\|^2\\
        \leq & (1+q_1)\|(\bm{C}^{(t+0.5)}\bm{Q}_{1,t+0.5}-\bm{C}^*)_{\bar{S}_{1,t+1}}\bm{\Sigma}_C\|_\text{F}^2 \cdot \|\bm{\Sigma}_C^{-1}\bm{Q}_{1,t+0.5}^{-1}(\widetilde{\bm{C}}^{(t+0.5)\top}\widetilde{\bm{C}}^{(t+0.5)})^{1/2}\|^2\\
        & \cdot\|(\widetilde{\bm{C}}^{(t+0.5)\top}\widetilde{\bm{C}}^{(t+0.5)})^{-1/2}\bm{Q}_{1,t+0.5}\bm{\Sigma}_{C}\|^2\\
        \leq & (1+q_1)(1+C\epsilon^2)\|(\bm{C}^{(t+0.5)}\bm{Q}_{1,t+0.5}-\bm{C}^*)_{\bar{S}_{1,t+1}}\bm{\Sigma}_C\|_\text{F}^2.
    \end{split}
\end{equation}
Similarly, we have
\begin{equation}
    \|(\bm{R}^{(t+1)}\bm{Q}_{2,t+0.5}-\bm{R}^*)_{\bar{S}_{2,t+1}}\bm{\Sigma}_{R}\|_\text{F}^2 \leq (1+q_2)(1+C\epsilon^2)\|(\bm{R}^{(t+0.5)}\bm{Q}_{2,t+0.5}-\bm{R}^*)_{\bar{S}_{2,t+1}}\bm{\Sigma}_R\|_\text{F}^2.
\end{equation}
Given that $\max(q_1,q_2)\lesssim\alpha\beta^{-1}$ and $\epsilon\lesssim\alpha^{1/2}\beta^{-1/2}$, we have
\begin{equation}
    \begin{split}
        & \|(\bm{C}^{(t+1)}\bm{Q}_{1,t+1}-\bm{C}^*)\bm{\Sigma}_{C}\|_\text{F}^2 + \|(\bm{R}^{(t+1)}\bm{Q}_{2,t+1}-\bm{R}^*)\bm{\Sigma}_{R}\|_\text{F}^2 \\
        + & \sum_{i=1}^n\left\{ \|(\bm{Q}_{1,t+1}^{-1}\bm{L}_{1i}^{(t+1)}\bm{P}_{it}-\bm{L}_{1i}^*)\bm{\Sigma}_i^{1/2}\|_\text{F}^2 + \|(\bm{Q}_{2,t+1}^{-1}\bm{L}_{2i}^{(t+1)}\bm{P}_{i,t+1}^{-\top}-\bm{L}_{2i}^*)\bm{\Sigma}_i^{1/2}\|_\text{F}^2 \right\} \\
        \leq & (1+C\alpha\beta^{-1})\Bigg[\|(\bm{C}^{(t+0.5)}\bm{Q}_{1,t+0.5}-\bm{C}^*)_{\bar{S}_{1,t+1}}\bm{\Sigma}_C\|_\text{F}^2 + \|(\bm{R}^{(t+0.5)}\bm{Q}_{2,t+0.5}-\bm{R}^*)_{\bar{S}_{2,t+1}}\bm{\Sigma}_R\|_\text{F}^2\\
        & + \sum_{i=1}^n\left\{ \|(\bm{Q}_{1,t+0.5}^{-1}\bm{L}_{1i}^{(t+1)}\bm{P}_{i,t+0.5}-\bm{L}_{1i}^*)\bm{\Sigma}_i^{1/2}\|_\text{F}^2 + \|(\bm{Q}_{2,t+0.5}^{-1}\bm{L}_{2i}^{(t+1)}\bm{P}_{i,t+0.5}^{-\top}-\bm{L}_{2i}^*)\bm{\Sigma}_i^{1/2}\|_\text{F}^2 \right\} \Bigg].
    \end{split}
\end{equation}

Second, we derive the upper bounds for the gradient descent step. Following the proof of Theorem \ref{thm:1}, we have
\begin{equation}
    \begin{split}
        & \Bigg[\|(\bm{C}^{(t+0.5)}\bm{Q}_{1,t+0.5}-\bm{C}^*)_{\bar{S}_{1,t+1}}\bm{\Sigma}_C\|_\text{F}^2 + \|(\bm{R}^{(t+0.5)}\bm{Q}_{2,t+0.5}-\bm{R}^*)_{\bar{S}_{2,t+1}}\bm{\Sigma}_R\|_\text{F}^2\\
        & + \sum_{i=1}^n\left\{ \|(\bm{Q}_{1,t+0.5}^{-1}\bm{L}_{1i}^{(t+1)}\bm{P}_{i,t+0.5}-\bm{L}_{1i}^*)\bm{\Sigma}_i^{1/2}\|_\text{F}^2 + \|(\bm{Q}_{2,t+0.5}^{-1}\bm{L}_{2i}^{(t+1)}\bm{P}_{i,t+0.5}^{-\top}-\bm{L}_{2i}^*)\bm{\Sigma}_i^{1/2}\|_\text{F}^2 \right\} \Bigg]\\
        \leq & (1-C\alpha\beta^{-1})\Bigg[\|(\bm{C}^{(t)}\bm{Q}_{1t}-\bm{C}^*)\bm{\Sigma}_C\|_\text{F}^2 + \|(\bm{R}^{(t)}\bm{Q}_{2t}-\bm{R}^*)\bm{\Sigma}_R\|_\text{F}^2\\
        & + \sum_{i=1}^n\left\{ \|(\bm{Q}_{1t}^{-1}\bm{L}_{1i}^{(t)}\bm{P}_{it}-\bm{L}_{1i}^*)\bm{\Sigma}_i^{1/2}\|_\text{F}^2 + \|(\bm{Q}_{2t}^{-1}\bm{L}_{2i}^{(t)}\bm{P}_{it}^{-\top}-\bm{L}_{2i}^*)\bm{\Sigma}_i^{1/2}\|_\text{F}^2 \right\} \Bigg]\\
        & + C\alpha^{-1}\beta^{-1} \left[\|(\bm{\Delta}_{C}^{(t)})_{\bar{S}_{1,t+1}}\|_\text{F}^2 + \|(\bm{\Delta}_{R}^{(t)})_{\bar{S}_{2,t+1}}\|_\text{F}^2 + \sum_{i=1}^n\Big\{\|\bm{\Delta}_{1i}^{(t)}\|_\text{F}^2 + \|\bm{\Delta}_{2i}^{(t)}\|_\text{F}^2\Big\} \right].
    \end{split}
\end{equation}

By the stability of the de-scaled gradients, when $\psi\lesssim\alpha^2$, we have
\begin{equation}
    \textup{dist}(\bm{\Theta}^{(t)},\bm{\Theta}^*)^2 \lesssim (1-C\alpha\beta^{-1})^t \text{dist}(\bm{\Theta}^{(0)},\bm{\Theta}^*)^2 + C\alpha^{-2}\left[\xi_{C,s_1}+\xi_{R,s_2}+\sum_{i=1}^n(\xi_{1i}+\xi_{2i})\right].
\end{equation}

\subsection{Auxiliary lemmas}\label{sec:A.4}

We first state the perturbation bounds for the tensor Tucker decomposition (Lemma 10 in \citet{tong2021accelerating}). Assume that $\bm{F}=(\cm{G},\bm{C},\bm{R})$ and $\bm{F}^*=(\cm{G}^*,\bm{C}^*,\bm{R}^*)$ are aligned, and introduce the following notations $\bm{\Delta}_C=\bm{C}-\bm{C}^*$, $\bm{\Delta}_R=\bm{R}-\bm{R}^*$, and $\bm{\Delta}_G=\cm{G}-\cm{G}^*$.

\begin{lemma}\label{lemma:tensor_perturb}
    Suppose that $\textup{dist}(\bm{F},\bm{F}^*)^2\leq\epsilon^2\underline{\sigma}^2$ for some $\epsilon<1$. Then, the following bounds on spectral norm hold
    \begin{equation}
        \begin{split}
            \max(\|\bm{C}-\bm{C}^*\|,\|\bm{R}-\bm{R}^*\|,\|(\cm{G}-\cm{G}^*)_{(1)}\bm{\Sigma}_C^{-1}\|,\|(\cm{G}-\cm{G}^*)_{(2)}&\bm{\Sigma}_R^{-1}\|)\leq\epsilon,\\
            \|\bm{C}(\bm{C}^\top\bm{C})^{-1}\| \leq (1-\epsilon)^{-1};~~
            \|\bm{C}(\bm{C}^\top\bm{C})^{-1}-\bm{C}^*\| & \leq \frac{\sqrt{2}\epsilon}{1-\epsilon},\\
            \|(\widetilde{\bm{C}}-\widetilde{\bm{C}}^*)\bm{\Sigma}_C^{-1}\| \leq 3\epsilon+3\epsilon^2+\epsilon^3,~~
            \|\widetilde{\bm{C}}(\widetilde{\bm{C}}^\top\widetilde{\bm{C}})^{-1}\bm{\Sigma}_C\| & \leq (1-\epsilon)^{-3},\\
            \|\widetilde{\bm{C}}(\widetilde{\bm{C}}^\top\widetilde{\bm{C}})^{-1}\bm{\Sigma}_C - \widetilde{\bm{C}}\bm{\Sigma}_C^{-1}\| \leq \frac{\sqrt{2}(3\epsilon+3\epsilon^2+\epsilon^3)}{(1-\epsilon)^3},~~
            \|\bm{\Sigma}_C(\widetilde{\bm{C}}^\top\widetilde{\bm{C}})^{-1}\bm{\Sigma}_C\| & \leq (1-\epsilon)^{-6}.
        \end{split}
    \end{equation}
    By symmetry, a corresponding set of bounds hold for $\bm{R}$ and $\widetilde{\bm{R}}$. In addition, we have
    \begin{equation}
        \begin{split}
            &\|\cm{G}\times_1\bm{C}\times_2\bm{R}-\cm{G}^*\times_1\bm{C}^*\times_2\bm{R}^*\|_\textup{F} \\
            & \leq \left(1+\frac{3\epsilon}{2}+\epsilon^2+\frac{\epsilon^3}{4}\right)(\|(\bm{C}-\bm{C}^*)\bm{\Sigma}_C\|_\textup{F} + \|(\bm{R}-\bm{R}^*)\bm{\Sigma}_R\|_\textup{F}+\|\cm{G}-\cm{G}^*\|_\textup{F}).
        \end{split}
    \end{equation}
\end{lemma}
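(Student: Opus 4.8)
The plan is to derive every claim from the hypothesis, which after the optimal alignment of $\bm{F}=(\cm{G},\bm{C},\bm{R})$ with $\bm{F}^*$ reads $\|(\bm{C}-\bm{C}^*)\bm{\Sigma}_C\|_\textup{F}^2+\|(\bm{R}-\bm{R}^*)\bm{\Sigma}_R\|_\textup{F}^2+\|\cm{G}-\cm{G}^*\|_\textup{F}^2\le\epsilon^2\underline{\sigma}^2$. Lemma~\ref{lemma:tensor_perturb} is precisely the Tucker perturbation bound of \citet[Lemma~10]{tong2021accelerating}, so a direct citation is also legitimate; I describe the self-contained route. The first display is then immediate: every singular value of $\bm{\Sigma}_C$ and $\bm{\Sigma}_R$ is at least $\underline{\sigma}$, so $\|\bm{C}-\bm{C}^*\|\le\|(\bm{C}-\bm{C}^*)\bm{\Sigma}_C\|_\textup{F}/\underline{\sigma}\le\epsilon$, $\|\bm{R}-\bm{R}^*\|\le\epsilon$, and $\|(\cm{G}-\cm{G}^*)_{(k)}\bm{\Sigma}^{-1}\|\le\|\cm{G}-\cm{G}^*\|_\textup{F}/\underline{\sigma}\le\epsilon$ for the relevant mode-$k$ weight.

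For the factor-matrix estimates I would invoke Weyl's inequality repeatedly. Since $\bm{C}^{*\top}\bm{C}^*=\bm{I}$ and $\|\bm{C}-\bm{C}^*\|\le\epsilon<1$, we get $\sigma_{\min}(\bm{C})\ge1-\epsilon$, hence $\|\bm{C}(\bm{C}^\top\bm{C})^{-1}\|=\sigma_{\min}(\bm{C})^{-1}\le(1-\epsilon)^{-1}$; the bound on $\|\bm{C}(\bm{C}^\top\bm{C})^{-1}-\bm{C}^*\|$ follows by splitting this matrix as $(\bm{C}-\bm{C}^*)(\bm{C}^\top\bm{C})^{-1}+\bm{C}^*[(\bm{C}^\top\bm{C})^{-1}-\bm{I}]$, controlling $\|\bm{I}-\bm{C}^\top\bm{C}\|$ by $2\epsilon+\epsilon^2$, and—to extract the sharp constant $\sqrt{2}$—routing through the polar factor $\bm{C}(\bm{C}^\top\bm{C})^{-1/2}$. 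For $\widetilde{\bm{C}}$ the key structural facts are $\widetilde{\bm{C}}=(\bm{I}_n\otimes\bm{R})\cm{G}_{(1)}^\top$ and $\widetilde{\bm{C}}^{*\top}\widetilde{\bm{C}}^*=\bm{\Sigma}_C^2$ (because $\cm{B}^*_{(1)}=\bm{C}^*\widetilde{\bm{C}}^{*\top}$ with $\bm{C}^*$ an orthonormal left-singular-vector matrix), so $\widetilde{\bm{C}}^*\bm{\Sigma}_C^{-1}$ has orthonormal columns. Expanding $\widetilde{\bm{C}}\bm{\Sigma}_C^{-1}-\widetilde{\bm{C}}^*\bm{\Sigma}_C^{-1}$ in the three increments, with $\|\bm{R}-\bm{R}^*\|\le\epsilon$ and $\|(\cm{G}-\cm{G}^*)_{(1)}\bm{\Sigma}_C^{-1}\|\le\epsilon$, gives $\|(\widetilde{\bm{C}}-\widetilde{\bm{C}}^*)\bm{\Sigma}_C^{-1}\|\le(1+\epsilon)^3-1=3\epsilon+3\epsilon^2+\epsilon^3$; reading the same identity as a product of $\bm{I}_n\otimes\bm{R}$ and $(\bm{\Sigma}_C^{-1}\cm{G}_{(1)})^\top$, each of whose smallest singular values is $\ge1-\epsilon$, gives $\sigma_{\min}(\widetilde{\bm{C}}\bm{\Sigma}_C^{-1})\ge(1-\epsilon)^2$, whence $\|\widetilde{\bm{C}}(\widetilde{\bm{C}}^\top\widetilde{\bm{C}})^{-1}\bm{\Sigma}_C\|=\sigma_{\min}(\widetilde{\bm{C}}\bm{\Sigma}_C^{-1})^{-1}\le(1-\epsilon)^{-3}$ and $\|\bm{\Sigma}_C(\widetilde{\bm{C}}^\top\widetilde{\bm{C}})^{-1}\bm{\Sigma}_C\|\le(1-\epsilon)^{-6}$. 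The remaining estimate $\|\widetilde{\bm{C}}(\widetilde{\bm{C}}^\top\widetilde{\bm{C}})^{-1}\bm{\Sigma}_C-\widetilde{\bm{C}}\bm{\Sigma}_C^{-1}\|$ I would get by writing $\bm{M}=\widetilde{\bm{C}}\bm{\Sigma}_C^{-1}$, expressing the difference as $\bm{M}(\bm{M}^\top\bm{M})^{-1}(\bm{I}-\bm{M}^\top\bm{M})$, and bounding $\|\bm{I}-\bm{M}^\top\bm{M}\|$ in terms of $\|\bm{M}-\bm{M}^*\|$; the statements for $\bm{R},\widetilde{\bm{R}}$ are identical by symmetry.

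For the final Frobenius bound I would expand $\cm{G}\times_1\bm{C}\times_2\bm{R}-\cm{G}^*\times_1\bm{C}^*\times_2\bm{R}^*$ into the seven terms obtained by replacing each of $\bm{C},\bm{R},\cm{G}$ by its true value plus increment. The three first-order terms are handled using that $\bm{C}^*,\bm{R}^*$ are orthonormal and that $\bm{\Sigma}_C^{-1}\cm{G}^*_{(1)}$ and $\bm{\Sigma}_R^{-1}\cm{G}^*_{(2)}$ have operator norm $\le1$: e.g.\ the mode-$1$ unfolding of $\cm{G}^*\times_1(\bm{C}-\bm{C}^*)\times_2\bm{R}^*$ factors as $(\bm{C}-\bm{C}^*)\bm{\Sigma}_C\cdot[\bm{\Sigma}_C^{-1}\cm{G}^*_{(1)}(\bm{I}_n\otimes\bm{R}^*)^\top]$, so its Frobenius norm is $\le\|(\bm{C}-\bm{C}^*)\bm{\Sigma}_C\|_\textup{F}$, and the $\bm{R}$- and $\cm{G}$-terms are likewise bounded by the matching distance components. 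Each of the four higher-order terms carries an extra factor $\epsilon$ from one of the spectral bounds already established; splitting the cross terms with the arithmetic--geometric-mean inequality and collecting produces the prefactor $1+\tfrac32\epsilon+\epsilon^2+\tfrac14\epsilon^3$.

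The only genuinely delicate point in the whole argument is this last accounting—making sure each $\bm{\Sigma}_C$ or $\bm{\Sigma}_R$ weight is attached to the correct factor in every cross term so that the constant emerges exactly as stated rather than merely as some $O(1)$ quantity. Everything else is routine Weyl-type perturbation and submultiplicativity of the operator norm, and since the statement coincides verbatim with \citet[Lemma~10]{tong2021accelerating}, invoking that result directly is the cleanest option.
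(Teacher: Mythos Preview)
Your proposal is correct and matches the paper's approach exactly: the paper does not prove this lemma at all but simply states it as ``the perturbation bounds for the tensor Tucker decomposition (Lemma 10 in \citet{tong2021accelerating}),'' which is precisely the citation you identify as the cleanest option. Your self-contained sketch is a reasonable elaboration beyond what the paper provides; the one small inaccuracy is that in this paper's two-mode setting $\cm{B}=\cm{G}\times_1\bm{C}\times_2\bm{R}$ the quantity $\widetilde{\bm{C}}$ involves only two perturbed objects ($\bm{R}$ and $\cm{G}$), so the expansion has two increments rather than three, yielding the tighter $(1+\epsilon)^2-1$ and $(1-\epsilon)^{-2}$---the cubic bounds quoted in the lemma are inherited verbatim from the three-mode statement in \citet{tong2021accelerating} and remain valid as (slightly loose) upper bounds.
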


For the matrix factorization $\bm{L}\bm{R}^\top$, we also have the perturbation bound as in Lemmas 9 and 13 of \citet{tong2021accelerating}.

\begin{lemma}\label{lemma:matrix_perturb}
    Denote $\bm{\Delta}_L=\bm{L}-\bm{L}^*$ and $\bm{\Delta}_L=\bm{R}-\bm{R}^*$. Suppose that 
    \begin{equation}
       \|\bm{\Delta}_L\bm{\Sigma}^{1/2}\|_\textup{F}^2+\|\bm{\Delta}_R\bm{\Sigma}^{1/2}\|_\textup{F}^2 \leq \epsilon^2\sigma_r^2(\bm{L}^*\bm{R}^{*\top}).
    \end{equation}
    Then, we have
    \begin{equation}
        \begin{split}
            \max(\|\bm{\Delta}_L\bm{\Sigma}^{-1/2}\|,\|\bm{\Delta}_R\bm{\Sigma}^{-1/2}\|) & \leq \epsilon,\\
            \|\bm{L}(\bm{L}^\top\bm{L})^{-1}\bm{\Sigma}^{1/2}\|  \leq 1+C\epsilon,~~
            \|\bm{R}(\bm{R}^\top\bm{R})^{-1}\bm{\Sigma}^{1/2}\| & \leq 1+C\epsilon,\\
            \|\bm{L}(\bm{L}^\top\bm{L})^{-1}\bm{\Sigma}^{1/2}-\bm{U}^*\| \leq C\epsilon,~~
            \|\bm{R}(\bm{R}^\top\bm{R})^{-1}\bm{\Sigma}^{1/2}-\bm{V}^*\| & \leq C\epsilon.
        \end{split}
    \end{equation}
    In addition, 
    \begin{equation}
        \|\bm{L}\bm{R}^\top-\bm{L}^*\bm{R}^{*\top}\|_\textup{F} \leq \left(1+\frac{1}{2}\max(\|\bm{\Delta}_L\bm{\Sigma}^{-1/2}\|,\|\bm{\Delta}_R\bm{\Sigma}^{-1/2}\|)\right)(\|\bm{\Delta}_L\bm{\Sigma}^{1/2}\|_\textup{F}+\|\bm{\Delta}_R\bm{\Sigma}^{1/2}\|_\textup{F}).
    \end{equation}
\end{lemma}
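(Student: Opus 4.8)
The plan is to establish the four displayed estimates in sequence, all of which reduce to elementary perturbation algebra once we fix the balanced parametrization $\bm{L}^*=\bm{U}^*\bm{\Sigma}^{1/2}$, $\bm{R}^*=\bm{V}^*\bm{\Sigma}^{1/2}$ with $\bm{U}^{*\top}\bm{U}^*=\bm{V}^{*\top}\bm{V}^*=\bm{I}_r$ and $\bm{\Sigma}$ diagonal, so that $\sigma_r(\bm{L}^*\bm{R}^{*\top})=\sigma_r(\bm{\Sigma})$ is the smallest diagonal entry. First I would prove the spectral bound: since $\bm{\Delta}_L\bm{\Sigma}^{-1/2}=(\bm{\Delta}_L\bm{\Sigma}^{1/2})\bm{\Sigma}^{-1}$, we get $\|\bm{\Delta}_L\bm{\Sigma}^{-1/2}\|\le\|\bm{\Delta}_L\bm{\Sigma}^{1/2}\|_\textup{F}\,\sigma_r(\bm{\Sigma})^{-1}$, and likewise for $\bm{\Delta}_R$; combined with the hypothesis $\|\bm{\Delta}_L\bm{\Sigma}^{1/2}\|_\textup{F}^2+\|\bm{\Delta}_R\bm{\Sigma}^{1/2}\|_\textup{F}^2\le\epsilon^2\sigma_r(\bm{\Sigma})^2$ this yields $\rho:=\max(\|\bm{\Delta}_L\bm{\Sigma}^{-1/2}\|,\|\bm{\Delta}_R\bm{\Sigma}^{-1/2}\|)\le\epsilon$, which I then use freely below.

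Next I would handle the two pairs of pseudoinverse estimates by working with the rescaled Gram matrix. Set $\bm{E}:=\bm{U}^{*\top}\bm{\Delta}_L\bm{\Sigma}^{-1/2}$, so $\|\bm{E}\|\le\rho$; expanding $\bm{L}^\top\bm{L}=(\bm{L}^*+\bm{\Delta}_L)^\top(\bm{L}^*+\bm{\Delta}_L)$ and conjugating by $\bm{\Sigma}^{-1/2}$ gives $\bm{\Sigma}^{-1/2}\bm{L}^\top\bm{L}\bm{\Sigma}^{-1/2}=\bm{I}_r+\bm{\Xi}$ with $\bm{\Xi}=\bm{E}+\bm{E}^\top+(\bm{\Delta}_L\bm{\Sigma}^{-1/2})^\top(\bm{\Delta}_L\bm{\Sigma}^{-1/2})$ and $\|\bm{\Xi}\|\le2\rho+\rho^2<1$ for $\epsilon$ small, hence the matrix is invertible with $\|(\bm{I}_r+\bm{\Xi})^{-1}\|\le(1-2\rho-\rho^2)^{-1}$. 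Writing $\bm{L}(\bm{L}^\top\bm{L})^{-1}\bm{\Sigma}^{1/2}=(\bm{L}\bm{\Sigma}^{-1/2})(\bm{I}_r+\bm{\Xi})^{-1}$ and noting $\bm{L}\bm{\Sigma}^{-1/2}=\bm{U}^*+\bm{\Delta}_L\bm{\Sigma}^{-1/2}$ has norm at most $1+\rho$ gives $\|\bm{L}(\bm{L}^\top\bm{L})^{-1}\bm{\Sigma}^{1/2}\|\le(1+\rho)(1-2\rho-\rho^2)^{-1}=1+C\epsilon$; subtracting $\bm{U}^*$ and writing $\bm{L}(\bm{L}^\top\bm{L})^{-1}\bm{\Sigma}^{1/2}-\bm{U}^*=\bm{U}^*[(\bm{I}_r+\bm{\Xi})^{-1}-\bm{I}_r]+(\bm{\Delta}_L\bm{\Sigma}^{-1/2})(\bm{I}_r+\bm{\Xi})^{-1}$, with $\|(\bm{I}_r+\bm{\Xi})^{-1}-\bm{I}_r\|=\|(\bm{I}_r+\bm{\Xi})^{-1}\bm{\Xi}\|\le C\epsilon$, bounds the difference by $C\epsilon$. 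The corresponding bounds for $\bm{R}$ and $\bm{V}^*$ are identical by symmetry.

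Finally, the product-error bound is the cleanest and is where the exact constant $1+\tfrac12\rho$ appears, so I would present it last. Expand $\bm{L}\bm{R}^\top-\bm{L}^*\bm{R}^{*\top}=\bm{\Delta}_L\bm{R}^{*\top}+\bm{L}^*\bm{\Delta}_R^\top+\bm{\Delta}_L\bm{\Delta}_R^\top$. Since $\bm{U}^*,\bm{V}^*$ have orthonormal columns, $\|\bm{\Delta}_L\bm{R}^{*\top}\|_\textup{F}=\|\bm{\Delta}_L\bm{\Sigma}^{1/2}\|_\textup{F}$ and $\|\bm{L}^*\bm{\Delta}_R^\top\|_\textup{F}=\|\bm{\Delta}_R\bm{\Sigma}^{1/2}\|_\textup{F}$. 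For the cross term I would use the two identities $\bm{\Delta}_L\bm{\Delta}_R^\top=(\bm{\Delta}_L\bm{\Sigma}^{1/2})(\bm{\Delta}_R\bm{\Sigma}^{-1/2})^\top=(\bm{\Delta}_L\bm{\Sigma}^{-1/2})(\bm{\Delta}_R\bm{\Sigma}^{1/2})^\top$ (valid because $\bm{\Sigma}$ is diagonal), average them, and apply $\|AB\|_\textup{F}\le\|A\|_\textup{F}\|B\|$ to each factor, obtaining $\|\bm{\Delta}_L\bm{\Delta}_R^\top\|_\textup{F}\le\tfrac12(\|\bm{\Delta}_L\bm{\Sigma}^{1/2}\|_\textup{F}\|\bm{\Delta}_R\bm{\Sigma}^{-1/2}\|+\|\bm{\Delta}_L\bm{\Sigma}^{-1/2}\|\|\bm{\Delta}_R\bm{\Sigma}^{1/2}\|_\textup{F})\le\tfrac{\rho}{2}(\|\bm{\Delta}_L\bm{\Sigma}^{1/2}\|_\textup{F}+\|\bm{\Delta}_R\bm{\Sigma}^{1/2}\|_\textup{F})$; summing the three contributions gives the stated inequality. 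I do not anticipate a genuine obstacle: the only mild care needed is in the Gram-matrix step, tracking invertibility and distinguishing operator from Frobenius norms, and indeed this statement is precisely Lemmas 9 and 13 of \citet{tong2021accelerating}, which could alternatively be invoked verbatim.
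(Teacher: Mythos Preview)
Your proposal is correct and, in fact, supplies more detail than the paper itself: the paper does not prove this lemma but simply records it as a restatement of Lemmas~9 and~13 of \citet{tong2021accelerating}, exactly as you note at the end. Your elementary argument---the $\bm{\Sigma}^{-1}$ factorization for the spectral bound, the rescaled Gram expansion $\bm{\Sigma}^{-1/2}\bm{L}^\top\bm{L}\bm{\Sigma}^{-1/2}=\bm{I}_r+\bm{\Xi}$ for the pseudoinverse bounds, and the averaging trick on the cross term $\bm{\Delta}_L\bm{\Delta}_R^\top$ for the product-error bound---is precisely the content of those cited lemmas, so your approach and the paper's (by reference) coincide.
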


\section{Statistical Convergence Analysis}\label{append:C}

\subsection{Proof of Theorem \ref{thm:LinearModel}}

For simplicity, we consider the case $m=m_1=\cdots=m_n$, and the proof can be easily extended to the general case with $m\asymp m_1\asymp\cdots\asymp m_n$. For the linear trace regression
\begin{equation}
    Y_{ij} = \langle\bm{X}_{ij},\bm{B}_i^*\rangle + \varepsilon_{ij},\quad j=1,2,\dots,m\quad\text{and}\quad i=1,2,\dots,n,
\end{equation}
the loss function is
\begin{equation}
    \mathcal{L}_n(\bm{\Theta}) = \frac{1}{2}\sum_{i=1}^n\sum_{j=1}^{m}(Y_{ij} - \langle\bm{X}_{ij},\bm{C}\bm{L}_{1i}\bm{L}_{2i}^\top\bm{R}^\top\rangle)^2.
\end{equation}
Then, the de-scaled partial gradients are
\begin{equation}
    \begin{split}
        \bm{G}_C(\bm{\Theta}) & = \sum_{i=1}^n\sum_{j=1}^{m}(\langle\bm{X}_{ij},\bm{B}_i\rangle - Y_{ij})(\bm{e}_i(n)^\top\otimes\bm{X}_{ij})\widetilde{\bm{C}}(\widetilde{\bm{C}}^\top\widetilde{\bm{C}})^{-1/2}, \\
        \bm{G}_R(\bm{\Theta}) & = \sum_{i=1}^n\sum_{j=1}^{m}(\langle\bm{X}_{ij},\bm{B}_i\rangle - Y_{ij})(\bm{e}_i(n)^\top\otimes\bm{X}_{ij}^\top)\widetilde{\bm{R}}(\widetilde{\bm{R}}^\top\widetilde{\bm{R}})^{-1/2}, \\
        \bm{G}_{1i}(\bm{\Theta}) & = \sum_{j=1}^{m}(\langle\bm{X}_{ij},\bm{B}_i\rangle - Y_{ij})(\bm{C}^\top\bm{C})^{-1/2}\bm{C}^\top\bm{X}_{ij}\bm{R}_i(\bm{R}_i^\top\bm{R}_i)^{-1/2}, \\
        \bm{G}_{2i}(\bm{\Theta}) & = \sum_{j=1}^{m}(\langle\bm{X}_{ij},\bm{B}_i\rangle - Y_{ij})(\bm{R}^\top\bm{R})^{-1/2}\bm{R}^\top\bm{X}_{ij}^\top\bm{C}_i(\bm{C}_i^\top\bm{C}_i)^{-1/2}.
    \end{split}
\end{equation}
Let $\widetilde{\bm{C}}(\widetilde{\bm{C}}^\top\widetilde{\bm{C}})^{-1/2}=\breve{\bm{C}}$, $\widetilde{\bm{R}}(\widetilde{\bm{R}}^\top\widetilde{\bm{R}})^{-1/2}=\breve{\bm{R}}$, $\bm{R}_i(\bm{R}_i^\top\bm{R}_i)^{-1/2} = \breve{\bm{R}}_i$, $\bm{C}_i(\bm{C}_i^\top\bm{C}_i)^{-1/2} = \breve{\bm{C}}_i$, $\bm{C}(\bm{C}^\top\bm{C})^{-1/2} = \ddot{\bm{C}}$, and $\bm{R}(\bm{R}^\top\bm{R})^{-1/2} = \ddot{\bm{R}}$. They are matrices with orthonormal columns.

In the following, the proof consists of four steps. In the first two steps, we establish the stability of $\bm{G}_C(\bm{\Theta})$, $\bm{G}_R(\bm{\Theta})$, $\bm{G}_{1i}(\bm{\Theta})$, and $\bm{G}_{2i}(\bm{\Theta})$, respectively. In the third step, we extend the results to the initialization. In the last step, we verify that the data-driven rank selection is asymptotically consistent and conclude the proof. Therefore, in the first two steps, we assume that the ranks $(r,K_1,K_2)$ are known and correct.\\

\noindent \textit{Step 1} (Stability of $\bm{G}_C(\bm{\Theta})$ and $\bm{G}_R(\bm{\Theta})$)

\noindent We first prove the stability of $\bm{G}_C(\bm{\Theta})$ with $p_1=s_1$. Note that $\|\bm{\Delta}_C\|_\text{F}^2=\|\bm{G}_C(\bm{\Theta})-\mathbb{E}[\bm{G}_C(\bm{\Theta})]\|_\text{F}^2$. For the given $\bm{\Theta}$, denote $\breve{\bm{C}}=[\bm{v}_1,\bm{v}_2,\dots,\bm{v}_{K_1}]$, where each $\bm{v}_\ell$ is a $(p_2n)$-dimensional vector with $\bm{v}_{\ell}=(\bm{v}_{\ell1}^\top,\dots,\bm{v}_{\ell n}^\top)^\top$ and each $\bm{v}_{\ell i}$ is a $p_2$-dimensional vector. Since $\breve{\bm{C}}$ is an orthonormal matrix, $\sum_{i=1}^n\|\bm{v}_{\ell i}\|_2^2=1$.

Note that for $1\leq k\leq p_1$ and $1\leq \ell\leq K_1$, the $(k,\ell)$-th entry of $\bm{G}_C(\bm{\Theta})$ is
\begin{equation}
    \begin{split}
        \bm{e}_k^\top(p_1)\bm{G}_C(\bm{\Theta})\bm{e}_\ell(K_1) & = \sum_{i=1}^n\sum_{j=1}^{m_i}(\langle\bm{X}_{ij},\bm{B}_i\rangle-Y_{ij})\bm{e}_k^\top(p_1)(\bm{e}_i(n)^\top\otimes\bm{X}_{ij})\bm{v}_\ell \\
        & = \sum_{i=1}^n\sum_{j=1}^{m_i}(\langle\bm{X}_{ij},\bm{B}_i\rangle-Y_{ij})\bm{e}_k^\top(p_1)\bm{X}_{ij}\bm{v}_{\ell i}.
    \end{split}
\end{equation}
Therefore, the $(k,\ell)$-th entry of $\bm{G}_C(\bm{\Theta})-\mathbb{E}[\bm{G}_C(\bm{\Theta})]$ is
\begin{equation}
    \begin{split}    
        & \underbrace{\sum_{i=1}^n\sum_{j=1}^{m}(-\varepsilon_{ij})\bm{e}_k^\top(p_1)\bm{X}_{ij}\bm{v}_{\ell i}}_{T_{k\ell 1}}\\
        & + \underbrace{\sum_{i=1}^n\sum_{j=1}^{m}
        (\langle\bm{X}_{ij},\bm{B}_i-\bm{B}_i^*\rangle\bm{e}_k^\top(p_1)\bm{X}_{ij}\bm{v}_{\ell i}-\mathbb{E}[\langle\bm{X}_{ij},\bm{B}_i-\bm{B}_i^*\rangle\bm{e}_k^\top(p_1)\bm{X}_{ij}\bm{v}_{\ell i}])}_{T_{k\ell 2}}.
    \end{split}
\end{equation}

Hence, we first focus on $\sum_{i=1}^n\sum_{j=1}^{m}\varepsilon_{ij}\bm{e}_k^\top(p_1)\bm{X}_{ij}\bm{v}_{\ell i}=\sum_i\sum_j \varepsilon_{ij}z_{ijk\ell}$, where $\varepsilon_{ij}$ are independent $\sigma^2$-sub-Gaussian random variables. Given $\{z_{ijk\ell}\}$, $\sum_{i=1}^n\sum_{j=1}^{m}\varepsilon_{ij}z_{ijk\ell}$ is a $(\sigma^2\sum_{i=1}^n\sum_{j=1}^{m}z_{ijk\ell}^2)$-sub-Gaussian random variable. Thus, denote
\begin{equation}
    S_{k\ell} = \sum_{i=1}^n\sum_{j=1}^{m}\varepsilon_{ij}z_{ijk\ell},~~~R_{k\ell} = \sum_{i=1}^n\sum_{j=1}^{m}z_{ijk\ell}^2.
\end{equation}
Then, for all $\lambda>0$,
\begin{equation}
    \mathbb{E}[\exp(\lambda S_{k\ell})|R_{k\ell}] \leq \frac{\lambda^2\sigma^2 R_{k\ell}}{2}.
\end{equation}
Thus, by the standard Chernoff argument, for any $T_1,T_2>0$,
\begin{equation}
    \begin{split}
        & \mathbb{P}[\{S_{k\ell}\geq T_1\}\cap\{R_{k\ell}\leq T_2\}] \\
        = & \inf_{\lambda>0}\mathbb{P}[\{\exp(\lambda S_{k\ell})\geq \exp(\lambda T_1)\}\cap\{R_{k\ell}\leq T_2\}] \\
        = & \inf_{\lambda>0}\mathbb{P}[\{\exp(\lambda S_{k\ell})\mathbb{I}(R_{k\ell}\leq T_2) \geq \exp(\lambda T_1)] \\
        \leq & \inf_{\lambda>0}\exp(-\lambda T_1)\mathbb{E}[\exp(\lambda S_{k\ell})\mathbb{I}(R_{k\ell}\leq T_2)] \\
        = & \inf_{\lambda>0}\exp(-\lambda T_1 + \sigma^2\lambda^2 T_2/2)\mathbb{E}[\exp(\lambda S_{k\ell} - \sigma^2\lambda^2T_2/2)\mathbb{I}(R_{k\ell}\leq T_2)] \\
        \leq & \inf_{\lambda>0}\exp(-\lambda T_1 + \sigma^2\lambda^2 T_2/2)\mathbb{E}[\exp(\lambda S_{k\ell} - \sigma^2\lambda^2R_n/2)]\\
        \leq & \inf_{\lambda>0}\exp(-\lambda T_1 + \sigma^2\lambda^2 T_2/2) = \exp\left(-\frac{T_1^2}{2\sigma^2T_2}\right).
    \end{split}
\end{equation}
By definition, $z_{ijk\ell}$ is $(\kappa^2\|\bm{v}_{\ell i}\|_2^2)$-sub-Gaussian, and $z_{ijk\ell}^2$ is $(4\sqrt{2}\kappa^2\|\bm{v}_{\ell i}\|_2^2,4\kappa^2\|\bm{v}_{\ell i}\|_2^2)$-sub-exponential.
Hence, for any $t>0$,
\begin{equation}
    \mathbb{P}[R_{k\ell}-\mathbb{E}[R_{k\ell}]\geq t] \leq \exp\left(-\min\left(\frac{t^2}{32m\kappa^2},\frac{t}{4\kappa^2}\right)\right)
\end{equation}
and $\mathbb{E}[R_{k\ell}]\leq m\bar{\beta}_x^2$. Hence, we have that
\begin{equation}
    \begin{split}
        & \mathbb{P}[S_{k\ell}\geq C\sqrt{m\log(p_1)}\sigma\bar{\beta}_x]\\
        \leq & \mathbb{P}[\{S_{k\ell}\geq C\sqrt{m\log(p_1)}\sigma\bar{\beta}_x\}\cap\{R_n\leq 2m\bar{\beta}_x^2\}] + \mathbb{P}(R_n>2m\bar{\beta}_x^2)\\
        \leq & \exp\left(-C\log(p_1)\right) + \exp\left(-\frac{m\bar{\beta}_x^4}{32\kappa^2}\right) \leq \exp(-C\log(p_1))+\exp(-Cm\bar{\beta}_x^4\kappa^{-2}).
    \end{split}
\end{equation}
Therefore,
\begin{equation}
    \mathbb{P}\left[\max_{k,\ell}S_{k\ell}\geq C\sqrt{m\log(p_1)}\sigma\bar{\beta}_x\right]\leq C\exp(-C\log(p_1))+\exp(-Cm\bar{\beta}_x^4\kappa^{-2}),
\end{equation}
and with probability at least $1-C\exp(-C\log(p_1))-C\exp(-Cm\bar{\beta}_x^4\kappa^{-2})$,
\begin{equation}
    \sum_{1\leq k\leq p_1}\sum_{1\leq \ell\leq K_1}T_{k\ell 1}^2 \leq Cmp_1\log(p_1)\sigma^2\bar{\beta}_x^2
\end{equation}

Next, we consider the bound of $T_{k\ell2}$. Note that
\begin{equation}
    \langle\bm{X}_{ij},\bm{B}_i-\bm{B}_i^*\rangle\bm{e}_k^\top(p_1)\bm{X}_{ij}\bm{v}_{\ell i} = \text{vec}(\bm{B}_i-\bm{B}_i^*)^\top\text{vec}(\bm{X}_{ij})\text{vec}(\bm{X}_{ij})^\top(\bm{v}_{\ell i}\otimes\bm{e}_k(p_1)).
\end{equation}
Since $\bm{X}_{ij}$ is $\kappa^2$-sub-Gaussian, $\langle\bm{X}_{ij},\bm{B}_i-\bm{B}_i^*\rangle\bm{e}_k^\top(p_1)\bm{X}_{ij}\bm{v}_{\ell i}$ is sub-exponential with parameters $(4\sqrt{2}\kappa^2\|\bm{B}_i-\bm{B}_i^*\|_\text{F}^2\cdot\|\bm{v}_{\ell i}\|_2^2,4\kappa^2\|\bm{B}_i-\bm{B}_i^*\|_\text{F}^2\cdot\|\bm{v}_{\ell i}\|_2^2)$. Thus, we have that, for any $t>0$,
\begin{equation}
    \mathbb{P}(T_{k\ell 2}\geq t)\leq \exp\left(-\min\left(\frac{nt^2}{32m\|\cm{B}-\cm{B}^*\|_\text{F}^2\kappa^2},\frac{nt}{4\|\cm{B}-\cm{B}^*\|_\text{F}^2\kappa^2}\right)\right).
\end{equation}
Hence,
\begin{equation}
    \mathbb{P}\left[\max_{k,\ell}T_{k\ell 2}\geq \sqrt{m\log(p_1)/n}\kappa\|\cm{B}-\cm{B}^*\|_\text{F}\right]\leq \exp(-C\log(p_1)),
\end{equation}
and with probability at least $1-C\exp(-C\log(p_1))$,
\begin{equation}
    \sum_{1\leq k\leq p_1}\sum_{1\leq \ell\leq K_1}T_{k\ell 2}^2 \leq \frac{\kappa^2mp_1\log(p_1)}{n}\|\cm{B}-\cm{B}^*\|_\text{F}^2.
\end{equation}
Combining the results, with probability at least $1-C\exp(-C\log(p_1))-C\exp(-Cm\bar{\beta}_x^4\kappa^{-2})$,
\begin{equation}
    \|\bm{G}_C(\bm{\Theta})-\mathbb{E}[\bm{G}_C(\bm{\Theta})]\|_\text{F}^2 \lesssim \frac{\kappa^2mp_1\log(p_1)}{n}\|\cm{B}-\cm{B}^*\|_\text{F}^2 + mp_1\log(p_1)\sigma^2\bar{\beta}_x^2.
\end{equation}
Similarly, for $\bm{G}_R(\bm{\Theta})$, with probability at least $1-C\exp(-C\log(p_2))-C\exp(-Cm\bar{\beta}_x^4\kappa^{-2})$,
\begin{equation}
    \|\bm{G}_R(\bm{\Theta})-\mathbb{E}[\bm{G}_R(\bm{\Theta})]\|_\text{F}^2 \lesssim \frac{\kappa^2mp_2\log(p_2)}{n}\|\cm{B}-\cm{B}^*\|_\text{F}^2 + mp_2\log(p_2)\sigma^2\bar{\beta}_x^2.
\end{equation}

More generally, for a generic pair of sparsity levels $(s_1,s_2)$, as in the proof of Theorem \ref{thm:2}, the sparsity levels $(s_1+s_1^*,s_2+s_2^*)$ are directly used in the partial gradient submatrices. Therefore, we have
\begin{equation}
    \|\{\bm{G}_C(\bm{\Theta})-\mathbb{E}[\bm{G}_C(\bm{\Theta})]\}_{S_1}\|_\text{F}^2 \lesssim \frac{\kappa^2ms_1\log(p_1)}{n}\|\cm{B}-\cm{B}^*\|_\text{F}^2 + ms_1\log(p_1)\sigma^2\bar{\beta}_x^2,
\end{equation}
and
\begin{equation}
    \|\{\bm{G}_R(\bm{\Theta})-\mathbb{E}[\bm{G}_R(\bm{\Theta})]\}_{S_2}\|_\text{F}^2 \lesssim \frac{\kappa^2ms_2\log(p_2)}{n}\|\cm{B}-\cm{B}^*\|_\text{F}^2 + ms_2\log(p_2)\sigma^2\bar{\beta}_x^2.
\end{equation}~

\noindent \textit{Step 2} (Stability of $\bm{G}_{1i}(\bm{\Theta})$ and $\bm{G}_{2i}(\bm{\Theta})$)

\noindent Next, we prove the stability of $\bm{G}_{1i}(\bm{\Theta})$ and $\bm{G}_{2i}(\bm{\Theta})$. For the given $\bm{\Theta}$, denote $\ddot{\bm{C}}=[\bm{u}_1,\dots,\bm{u}_K]$ and $\breve{\bm{R}}_i=[\bm{w}_1,\dots,\bm{w}_r]$, where each $\bm{u}_k$ or $\bm{w}_\ell$ is a vector with unit Euclidean norm.

Note that for $1\leq k\leq K_1$ and $1\leq \ell\leq r$, the $(k,\ell)$-th entry of $\bm{G}_{1i}(\bm{\Theta})$ is
\begin{equation}
    \begin{split}
        \bm{e}_k^\top(K_1)\bm{G}_{1i}(\bm{\Theta})\bm{e}_\ell(r) = \sum_{j=1}^m(\langle\bm{X}_{ij},\bm{B}_i\rangle-Y_{ij})\bm{u}_k^\top\bm{X}_{ij}\bm{w}_\ell.
    \end{split}
\end{equation}
Therefore, the $(k,\ell)$-th entry of $\bm{G}_{1i}(\bm{\Theta})-\mathbb{E}[\bm{G}_{1i}(\bm{\Theta})]$ is
\begin{equation}
    \underbrace{\sum_{j=1}^m(-\varepsilon_{ij})\bm{u}_k^\top\bm{X}_{ij}\bm{w}_\ell}_{T_{ik\ell 1}} + \underbrace{\sum_{j=1}^m(\langle\bm{X}_{ij},\bm{B}_i-\bm{B}_i^*\rangle \bm{u}_k^\top\bm{X}_{ij}\bm{w}_\ell - \mathbb{E}[\langle\bm{X}_{ij},\bm{B}_i-\bm{B}_i^*\rangle \bm{u}_k^\top\bm{X}_{ij}\bm{w}_\ell]) }_{T_{ik\ell 2}}.
\end{equation}

For $T_{ik\ell1}$, we consider $\sum_{j=1}^m\varepsilon_{ij}\bm{u}_k^\top\bm{X}_{ij}\bm{w}_\ell=\sum_{j=1}^m\varepsilon_{ij}s_{ijk\ell}$. Given $\{s_{ijk\ell}\}$, $\sum_{j=1}^m\varepsilon_{ij}s_{ijk\ell}$ is a $(\sigma^2\sum_{j=1}^ms_{ijk\ell}^2)$-sub-Gaussian random variable. Thus, denote
\begin{equation}
    S_{ik\ell} = \sum_{j=1}^{m}\varepsilon_{ij}s_{ijk\ell},~~~R_{ik\ell} = \sum_{j=1}^{m}s_{ijk\ell}^2.
\end{equation}
Then, for all $\lambda>0$,
\begin{equation}
    \mathbb{E}[\exp(\lambda S_{ik\ell})|R_{ik\ell}] \leq \frac{\lambda^2\sigma^2 R_{ik\ell}}{2}.
\end{equation}
Thus, by the standard Chernoff argument, for any $T_1,T_2>0$,
\begin{equation}
    \begin{split}
        & \mathbb{P}[\{S_{ik\ell}\geq T_1\}\cap\{R_{ik\ell}\leq T_2\}] \\
        = & \inf_{\lambda>0}\mathbb{P}[\{\exp(\lambda S_{ik\ell})\geq \exp(\lambda T_1)\}\cap\{R_{ik\ell}\leq T_2\}] \\
        = & \inf_{\lambda>0}\mathbb{P}[\{\exp(\lambda S_{ik\ell})\mathbb{I}(R_{ik\ell}\leq T_2) \geq \exp(\lambda T_1)] \\
        \leq & \inf_{\lambda>0}\exp(-\lambda T_1)\mathbb{E}[\exp(\lambda S_{ik\ell})\mathbb{I}(R_{ik\ell}\leq T_2)] \\
        = & \inf_{\lambda>0}\exp(-\lambda T_1 + \sigma^2\lambda^2 T_2/2)\mathbb{E}[\exp(\lambda S_{ik\ell} - \sigma^2\lambda^2T_2/2)\mathbb{I}(R_{ik\ell}\leq T_2)] \\
        \leq & \inf_{\lambda>0}\exp(-\lambda T_1 + \sigma^2\lambda^2 T_2/2)\mathbb{E}[\exp(\lambda S_{ik\ell} - \sigma^2\lambda^2R_{ik\ell}/2)]\\
        \leq & \inf_{\lambda>0}\exp(-\lambda T_1 + \sigma^2\lambda^2 T_2/2) = \exp\left(-\frac{T_1^2}{2\sigma^2T_2}\right).
    \end{split}
\end{equation}
By definition, $s_{ijk\ell}$ is $\kappa^2$-sub-Gaussian, and $s_{ijk\ell}^2$ is $(4\sqrt{2}\kappa^2,4\kappa^2)$-sub-exponential.
Hence, for any $t>0$,
\begin{equation}
    \mathbb{P}[R_{ik\ell}-\mathbb{E}[R_{ik\ell}]\geq t] \leq \exp\left(-\min\left(\frac{t^2}{32m\kappa^2},\frac{t}{4\kappa^2}\right)\right)
\end{equation}
and $\mathbb{E}[R_{ik\ell}]\leq m\bar{\beta}_x^2$. Hence, we have that
\begin{equation}
    \begin{split}
        & \mathbb{P}[S_{ik\ell}\geq C\sqrt{m\log(nK_1)}\sigma\bar{\beta}_x]\\
        \leq & \mathbb{P}[\{S_{ik\ell}\geq C\sqrt{m\log(nK_1)}\sigma\bar{\beta}_x\}\cap\{R_{ik\ell}\leq 2m\bar{\beta}_x^2\}] + \mathbb{P}(R_{ik\ell}>2m\bar{\beta}_x^2)\\
        \leq & \exp\left(-C\log(nK_1)\right) + \exp\left(-\frac{m\bar{\beta}_x^4}{32\kappa^2}\right) \leq \exp(-C\log(n))+\exp(-Cm\bar{\beta}_x^4\kappa^{-2}).
    \end{split}
\end{equation}
Therefore,
\begin{equation}
    \mathbb{P}\left[\max_{i,k,\ell}S_{ik\ell}\geq C\sqrt{m\log(nK_1)}\sigma\bar{\beta}_x\right]\leq C\exp(-C\log(nK_1))+\exp(-Cm\bar{\beta}_x^4\kappa^{-2}),
\end{equation}
and with probability at least $1-C\exp(-C\log(nK_1))-C\exp(-Cm\bar{\beta}_x^4\kappa^{-2})$, for all $i=1,\dots,n$,
\begin{equation}
    \sum_{1\leq k\leq K_1}\sum_{1\leq \ell\leq r}T_{ik\ell 1}^2 \leq CmK_1\log(nK_1)\sigma^2\bar{\beta}_x^2.
\end{equation}

Next, we consider the bound of $T_{ik\ell2}$. Note that
\begin{equation}
    \langle\bm{X}_{ij},\bm{B}_i-\bm{B}_i^*\rangle\bm{u}_{k}^\top\bm{X}_{ij}\bm{w}_\ell = \text{vec}(\bm{B}_i-\bm{B}_i^*)^\top\text{vec}(\bm{X}_{ij})\text{vec}(\bm{X}_{ij})^\top(\bm{w}_\ell\otimes\bm{u}_k).
\end{equation}
Since $\bm{X}_{ij}$ is $\kappa^2$-sub-Gaussian, $\langle\bm{X}_{ij},\bm{B}_i-\bm{B}_i^*\rangle\bm{u}_{k}^\top\bm{X}_{ij}\bm{w}_\ell$ is sub-exponential with parameters $(4\sqrt{2}\kappa^2\|\bm{B}_i-\bm{B}_i^*\|_\text{F}^2,4\kappa^2\|\bm{B}_i-\bm{B}_i^*\|_\text{F}^2)$. Thus, we have that, for any $t>0$,
\begin{equation}
    \mathbb{P}(T_{ik\ell 2}\geq t)\leq \exp\left(-\min\left(\frac{t^2}{32m\|\bm{B}_i-\bm{B}_i^*\|_\text{F}^2\kappa^2},\frac{t}{4\|\bm{B}_i-\bm{B}_i^*\|_\text{F}^2\kappa^2}\right)\right).
\end{equation}
Hence,
\begin{equation}
    \mathbb{P}\left[\max_{i,k,\ell}T_{ik\ell2}\geq\sqrt{m\log(nK_1)}\kappa\|\bm{B}_i-\bm{B}_i^*\|_\text{F}\right]\leq \exp(-C\log(nK_1)),
\end{equation}
and with probability at least $1-C\exp(-C\log(nK_1))$, for any $i=1,\dots,n$,
\begin{equation}
    \sum_{1\leq k\leq K_1}\sum_{1\leq \ell\leq r}T_{ik\ell 2}^2\leq C\kappa^2mK_1\log(nK_1)\|\bm{B}_i-\bm{B}_i^*\|_\text{F}^2.
\end{equation}

Hence, with probability at least $1-C\exp(-C\log(nK_1))-C\exp(-Cm\bar{\beta}_x^4\kappa^{-2})$, for all $1\leq i\leq n$,
\begin{equation}
    \|\bm{G}_{1i}(\bm{\Theta})-\mathbb{E}[\bm{G}_{1i}(\bm{\Theta})]\|_\text{F}^2 \lesssim \kappa^2mK_1\log(nK_1)\|\bm{B}_i-\bm{B}_i^*\|_\text{F}^2 + mK_1\log(nK_1)\sigma^2\bar{\beta}_x^2.
\end{equation}
Similarly, for $\bm{G}_{i2}$, with probability at least $1-C\exp(-C\log(nK_2))-C\exp(-Cm\bar{\beta}_x^4\kappa^{-2})$,
\begin{equation}
    \|\bm{G}_{2i}(\bm{\Theta})-\mathbb{E}[\bm{G}_{2i}(\bm{\Theta})]\|_\text{F}^2 \lesssim \kappa^2mK_2\log(nK_2)\|\bm{B}_i-\bm{B}_i^*\|_\text{F}^2 + mK_2\log(nK_2)\sigma^2\bar{\beta}_x^2.
\end{equation}~

\noindent\textit{Step 3} (Initial error)

\noindent For initialization, the gradient stability in Algorithm \ref{alg:2} can be directly extended from Step 2 by setting $K_1=s_1$ and $K_2=s_2$. In other words, with probability at least $1-C\exp(-C\log(n\bar{s}))-C\exp(-Cm\bar{\beta}_x^4\kappa^{-2})$, for any $i=1,\dots,n$ and generic index sets $S_1$ and $S_2$,
\begin{equation}
    \begin{split}
        \|\{\bm{G}_{C_i}(\bm{\Theta}) - \mathbb{E}[\bm{G}_{C_i}(\bm{\Theta})]\}_{S_1}\|_\text{F}^2 & \lesssim \kappa^2m\bar{s}\log(n\bar{s})\|\bm{B}-\bm{B}_i\|_\text{F}^2 + m\bar{s}\log(n\bar{s})\sigma^2\bar{\beta}_x^2,\\
        \|\{\bm{G}_{R_i}(\bm{\Theta}) - \mathbb{E}[\bm{G}_{R_i}(\bm{\Theta})]\}_{S_2}\|_\text{F}^2 & \lesssim \kappa^2m\bar{s}\log(n\bar{s})\|\bm{B}-\bm{B}_i\|_\text{F}^2 + m\bar{s}\log(n\bar{s})\sigma^2\bar{\beta}_x^2.
    \end{split}
\end{equation}

Therefore, by Corollary \ref{cor:2}, when $m\gtrsim\ubar{\alpha}_x^{-2}\kappa^2\bar{s}\log(n\bar{s})$, for any $i=1,\dots,n$,
\begin{equation}
    \|\widecheck{\bm{B}}_i-\bm{B}_i^*\|_\text{F}^2 \lesssim \frac{\bar{s}\log(n\bar{s})\sigma^2\ubar{\alpha}_x^{-2}\bar{\beta}_x^2}{m}
\end{equation}
and when $m\gtrsim \bar{s}\bar{\beta}_x^3\ubar{\alpha}_x^{-3}$,
\begin{equation}
    \text{dist}(\widecheck{\bm{\Theta}},\bm{\Theta}^*)^2 \lesssim nm^{-1}\bar{s}\log(n\bar{s})\sigma^2\ubar{\alpha}_x^{-2}\bar{\beta}_x^2 \lesssim \ubar{\alpha}_x\bar{\beta}_x^{-1}\ubar{\sigma}^2.
\end{equation}
Therefore, the intial error bound requirement is satisfied if $m\gtrsim\bar{s}\log(n\bar{s})\bar{\beta}_x^3\ubar{\alpha}_x^{-3}$.\\

\noindent\textit{Step 4} (Rank selection consistency)

\noindent By Step 3, for a finite $\bar{r}$, we have that for any $i=1,\dots,n$,
\begin{equation}
    \|\widecheck{\bm{B}}_i(\bar{r})-\bm{B}_i^*\|_\text{F}^2 \lesssim \frac{\bar{s}\log(n\bar{s})\sigma^2\ubar{\alpha}_x^{-2}\bar{\beta}_x^2}{m}.
\end{equation}
Obviously, $\text{rank}(\widecheck{\bm{B}}_i(\bar{r}) - \bm{B}_i^*)\leq\bar{r}+r$. By definition,
\begin{equation}
    \|\widecheck{\bm{B}}_i(\bar{r})-\bm{B}_i^*\|_\text{F}^2 = \sum_{j=1}^{\bar{r}+r}\sigma_j^2(\widecheck{\bm{B}}_i(\bar{r})-\bm{B}_i^*).
\end{equation}
By Mirsky's singular value inequality,
\begin{equation}
    \sum_{j=1}^{\bar{r}+r}[\sigma_j(\widecheck{\bm{B}}_i(\bar{r}))-\sigma_j(\bm{B}_i^*)]^2 \leq \sum_{j=1}^{\bar{r}+r}\sigma_j^2(\widecheck{\bm{B}}_i(\bar{r})-\bm{B}_i^*) = \|\widecheck{\bm{B}}_i(\bar{r})-\bm{B}_i^*\|_\text{F}^2.
\end{equation}
It is followed by
\begin{equation}
    \max_{1\leq j\leq \bar{r}+r}|\sigma_j(\widecheck{\bm{B}}_i(\bar{r}))-\sigma_j(\bm{B}_i^*)| \leq \left\{\sum_{j=1}^{\bar{r}+r}[\sigma_j(\widecheck{\bm{B}}_i(\bar{r}))-\sigma_j(\bm{B}_i^*)]^2\right\}^{1/2} = \|\widecheck{\bm{B}}_i(\bar{r})-\bm{B}_i^*\|_\text{F}.
\end{equation}
For any $1\leq j\leq \bar{r}$, as $\delta_1(n,m,\bar{s}) \asymp n\bar{s}^{1/4}m^{-1/4}$ and $m\gtrsim\bar{s}$,
\begin{equation}
    \begin{split}
        & \sum_{i=1}^n\sigma_j(\widecheck{\bm{B}}_i(\bar{r}))+\delta_1(n,m,\bar{s})\\
        & = \sum_{i=1}^n\sigma_j(\bm{B}_i^*) + \sum_{i=1}^n[\sigma_j(\widecheck{\bm{B}}_i(\bar{r}))-\sigma_j(\bm{B}_i^*)] + \delta_1(n,m,\bar{s}).
    \end{split}
\end{equation}
For $j>r$, $\sum_{i=1}^n\sigma_j(\bm{B}_i^*)=0$ and $|\sum_{i=1}^n[\sigma_j(\widecheck{\bm{B}}_i(\bar{r}))-\sigma_j(\bm{B}_i^*)]|\lesssim n[m^{-1}\bar{s}\log(n\bar{s})\sigma^2\ubar{\alpha}_x^{-2}\bar{\beta}_x^2]^{1/2}=o(\delta_1(n,m,\bar{s}))$. For $j\leq r$, $\delta_1(n,m,\bar{s})=o(\sum_{i=1}^n\sigma_r(\bm{B}_i^*))=O(n)$. Therefore, for $j>r$, as $m\to\infty$,
\begin{equation}
    \frac{\sum_{i=1}^n\sigma_{j+1}(\widecheck{\bm{B}}_i(\bar{r}))+\delta_1(n,m,\bar{s})}{\sum_{i=1}^n\sigma_{j}(\widecheck{\bm{B}}_i(\bar{r}))+\delta_1(n,m,\bar{s})} \to \frac{\delta_1(n,m,\bar{s})}{\delta_1(n,m,\bar{s})} = 1.
\end{equation}
For $j<r$,
\begin{equation}
    \frac{\sum_{i=1}^n\sigma_{j+1}(\widecheck{\bm{B}}_i(\bar{r}))+\delta_1(n,m,\bar{s})}{\sum_{i=1}^n\sigma_{j}(\widecheck{\bm{B}}_i(\bar{r}))+\delta_1(n,m,\bar{s})} \to \frac{\sum_{i=1}^n\sigma_{j+1}(\widecheck{\bm{B}}_i^*)}{\sum_{i=1}^n\sigma_{j}(\widecheck{\bm{B}}_i^*)}.
\end{equation}
For $j=r$,
\begin{equation}
    \frac{\sum_{i=1}^n\sigma_{j+1}(\widecheck{\bm{B}}_i(\bar{r}))+\delta_1(n,m,\bar{s})}{\sum_{i=1}^n\sigma_{j}(\widecheck{\bm{B}}_i(\bar{r}))+\delta_1(n,m,\bar{s})} \to \frac{\delta_1(n,m,\bar{s})}{\sum_{i=1}^n\sigma_{r}(\widecheck{\bm{B}}_i^*)}=o\left(\min_{1\leq j\leq r-1}\frac{\sum_{i=1}^n\sigma_{j+1}(\widecheck{\bm{B}}_i^*)}{\sum_{i=1}^n\sigma_{j}(\widecheck{\bm{B}}_i^*)}\right).
\end{equation}
The selection consistency of $\widehat{r}$ stems from these results.

Next, by the approximate equivalence of $\|\widecheck{\bm{B}}_i-\bm{B}_i^*\|_\text{F}^2$ and $\|(\widecheck{\bm{C}}_i-\bm{C}_i^*)\bm{\Sigma}_i^{1/2}\|_\textup{F}^2+\|(\widecheck{\bm{R}}_i-\bm{R}_i^*)\bm{\Sigma}_i^{1/2}\|_\textup{F}^2$, we have that
\begin{equation}
    \left\|\bm{M}_C-\sum_{i=1}^n\bm{C}_i^*\bm{C}_i^{*\top}\right\|^2 \leq nm^{-1}\bar{s}\log(n\bar{s})\sigma^2\ubar{\alpha}_x^{-2}\bar{\beta}_x^2.
\end{equation}
In addition, by Assumption \ref{asmp:LinearModel3}, $\|\sum_{i=1}^n\bm{C}_i^*\bm{C}_i^{*\top}\|^2=n$. Therefore, by the similar arguments for the selection of $r$, when $\delta_2\asymp nm^{-1/2}\bar{s}^{1/2}$, the selection consistency of $K_1$ can be established. Similarly, the rank selection consistency of $K_2$ can be derived.

\subsection{Generalized Linear Model}

Similar to Theorem \ref{thm:LinearModel}, we also consider the simple case with $m=m_1=\cdots=m_n$, and the proof can be easily extended to $m\asymp m_1\asymp\cdots m_n$. We consider the generalized linear model with loss function
\begin{equation}
    \mathcal{L}(\bm{B}_i;\bm{X}_{ij},Y_{ij})=g(\langle\bm{X}_{ij},\bm{B}_i\rangle)-Y_{ij}\langle\bm{X}_{ij},\bm{B}_i\rangle,\quad j=1,\dots,m\quad\text{and}\quad i=1,\dots,n.
\end{equation}
Then, the de-scaled partial gradients are
\begin{equation}
    \begin{split}
        \bm{G}_C(\bm{\Theta}) & = \sum_{i=1}^n\sum_{j=1}^m [g'(\langle\bm{X}_{ij},\bm{B}_i\rangle)-Y_{ij}](\bm{e}_i(n)^\top\otimes\bm{X}_{ij})\widetilde{\bm{C}}(\widetilde{\bm{C}}^\top\widetilde{\bm{C}})^{-1/2},\\
        \bm{G}_R(\bm{\Theta}) & = \sum_{i=1}^n\sum_{j=1}^m [g'(\langle\bm{X}_{ij},\bm{B}_i\rangle)-Y_{ij}](\bm{e}_i(n)^\top\otimes\bm{X}_{ij}^\top)\widetilde{\bm{R}}(\widetilde{\bm{R}}^\top\widetilde{\bm{R}})^{-1/2},\\
        \bm{G}_{1i}(\bm{\Theta}) & = \sum_{j=1}^m [g'(\langle\bm{X}_{ij},\bm{B}_i\rangle)-Y_{ij}](\bm{C}^\top\bm{C})^{-1/2}\bm{C}^\top\bm{X}_{ij}\bm{R}_i(\widetilde{\bm{R}}^\top\widetilde{\bm{R}})^{-1/2},\\
        \bm{G}_{2i}(\bm{\Theta}) & = \sum_{j=1}^m [g'(\langle\bm{X}_{ij},\bm{B}_i\rangle)-Y_{ij}](\bm{R}^\top\bm{R})^{-1/2}\bm{R}^\top\bm{X}_{ij}^\top\bm{C}_i(\widetilde{\bm{C}}^\top\widetilde{\bm{C}})^{-1/2},
    \end{split}
\end{equation}
where $g'(\langle\bm{X}_{ij},\bm{B}_i\rangle)-Y_{ij}$ is a bounded variable due to the $L$-Lipschitz condition of $g(\cdot)$ and $|Y_{ij}|\leq B$.

Therefore, by the same argument as in the proof of Theorem \ref{thm:LinearModel}, with a high probability,
\begin{equation}
    \begin{split}    
        \|\{\bm{G}_C(\bm{\Theta})-\mathbb{E}[\bm{G}_C(\bm{\Theta})]\}_{S_1}\|_\text{F}^2 & \lesssim \frac{\kappa^2ms_1\log(p_1)}{n}\|\cm{B}-\cm{B}^*\|_\text{F}^2 + ms_1\log(p_1)(L+B)^2\bar{\beta}_x^2,\\
        \|\{\bm{G}_R(\bm{\Theta})-\mathbb{E}[\bm{G}_R(\bm{\Theta})]\}_{S_2}\|_\text{F}^2 & \lesssim \frac{\kappa^2ms_2\log(p_2)}{n}\|\cm{B}-\cm{B}^*\|_\text{F}^2 + ms_2\log(p_2)(L+B)^2\bar{\beta}_x^2,
    \end{split}
\end{equation}
for any index sets $S_1$ and $S_2$ such that $\text{card}(S_1)\leq s_1$ and $\text{card}(S_2)\leq s_2$. In addition, with a high probability,
\begin{equation}
    \|\bm{G}_{1i}(\bm{\Theta})-\mathbb{E}[\bm{G}_{1i}(\bm{\Theta})]\|_\text{F}^2 \lesssim \kappa^2mK_1\log(nK_1)\|\bm{B}_i-\bm{B}_i^*\|_\text{F}^2 + mK_1\log(nK_1)(L+B)^2\bar{\beta}_x^2,
\end{equation}
and
\begin{equation}
    \|\bm{G}_{2i}(\bm{\Theta})-\mathbb{E}[\bm{G}_{2i}(\bm{\Theta})]\|_\text{F}^2 \lesssim \kappa^2mK_2\log(nK_2)\|\bm{B}_i-\bm{B}_i^*\|_\text{F}^2 + mK_2\log(nK_2)(L+B)^2\bar{\beta}_x^2.
\end{equation}

For initialization, the gradient stability in Algorithm \ref{alg:2} can be directly extended from Step 2 by setting $K_1=s_1$ and $K_2=s_2$. Hence, with a high probability, for $i=1,\dots,n$ and generic index sets $S_1$ and $S_2$,
\begin{equation}
    \begin{split}
        \|\{\bm{G}_{C_i}(\bm{\Theta})-\mathbb{E}[\bm{G}_{C_i}(\bm{\Theta})]\}_{S_1}\|_\text{F}^2 & \lesssim\kappa^2m\bar{s}\log(n\bar{s})\|\bm{B}_i-\bm{B}_i^*\|_\text{F}^2 + m\bar{s}\log(n\bar{s})(L+B)^2\bar{\beta}_x^2,\\
        \|\{\bm{G}_{R_i}(\bm{\Theta})-\mathbb{E}[\bm{G}_{R_i}(\bm{\Theta})]\}_{S_2}\|_\text{F}^2 & \lesssim\kappa^2m\bar{s}\log(n\bar{s})\|\bm{B}_i-\bm{B}_i^*\|_\text{F}^2 + m\bar{s}\log(n\bar{s})(L+B)^2\bar{\beta}_x^2.
    \end{split}
\end{equation}
Therefore, the initial error bound requirement is satisfied if $m\gtrsim\bar{s}\log(n\bar{s})\bar{\beta}_x^3\ubar{\alpha}_x^{-3}$. Finally, the rank selection consistency for $\widehat{r}$, $\widehat{K}_1$ and $\widehat{K}_2$ are established in the same manner as Theorem \ref{thm:LinearModel}.

\section{Minimax Optimality Results}\label{append:D}

\subsection{Proof of Theorem \ref{thm:LinearModelLower}}\label{sec:C1}

Let $\mathcal{T}=\{\cm{B}\in\mathbb{R}^{p_1\times p_2\times n}:\text{rank}(\cm{B}_{(1)})\leq r_1,~\text{rank}(\cm{B}_{(1)})\leq r_2\}$. Let $\text{vec}(\bm{X}_{ij})$ follow the multivariate Gaussian distribution with covariance $\bm{\Sigma}_x$ satisfying $\lambda_{\max}(\bm{\Sigma}_x)\leq\bar{\beta}_x$. Consider the augmented covariate tensor $\cm{X}_{ij}\in\mathbb{R}^{p_1\times p_2\times n}$ whose $i$-th slice is $\bm{X}_{ij}$ and others are zero matrices.

Suppose that there exists a finite set $\{\cm{B}^1,\cm{B}^2,\dots,\cm{B}^K\}\in\mathcal{T}$ of tensors such that $\log(K)\gtrsim\bar{s}^*\log(\bar{p})$, such that
\begin{equation}
    \sigma^2\bar{\beta}_x^{-2}\delta^2 \leq \|\cm{B}^{\ell_1}-\cm{B}^{\ell_2}\|_\text{F}^2 \leq 8\sigma^2\bar{\beta}_x^{-2}\delta^2.
\end{equation}
Let $\widetilde{K}$ be a random variable uniformly distributed over the index set $[K]$.
Then, using the standard argument in terms of multple hypothesis testing problem, it yields the lower bound
\begin{equation}
    \inf_{\widetilde{\scalebox{0.7}{\cm{B}}}}\sup_{\scalebox{0.7}{\cm{B}}\in\mathcal{T}}\mathbb{P}\left\{\|\widetilde{\cm{B}}-\cm{B}\|_\text{F}^2\geq \frac{\sigma^2\bar{\beta}_x^{-2}\delta^2}{2}\right\} \geq \inf\mathbb{P}(\widetilde{\cm{B}}\neq \cm{B}^{\widetilde{K}})
\end{equation}
where the infimum is taken over all estimators $\widetilde{\cm{B}}$ that are measurable functions of data.

Let $X=\{\bm{X}_{ij},1\leq i\leq n,1\leq j\leq m\}$ and $Y=\{Y_{ij},1\leq i\leq n,1\leq j\leq m\}$. Using Fano's inequality, for any estimator $\widetilde{\cm{B}}$, we have
\begin{equation}
    \mathbb{P}[\widetilde{\cm{B}} \neq \cm{B}^{\widetilde{m}} | X] \geq 1-\frac{I_X(\cm{B}^{\widetilde{K}};Y)+\log2}{\log K}.
\end{equation}
Taking expecatations over $X$ on both sides, we have
\begin{equation}
    \mathbb{P}[\widetilde{\cm{B}} \neq \cm{B}^{\widetilde{m}}] \geq 1-\frac{\mathbb{E}_X[I_X(\cm{B}^{\widetilde{K}};Y)] + \log2}{\log K}.
\end{equation}

From the convexity of mutual information, we have the upper bound
\begin{equation}
    I_X(\cm{B};Y) \leq \frac{1}{\binom{K}{2}}\sum_{\ell_1,\ell_2=1}^K D_{\text{KL}}(\mathbb{Q}^{\ell_1}|\mathbb{Q}^{\ell_2}),
\end{equation}
where $D_{\text{KL}}(\mathbb{Q}^{\ell_1}|\mathbb{Q}^{\ell_2})$ is the Kulback-Liebler devergence between $\mathbb{Q}^{\ell_1}$ and $\mathbb{Q}^{\ell_2}$ with
\begin{equation}
    D_{\text{KL}}(\mathbb{Q}^{\ell_1}|\mathbb{Q}^{\ell_2}) = \frac{1}{2\sigma^2}\sum_{i=1}^n\sum_{j=1}^m\left(\langle\cm{B}^{\ell_1},\cm{X}_{ij}\rangle - \langle\cm{B}^{\ell_2},\cm{X}_{ij}\rangle\right)^2.
\end{equation}

Taking the expectation over $X$, we have
\begin{equation}
    \mathbb{E}_X[I_X(\cm{B};Y)] \leq \frac{\bar{\beta}_x^2m}{2\sigma^2\binom{K}{2}}\sum_{\ell_1\neq \ell_2}\|\cm{B}^{\ell_1} - \cm{B}^{\ell_2}\|_\text{F}^2 \leq m\delta^2.
\end{equation}
Pluggin it into the previous bound and taking $\delta^2=\bar{s}^*\log(\bar{p})m^{-1} + \log n$, as 
\begin{equation}
    \mathbb{P}[\widetilde{\cm{B}} \neq \cm{B}^{\widetilde{K}}] \geq 1-\frac{\bar{s}\log(\bar{p}) + \log n + \log2}{\log K} \geq \frac{1}{2}.
\end{equation}
Hence, with probability at least $1/2$,
\begin{equation}
    \inf_{\widetilde{\scalebox{0.7}{\cm{B}}}}\sup_{\scalebox{0.7}{\cm{B}}\in\mathcal{T}}\|\widetilde{\cm{B}}-\cm{B}\|_\text{F}^2 \gtrsim \frac{\sigma^2\bar{s}^*\log\bar{p} + \log n}{\bar{\beta}_x^2m}.
\end{equation}

Finally, it suffices to construct a suitable packing for $\mathcal{T}$. As we consider row-wise sparsity in $\bm{C}$ and $\bm{R}$, as well as $\cm{B}_{(1)}$ and $\cm{B}_{(2)}$. The first two packing sets we consider involves select the ${s}^*_1$-dimensional nonzero row index set for $\cm{B}_{(1)}$ and $s^*_2$-dimensional nonzero row index set for $\cm{B}_{(2)}$. After selecting the nonzero index sets, we consider the packing set for the low-rank tensors. By Lemmas \ref{lemma:packing1} and \ref{lemma:packing2}, there exists a set of $\{\cm{B}^1,\cm{B}^2,\dots,\cm{B}^K\}$ such that $\log K\gtrsim \bar{s}^*\log(\bar{p})$ and
\begin{equation}
    \frac{\delta^2}{8} \leq \|\cm{B}^{\ell_1} - \cm{B}^{\ell_2}\|_\text{F}^2 \leq \delta^2,
\end{equation}
for any $\ell_1\neq\ell_2$ and $\delta>0$. It concludes the proof

\subsection{Proof of Theorem \ref{thm:GLMLower}}

Let $\mathcal{T}=\{\cm{B}\in\mathbb{R}^{p_1\times p_2\times n}:\text{rank}(\cm{B}_{(1)})\leq r_1,~\text{rank}(\cm{B}_{(1)})\leq r_2\}$. Let $\text{vec}(\bm{X}_{ij})$ follow the multivariate Gaussian distribution with covariance $\bm{\Sigma}_x$ satisfying $\lambda_{\max}(\bm{\Sigma}_x)\leq\bar{\beta}_x$. Consider the augmented covariate tensor $\cm{X}_{ij}\in\mathbb{R}^{p_1\times p_2\times n}$ whose $i$-th slice is $\bm{X}_{ij}$ and others are zero matrices. Additionally,
\begin{equation}
    \mathbb{P}(Y_{ij}=1|\bm{X}_{ij},\bm{B}_i) = \frac{\exp(\langle\bm{X}_{ij},\bm{B}_i\rangle)}{1+\exp(\langle\bm{X}_{ij},\bm{B}_i\rangle)},\quad\mathbb{P}(Y_{ij}=0|\bm{X}_{ij},\bm{B}_i) = \frac{1}{1+\exp(\langle\bm{X}_{ij},\bm{B}_i\rangle)}.
\end{equation}

Hence, for different $\cm{B}^1$ and $\cm{B}^2$, with corresponding joint distributions $\mathbb{Q}^1$ and $\mathbb{Q}^2$, by \citet{cover1999elements}, the KL divergence between $\mathbb{Q}^1$ and $\mathbb{Q}^2$ is bounded by 
\begin{equation}
    D_{\text{KL}}(\mathbb{Q}^1|\mathbb{Q}^2)\leq \frac{1}{8}(\langle\bm{X}_{ij},\cm{B}^1-\cm{B}^2\rangle)^2.
\end{equation}
Therefore, by the same argument as in the proof of Theorem \ref{thm:LinearModelLower}, we can construct the desired lower bound. We omit the details and refer the readers to Appendix \ref{sec:C1}.

\subsection{Auxiliary Lemmas}

The first lemma is a hypercube packing set for the sparse subset of vectors. That is the set
\begin{equation}
    \mathcal{V}(s) = \{\bm{v}\in\mathbb{R}^p: \|\bm{v}\|_0\leq s\}.
\end{equation}
It follows from Lemma 4 in \citet{raskutti2011minimax}

\begin{lemma}\label{lemma:packing1}
    Let $\mathcal{C}=\{-1,1\}^p$ where $p\geq6$. Then there exists a discrete subset $\{\bm{v}^1,\bm{v}^2,\dots,\bm{v}^J\}\subset\mathcal{V}(s)\cap\mathcal{C}$, such that $\log J\geq Cs\log(p/s)$ for some $C>0$, and for any $\ell_1\neq\ell_2$,
    \begin{equation}
        \frac{\delta^2}{8}\leq \|\bm{v}^{\ell_1}-\bm{v}^{\ell_2}\|_2^2 \leq \delta^2,
    \end{equation}
    for any $\delta>0$.
\end{lemma}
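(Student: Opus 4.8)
The plan is to deduce Lemma~\ref{lemma:packing1} from a Gilbert--Varshamov packing of $s$-sparse sign patterns together with a rescaling. Since membership in $\mathcal V(s)$ is scale invariant, it suffices to fix a scale $a>0$, work with vectors whose $s$ nonzero coordinates equal $\pm a$, build a family whose pairwise squared $\ell_2$ distances all lie in an interval of the form $[a^2 s,\,8a^2 s]$, and finally take $a^2=\delta^2/(8s)$ so that this interval becomes $[\delta^2/8,\delta^2]$. The upper bound is then automatic: if $\bm v,\bm v'$ have supports $S,S'$, then
\begin{equation}
  \|\bm v-\bm v'\|_2^2 = 4a^2\,d(\bm v,\bm v') + a^2\,|S\triangle S'|,
\end{equation}
where $d(\bm v,\bm v')$ counts the coordinates of $S\cap S'$ on which the signs differ; since $|S\triangle S'|\le 2s$ and $d\le s$, this never exceeds $8a^2 s$.

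For the lower bound and the cardinality I would split into two regimes according to the size of $p$ relative to $s$. When $p$ is large compared with $s$, run a greedy extraction directly on the $\binom{p}{s}2^{s}\ge\exp(s\log(p/s))$ pairs $(S,\sigma)$ with $|S|=s$ and $\sigma\in\{-1,1\}^{S}$: a new candidate is excluded only when its squared distance to some already-chosen $\bm v^{i}$ falls below $a^{2}s$, which by the identity above forces both $|S_{i}\triangle S|$ and $d(\bm v,\bm v^{i})$ to be small fractions of $s$; a metric-entropy count (using $\binom{s}{\alpha s}\le e^{H(\alpha)s}$ and $\binom{p}{\alpha s}\le(ep/(\alpha s))^{\alpha s}$) bounds the number of candidates knocked out by each center by $\exp(c_{2}s\log(p/s))$ with $c_{2}$ strictly below $1$, so the procedure survives until $J\ge\exp((1-c_{2})s\log(p/s))$ vectors have been selected, all pairwise $a^2 s$-separated. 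When instead $s$ is comparable to $p$, so that $\log(p/s)$ is bounded, fix a single support $S$ of size $s$ and apply the classical Varshamov--Gilbert bound to $\{-1,1\}^{S}$: this yields $2^{cs}$ sign patterns with pairwise Hamming distance at least $s/8$, hence pairwise squared $\ell_2$ distance in $[a^2 s/2,\,4a^2 s]$, and $\log(2^{cs})\gtrsim s\gtrsim s\log(p/s)$. In both cases a final rescaling of $a$ places the squared distances in the prescribed window; alternatively, one may simply cite Lemma~4 of \citet{raskutti2011minimax}, which treats exactly this sparse hypercube.

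The step I expect to be the main obstacle is the metric-entropy comparison in the sparse regime: one must verify that confining the support to a neighborhood of a fixed $S$ and the sign pattern to few flips removes only a constant fraction of the total metric entropy $s\log(p/s)$, i.e.\ that the exclusion constant $c_{2}$ can genuinely be taken strictly below $1$. This forces the greedy threshold, the symmetric-difference and flip fractions, and the final choice of $a$ to be pinned down consistently, and it is where the hypothesis $p\ge 6$ is used to keep $\binom{p}{s}$ large. Everything else --- the identity above, the conversion from Hamming to Euclidean distance, and the rescaling --- is routine.
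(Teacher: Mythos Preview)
Your proposal is correct and in fact goes well beyond the paper: the paper's entire ``proof'' of this lemma is the single sentence ``It follows from Lemma~4 in \citet{raskutti2011minimax},'' which you yourself note as the alternative route. Your Gilbert--Varshamov sketch is exactly the argument underlying that cited lemma, so the approaches coincide; your added detail (the distance identity, the two regimes, and the rescaling to hit the $[\delta^2/8,\delta^2]$ window) is sound and more informative than what the paper provides.
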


The second lemma is a packing set result from Lemma A.1 in \citet{wang2020compact} that packs into the set of $p_1\times p_2\times p_3$ tensors with Tucker ranks $(r_1,r_2,r_3)$.

\begin{lemma}\label{lemma:packing2}
    Let $\min(p_1,p_2,p_3)\geq10$, and let $\delta>0$. Then for $1\leq r_i\leq p_i$ and $i=1,2,3$, there exists a set of $p_1\times p_2\times p_3$ tensors $\{\cm{A}^1,\cm{A}^2,\dots,\cm{A}^J\}$ with cardinality $\log J\geq C(p_1r_1+p_2r_2+p_3r_3)$ for some constant $C>0$ such that
    \begin{equation}
        \frac{\delta^2}{4}\leq \|\cm{A}^{\ell_1}-\cm{A}^{\ell_2}\|_\textup{F}^2 \leq \delta^2,
    \end{equation}
    for all $\ell_1\neq \ell_2$.
\end{lemma}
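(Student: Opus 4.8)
The plan is to reduce the tensor packing problem to a matrix (indeed, hypercube) packing along a single mode, and then to recover the full sum $p_1r_1+p_2r_2+p_3r_3$ from a one-mode count through the elementary inequality $\max_i p_ir_i\ge\tfrac13(p_1r_1+p_2r_2+p_3r_3)$. Thus it suffices to exhibit, for the mode $i^\ast$ attaining $\max_i p_ir_i$, a family of rank-$(r_1,r_2,r_3)$ tensors $\{\cm{A}^1,\dots,\cm{A}^J\}$ with $\log J\gtrsim p_{i^\ast}r_{i^\ast}$ and all pairwise squared Frobenius distances inside $[\delta^2/4,\delta^2]$; the stated bound $\log J\ge C(p_1r_1+p_2r_2+p_3r_3)$ then holds with $C$ one third of the per-mode constant. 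I would assume without loss of generality $i^\ast=1$ and, as is automatic for a nondegenerate Tucker class, $r_1\le r_2r_3$.

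The first step is to freeze the mode-2 and mode-3 structure inside a fixed \emph{profile} tensor $\cm{W}\in\mathbb{R}^{r_1\times p_2\times p_3}$ whose mode-1 matricization is $\cm{W}_{(1)}=[\bm{I}_{r_1}\,|\,\bm{0}]$, with the identity block placed in the $r_2r_3$ columns indexed by $\{1,\dots,r_2\}\times\{1,\dots,r_3\}$. This choice uses $r_1\le r_2r_3$, keeps $\cm{W}$ supported on indices with second coordinate $\le r_2$ and third coordinate $\le r_3$ (so $\mathrm{rank}(\cm{W}_{(2)})\le r_2$ and $\mathrm{rank}(\cm{W}_{(3)})\le r_3$), and yields $\cm{W}_{(1)}\cm{W}_{(1)}^\top=\bm{I}_{r_1}$. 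Setting $\cm{A}=\cm{W}\times_1\bm{U}_1$ for $\bm{U}_1\in\mathbb{R}^{p_1\times r_1}$ gives $\cm{A}_{(1)}=\bm{U}_1\cm{W}_{(1)}$, so $\mathrm{rank}(\cm{A}_{(1)})\le r_1$ while the mode-2 and mode-3 ranks are inherited from $\cm{W}$; hence every such $\cm{A}$ lies in the target class $\mathcal{T}$. Crucially, the row-orthonormality of $\cm{W}_{(1)}$ turns $\bm{U}_1\mapsto\cm{A}$ into an isometry, $\|\cm{A}-\cm{A}'\|_\textup{F}^2=\|(\bm{U}_1-\bm{U}_1')\cm{W}_{(1)}\|_\textup{F}^2=\|\bm{U}_1-\bm{U}_1'\|_\textup{F}^2$, which transfers distance control from the tensors to the factors.

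The second step is a hypercube packing of the factors. I would take $\bm{U}_1=\rho\bm{B}$ with $\bm{B}\in\{-1,+1\}^{p_1\times r_1}$ and $\rho^2=\delta^2/(4p_1r_1)$, so that $\|\cm{A}-\cm{A}'\|_\textup{F}^2=4\rho^2\,d_H(\bm{B},\bm{B}')$, where $d_H$ is the Hamming distance. By the Gilbert--Varshamov bound (note $p_1r_1\ge10$ under $\min_i p_i\ge10$), there is a subset of $\{-1,+1\}^{p_1r_1}$ of cardinality $\log J\ge c\,p_1r_1$ whose pairwise Hamming distances are at least $p_1r_1/4$; the lower edge then gives $\|\cm{A}^{\ell_1}-\cm{A}^{\ell_2}\|_\textup{F}^2\ge 4\rho^2(p_1r_1/4)=\delta^2/4$. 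Because every codeword shares the common norm $\rho\sqrt{p_1r_1}$, the triangle inequality forces $\|\cm{A}^{\ell_1}-\cm{A}^{\ell_2}\|_\textup{F}^2\le 4\rho^2p_1r_1=\delta^2$ for free, so the band $[\delta^2/4,\delta^2]$ is attained exactly, with $\log J\gtrsim p_1r_1\asymp\max_i p_ir_i$. Combined with the first paragraph, this proves the lemma.

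The main obstacle is to respect the Tucker-rank budget $(r_1,r_2,r_3)$ while simultaneously realizing the full degrees-of-freedom count and a genuinely \emph{two-sided} distance band. The single-mode device dissolves the rank issue, since only one factor moves and the other two modes are frozen inside $\cm{W}$, and the \emph{equinorm} hypercube dissolves the band issue, the lower edge coming from the Gilbert--Varshamov minimum distance and the matching upper edge being automatic from equal norms. The only remaining subtlety is the passage from the per-mode count $p_{i^\ast}r_{i^\ast}$ to the sum, which I handle losslessly (up to the constant $C$) via $\max_i p_ir_i\ge\tfrac13\sum_i p_ir_i$; should one wish to remove the factor $3$ and obtain the sum on the nose, one superposes three frozen-profile families on a shared Tucker factorization, as in \citet{wang2020compact}, but this refinement is not needed for the statement as worded.
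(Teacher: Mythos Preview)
The paper does not prove this lemma itself; it simply cites it as Lemma A.1 of \citet{wang2020compact}. Your proposal supplies a correct, self-contained argument: the isometry $\bm{U}_1\mapsto\cm{W}\times_1\bm{U}_1$ (via $\cm{W}_{(1)}\cm{W}_{(1)}^\top=\bm{I}_{r_1}$) transports the Gilbert--Varshamov hypercube packing of $\{-1,+1\}^{p_1r_1}$ into the rank-$(r_1,r_2,r_3)$ class with the exact two-sided band $[\delta^2/4,\delta^2]$, and the trivial bound $\max_i p_ir_i\ge\tfrac13\sum_i p_ir_i$ recovers the stated cardinality up to the constant $C$. Your construction indeed lands in the Tucker class (modes~2 and~3 are frozen inside the support of $\cm{W}$), and the condition $p_{i^\ast}r_{i^\ast}\ge\min_ip_i\ge10$ validates the Gilbert--Varshamov step.

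The only methodological difference from the cited source, which you explicitly acknowledge, is that \citet{wang2020compact} superpose packings along all three modes to obtain the sum $\sum_i p_ir_i$ directly, whereas you pack along a single mode and pay a factor of $3$ in the constant. Since the lemma only asserts existence of \emph{some} $C>0$, your route is legitimate and arguably cleaner. The one implicit hypothesis you invoke, $r_{i^\ast}\le r_jr_k$, is indeed automatic whenever $(r_1,r_2,r_3)$ are the Tucker ranks of an actual tensor (the compatibility conditions $r_i\le\prod_{j\neq i}r_j$ hold for every mode), so your ``nondegenerate Tucker class'' caveat is the right qualifier.
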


\end{appendix}

\bibliographystyle{imsart-nameyear}
\bibliography{mybib.bib}

\end{document}